\numberwithin{equation}{section}
\numberwithin{figure}{section}
\let\pa\partial
\newcommand{\N}{{\mathbb N}}
\newcommand{\R}{{\mathbb R}}
\newtheorem{theorem}{Theorem}
\newtheorem{lemma}[theorem]{Lemma}
\newtheorem{proposition}[theorem]{Proposition}
\newtheorem{remark}[theorem]{Remark}
\let\ga=\gamma
\let\de=\delta
\let\la=\lambda
\let\pa=\partial
\let\om=\omega
\begin{document}

\title[Landau equation]{Solving Linearized Landau Equation Pointwisely}

\author[HT. Wang]{Haitao Wang}
\address{Haitao Wang, Institute of Natural Sciences and School of Mathematical Sciences, Shanghai Jiao Tong University, Shanghai, China}
\email{haitaowang.math@gmail.com}

\author[K.-C. Wu]{Kung-Chien Wu}
\address{Kung-Chien Wu, Department of Mathematics,
National Cheng Kung University, Tainan, Taiwan AND National Center for Theoretical Sciences, National Taiwan University, Taipei, Taiwan}
\email{kungchienwu@gmail.com}

\date{\today}

\thanks{The authors would like to thank Professor Tai-Ping Liu
for his encouragement and fruitful discussions concerning this project. Part of this work was done while the first author was in Institute of Mathematics, Academia Sinica, Taiwan. The second author is supported by the Ministry of Science and Technology under the grant 104-2628-M-006-003-MY4 and National Center for Theoretical Sciences.}

\begin{abstract}
We study the pointwise (in the space and time variables) behavior of the linearized Landau equation for hard and moderately soft potentials. The solution has very clear description in the $(x,t)-$variables, including the large time behavior and the asymptotic behavior. More precisely, we obtain the pointwise fluid structure inside the finite Mach number region, and exponential or sub-exponential decay, depending on interactions between particles, in the space variable outside  the finite Mach number region. The spectrum analysis, regularization effect and refined weighted energy estimate play important roles in this paper.
\end{abstract}

\keywords{Landau equation; fluid-like waves; kinetic-like waves; Pointwise estimate.}

\subjclass[2010]{35Q20; 82C40.}

\maketitle
\tableofcontents


\section{Introduction}
\subsection{The models}
The generalized Landau equation reads
\begin{equation}\label{in.1.a}
\left\{\begin{array}{l}
\displaystyle \pa_{t}F+\xi\cdot\nabla_{x}F= Q(F,F)\,,
\\ \\
\displaystyle F(0,x,\xi)=F_{0}(x,\xi)\,,
\end{array}
\right.\end{equation}
where $F(t, x, \xi)$ is the distribution function for the particles at time $t$, position $x=(x_{1},x_{2},x_{3})\in\R^{3}$ and velocity $\xi=(\xi_{1},\xi_{2},\xi_{3})\in\R^{3}$. The operator $Q(\cdot,\cdot)$ is the so-called Landau collision operator given by
$$
Q(f,g)=\frac{1}{2}\nabla_{\xi}\cdot\Big\{\int_{\R^{3}}\Phi(\xi-\xi_{*})
\big[f(\xi_{*})\nabla_{\xi}g(\xi)+g(\xi_{*})\nabla_{\xi}f(\xi)-f(\xi)\nabla_{\xi_{*}}g(\xi_{*})-g(\xi)\nabla_{\xi_{*}}f(\xi_{*})\big]d\xi_{*}\Big\}\,.
$$
The positive semi-definite matrix $\Phi(\xi)$ has the general form
$$
\Phi(\xi)=B(|\xi|)S(\xi)\,,
$$
where $B(|\xi|)=|\xi|^{\ga+2}$ is a function depending on the nature of the interaction between the particles, and $S(\xi)$ is the 3 by 3 matrix
$$
S(\xi)=I_{3}-\frac{\xi\otimes \xi}{|\xi|^{2}}\,.
$$
This leads to the usual classification in terms of hard potential $(0<\ga\leq 1)$, Maxwellian molecules $(\ga=0)$, moderately soft potential ($-2\leq \gamma<0$) or
very soft potential $(-3<\ga<-2)$. In particular, $\ga=-3$ corresponds to the important
Coulomb interaction in plasma physics.
Just as for the Boltzmann equation, little is known for soft potentials, i.e., $\ga<0$, and even less for very
soft potentials, i.e., $\ga<-2$. In this paper, we will consider the case $-2\leq \ga\leq 1$.

Similar to the Boltzmann equation, the Maxwellians are steady states to the Landau equation (\ref{in.1.a}). Thus, it is natural to investigate the behavior of the solution near Maxwellian. This leads us to  linearize the Landau equation \eqref{in.1.a} around a normalized global Maxwellian $\mu(\xi)$,
$$
\mu(\xi)=\frac{1}{(2\pi)^{3/2}}\exp\Big(\frac{-|\xi|^{2}}{2}\Big)\,,
$$
with the standard perturbation $f(t,x,\xi)$ to $\mu$ as
$$
F=\mu+\mu^{1/2}f\,.
$$
From the fact that $Q(\mu,\mu)=0$, we have
$$
Q(\mu+\mu^{1/2}f, \mu+\mu^{1/2}f)=2Q(\mu,\mu^{1/2}f)+Q(\mu^{1/2}f,\mu^{1/2}f)\,.
$$
By dropping the nonlinear term, we can define the linearized Landau collision operator $L$ as
\begin{equation}
Lf=2\mu^{-1/2}Q(\mu,\mu^{1/2}f)\,.
\end{equation}
The linearized Landau equation for $f(t,x,\xi)$ now takes the form
\begin{equation}\label{in.1.c}
\left\{\begin{array}{l}
\displaystyle\pa_{t}f+\xi\cdot\nabla_{x}f=Lf\,,
\\ \\
f(0,x,\xi)=f_{0}(x,\xi)\,.
\end{array}
\right.\end{equation}
Here and below we define $f(t,x,\xi)=\mathbb{G}^{t}f_{0}(x,\xi)$, i.e., $\mathbb{G}^{t}$ is the solution operator (Green's function) of the linearized Landau equation (\ref{in.1.c}). In this paper, we will study the pointwise structure of the linearized Landau equation (\ref{in.1.c}).

It is well-known that the null space of $L$ is a five-dimensional vector space with the orthonormal basis $\{\chi_{i}\}_{i=0}^{4}$,
where
$$
Ker(L)=\left\{\chi_{0}, \chi_{i}, \chi_{4}\right\}=\left\{\mu^{1/2},\xi_{i}\mu^{1/2}, \frac{1}{\sqrt{6}}(|\xi|^{2}-3)\mu^{1/2}\right\}\,, \quad i=1,2,3\,.
$$
Based on this property, we can introduce the Macro-Micro decomposition: let $\mathrm{P}_0$ be the orthogonal projection with respect to the $L_{\xi}^{2}$ inner product onto ${\rm Ker}(L)$, and $\mathrm{P}_1\equiv \mathrm{Id}-\mathrm{P}_0$.

\subsection{Main results}

Before the presentation of the main theorem, let us define some notations in
this paper. We denote $\left\langle \xi\right\rangle ^{s}=(1+|\xi|^{2})^{s/2}$,
$s\in{\mathbb{R}}$. For the microscopic variable $\xi$, we denote
\[
|g|_{L_{\xi}^{2}}=\Big(\int_{{\mathbb{R}}^{3}}|g|^{2}d\xi\Big)^{1/2}\,,
\]
and the weighted norms $|g|_{L_{\xi}^{2}(m)}$ by
\[
|g|_{L_{\xi}^{2}(m)}=\Big(\int_{{\mathbb{R}}^{3}}|g|^{2}md\xi\Big)^{1/2}\,.
\]
The $L_{\xi}^{2}$ inner product in ${\mathbb{R}}^{3}$ will be denoted by
$\big<\cdot,\cdot\big>_{\xi}$,
\[
\left<f,g\right>_\xi=\int f(\xi)\overline{g(\xi)}d\xi.
\]
For the space variable $x$, we have similar
notations. In fact, $L_{x}^{2}$ is the classical Hilbert space with norm
\[
|g|_{L_{x}^{2}}=\Big(\int_{{\mathbb{R}^{3}}}|g|^{2}dx\Big)^{1/2}\,.
\]
We denote the sup norm as
\[
|g|_{L_{x}^{\infty}}=\sup_{x\in{\mathbb{R}^{3}}}|g(x)|\,.
\]
The standard vector product will be denoted by $(a,b)$ or $a\cdot b$, for any vectors $a,b\in\R^{3}$. For any vector
function $u\in L_{\xi}^{2}$, $\mathbb{P}(\xi)u$ denotes the orthogonal projection
along the direction of vector $\xi$, i.e.,
\[
\mathbb{P}(\xi)u=\frac{u\cdot \xi}{|\xi|^{2}}\xi\,.
\]
For the Landau equation, the natural norm in $\xi$ is $|\cdot|_{L_{\sigma}^{2}}%
$, which is defined by
\[
|g|_{L_{\sigma}^{2}}^{2}=|\left\langle \xi\right\rangle ^{\frac{\ga+2}{2}%
}g|_{L_{\xi}^{2}}^{2}+|\left\langle \xi\right\rangle ^{\frac{\ga}{2}}%
\mathbb{P}(\xi)\nabla_{\xi}g|_{L_{\xi}^{2}}^{2}+\big|\left\langle \xi\right\rangle
^{\frac{\ga+2}{2}}\big[I_{3}-\mathbb{P}(\xi)\big]\nabla_{\xi}g\big|_{L_{\xi}^{2}%
}^{2}
\]
and
\[
|g|_{L_{\sigma}^{2}(m)}^{2}=|\left\langle \xi\right\rangle ^{\frac{\ga+2}{2}%
}g|_{L_{\xi}^{2}(m)}^{2}+|\left\langle \xi\right\rangle ^{\frac{\ga}{2}}%
\mathbb{P}(\xi)\nabla_{\xi}g|_{L_{\xi}^{2}(m)}^{2}+\big|\left\langle \xi\right\rangle
^{\frac{\ga+2}{2}}\big[I_{3}-\mathbb{P}(\xi)\big]\nabla_{\xi}g\big|_{L_{\xi}^{2}(m)%
}^{2}
\]
Moreover, we define
\[
\Vert g\Vert_{L^{2}}^{2}=\int_{{\mathbb{R}^{3}}}|g|_{L_{\xi}^{2}}^{2}dx\,,\quad\Vert
g\Vert_{L^{2}(m)}^{2}=\int_{{\mathbb{R}^{3}}}|g|_{L_{\xi}^{2}(m)}^{2}dx\,,
\]%
\[
\Vert g\Vert_{L_{\sigma}^{2}}^{2}=\int_{{\mathbb{R}^{3}}}|g|_{L_{\sigma}^{2}}%
^{2}dx\,, \quad\Vert g\Vert_{L_{\sigma}^{2}(m)}^{2}=\int_{{\mathbb{R}^{3}}}|g|_{L_{\sigma}^{2}(m)}%
^{2}dx\,.
\]
and
\[
\Vert g\Vert_{L_{x}^{\infty}L_{\xi}^{2}}=\sup_{x\in{\mathbb{R}^{3}}}|g|_{L_{\xi}^{2}%
}\,,\quad\Vert g\Vert_{L_{x}^{1}L_{\xi}^{2}}=\int_{{\mathbb{R}^{3}}}|g|_{L_{\xi}^{2}%
}dx\,.
\]
Finally, we define the high order Sobolev norm: let $k\in{\mathbb{N}}$ and $\alpha$ be any multi-index with $|\alpha|\leq k$,
\[
\left\Vert g\right\Vert _{H^{k}_{x}L^{2}_{\xi}(m)}=\sum_{|\alpha|\leq k}\left\Vert
\pa_{x}^{\alpha}g\right\Vert _{L^{2}(m)}\,.
\]

The domain decomposition plays an important role in our analysis, hence we
need to define a cut-off function $\chi:{\mathbb{R}}\rightarrow{\mathbb{R}}$,
which is a smooth non-increasing function, $\chi(s)=1$ for $s\leq1$,
$\chi(s)=0$ for $s\geq2$ and $0\leq\chi\leq1$. Moreover, we define $\chi
_{R}(s)=\chi(s/R)$.

For simplicity of notations, hereafter, we abbreviate ``{ $\leq C$} " to ``{ $ \lesssim$ }", where $C$ is a positive constant depending only on fixed numbers.

The precise description of our main result is as follows:
\begin{theorem}
\label{thm:main} Let $f$ be a solution to \eqref{in.1.c} with initial data
compactly supported in $x$-variable and bounded in the weighted $\xi$-space
\[
f_{0}(x,\xi)\equiv0,\;\text{ for }\left\vert x\right\vert \geq1.
\]
There exists a positive constant $M$ such that the following hold for $t\geq1$:

\begin{enumerate}
\item For $-1\leq\gamma\leq1$, there exists a positive constant $C$ such that
the solution to \eqref{in.1.c} satisfies

\begin{enumerate}
\item For $\left< x\right> \leq2Mt$,
\[
\left| f(t,x)\right| _{L_{\xi}^{2}} \leq C \left[
\begin{array}
[c]{l}%
\left( 1+t\right) ^{-2}e^{-\frac{\left( \left| x\right| -\mathbf{c}t\right)
^{2}}{Ct}}+\left( 1+t\right) ^{-3/2}e^{-\frac{\left| x\right| ^{2}}{Ct}%
}\\[2mm]%
+\mathbf{1}_{\{\left| x\right| \leq\mathbf{c}t\}} \left( 1+t\right)
^{-3/2}\left( 1+\frac{\left| x\right| ^{2}}{1+t}\right) ^{-3/2}+e^{-t/C}%
\end{array}
\right]  |||f_{0}|||\,.
\]

\item For $\left< x\right> \geq2Mt$,
\[
\left| f (t,x) \right| _{L^{2}_{\xi}} \leq C e^{-\left( \left< x\right>
+t\right) /C} |||f_{0}|||\,.
\]

\end{enumerate}

\item For $-2\leq\gamma<-1$, for any given positive integer $N$, and any given
sufficiently small $\alpha>0$, there exist positive constants $C$, $C_{N} $
and $c_{\alpha}$ such that the solution to \eqref{in.1.c} satisfies

\begin{enumerate}
\item For $\left< x\right> \leq2Mt$,
\[
\left| f(t,x)\right| _{L_{\xi}^{2}}\leq C_{N} \left[
\begin{array}
[c]{l}%
\left( 1+t\right) ^{-2}\left( 1+\frac{\left( \left| x\right| -\mathbf{c}%
t\right) ^{2}}{1+t}\right) ^{-N}+\left( 1+t\right) ^{-3/2}\left(
1+\frac{\left| x\right| ^{2}}{1+t}\right) ^{-N}\\[2mm]%
+\mathbf{1}_{\{\left| x\right| \leq\mathbf{c}t\}}\left( 1+t\right) ^{-3/2}
\left( 1+\frac{\left| x\right| ^{2}}{1+t}\right) ^{-3/2}+ e^{-t/C_{N}}%
\end{array}
\right]  |||f_{0}|||\,.
\]

\item For $\left< x\right> \geq2Mt$,
\[
\left|  f(t,x) \right| _{L^{2}_{\xi}}\leq C(1+t) e^{-c_{\alpha}(\left<
x\right> +t)^{\frac{2}{1-\gamma}}} \left\| f_{0} \right\| _{L^{\infty}_{x}
L^{2}_{\xi}\left( e^{\alpha|\xi|^{2}}\right) } .
\]

\end{enumerate}
\end{enumerate}

Here $\mathbf{1}_{\{\cdot\}}$ is the indicator function,
\begin{align*}
|||f_{0}|||\equiv\max\left\{  \left\| f_{0}\right\| _{L^{2} (\mathcal{M}%
)},\left\| f_{0}\right\| _{L^{1}_{x} L^{2}_{\xi}} \right\} ,
\end{align*}
and
\begin{align*}
\mathcal{M}\equiv%
\begin{cases}
1 & \ga\in\left[ 0,1\right] ,\\
\left< \xi\right> ^{2 |\ga|} & \ga \in\left[ -2,0\right) .
\end{cases}
\end{align*}
The constant $\mathbf{c}=\sqrt{5/3}$ is the sound speed associated with the
normalized global Maxwellian.
\end{theorem}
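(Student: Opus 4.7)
The plan is to obtain the Green's function $\mathbb{G}^{t}$ by a wave-remainder decomposition of Liu--Yu type. Taking the Fourier transform in $x$ with dual variable $\eta$, the equation becomes $\pa_{t}\hat{f}=(L-i\xi\cdot\eta)\hat{f}$. The spectral analysis of $L-i\xi\cdot\eta$ near $\eta=0$ produces five eigenvalues bifurcating from the null space of $L$, corresponding to the macroscopic conservation laws; these give the fluid waves (a heat mode and acoustic modes propagating at the sound speed $\mathbf{c}=\sqrt{5/3}$). For $-1\leq\gamma\leq1$ the remaining spectrum lies uniformly to the left of the imaginary axis, which yields a spectral gap and exponential-in-time decay of the non-fluid part; for $-2\leq\gamma<-1$ the gap degenerates and only sub-exponential-in-time or algebraic decay is available.

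First I would perform a long-wave/short-wave splitting through the cut-off $\chi_{R}$. Inverting the Fourier transform on the long-wave fluid part via contour deformation around the five dispersion branches yields, inside the finite Mach region $\langle x\rangle\leq2Mt$, the Gaussian acoustic wave $(1+t)^{-2}e^{-(|x|-\mathbf{c}t)^{2}/Ct}$ and the Gaussian heat kernel $(1+t)^{-3/2}e^{-|x|^{2}/Ct}$ in the hard-potential range. The Huygens-type term $\mathbf{1}_{\{|x|\leq\mathbf{c}t\}}(1+t)^{-3/2}(1+|x|^{2}/(1+t))^{-3/2}$ appears from the contribution of the sound cone after summing the propagation directions. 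In the soft-potential range the dispersion relation is only analytic to finite order, so the Gaussian profiles are replaced by the algebraic profiles $(1+(|x|-\mathbf{c}t)^{2}/(1+t))^{-N}$ for any chosen $N$.

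For the short-wave and non-fluid contributions I would write $\mathbb{G}^{t}=\mathbb{G}^{t}_{S}+\int_{0}^{t}\mathbb{G}^{t-s}_{S}K\mathbb{G}^{s}\,ds$, where $\mathbb{G}^{t}_{S}$ is the damped-transport semigroup generated by $-\xi\cdot\na_{x}-\nu(\xi)$ with collision frequency $\nu(\xi)\sim\langle\xi\rangle^{\gamma+2}$. The operator $\mathbb{G}^{t}_{S}$ is pointwise localized along characteristics and decays as $e^{-\nu(\xi)t}$. A mixture (averaging) lemma transfers the velocity-regularizing effect of the gain part of $L$ across iterations, so after finitely many Picard steps the residual is smooth enough to be absorbed by the wave part. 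For $\langle x\rangle\geq2Mt$ the combination of the transport damping with a refined weighted energy estimate in the weight $e^{\beta\langle x\rangle}$ (hard potentials) or $e^{c_{\al}(\langle x\rangle+t)^{2/(1-\gamma)}}$ (soft potentials) delivers the claimed exterior decay. The factor $(1+t)$ in the soft-potential exterior bound reflects the growth coming from the Duhamel iteration against the sub-exponential weight.

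The main obstacle will be the soft-potential regime $-2\leq\gamma<-1$, where the loss of spectral gap forces two simultaneous difficulties. Firstly, exponential-in-time decay of the non-fluid part must be replaced by algebraic decay extracted from the $\eta$-smoothness of the resolvent, which requires integrating by parts in $\eta$ as many times as the target power $N$ and tracking the corresponding velocity weights $\langle\xi\rangle^{2|\gamma|}$; this is the origin of the weight $\mathcal{M}$. Secondly, the transport damping $e^{-\nu(\xi)t}$ degenerates as $|\xi|\to\infty$, so one must close the iteration in a space that already carries the sub-exponential Gaussian weight $e^{\al|\xi|^{2}}$ on the initial data, and propagate this weight through every Picard iterate. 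Balancing the transport growth against the collision dissipation at the optimal velocity scale $|\xi|\sim(\langle x\rangle+t)^{1/(1-\gamma)}$ produces the sub-exponential exponent $2/(1-\gamma)$ in the exterior estimate; verifying this balance while keeping the constants uniform in the number of iterations is the delicate step of the argument.
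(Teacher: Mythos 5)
Your plan correctly identifies the long--short wave splitting, the five fluid eigenmodes bifurcating from $\operatorname{Ker}L$, and the qualitative change from Gaussian to algebraic decay for $-2\leq\gamma<-1$ (the eigen-pairs being only finitely smooth in $\eta$). Your description of the sub-exponential exponent $2/(1-\gamma)$ coming from balancing $|\xi|\sim(\langle x\rangle+t)^{1/(1-\gamma)}$ is also the right heuristic for the weight design outside the Mach region.

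However, there is a genuine gap in how you propose to handle the non-fluid/short-wave part. You write the Duhamel formula $\mathbb{G}^{t}=\mathbb{G}^{t}_{S}+\int_{0}^{t}\mathbb{G}^{t-s}_{S}K\mathbb{G}^{s}\,ds$ with $\mathbb{G}^{t}_{S}$ generated by $-\xi\cdot\nabla_{x}-\nu(\xi)$, and then invoke a Mixture (averaging) Lemma over several Picard iterations. This is the Boltzmann-with-cutoff mechanism, and it does not apply here. The Landau collision operator does not decompose as a multiplication operator $-\nu(\xi)$ plus a compact integral operator: its dissipative part is the second-order (degenerate elliptic) diffusion $\nabla_{\xi}\cdot[\sigma\nabla_{\xi}\,\cdot\,]$ plus a potential $-\psi(\xi)$. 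Consequently the ``damped transport semigroup'' $e^{-\nu(\xi)t}$ localized along characteristics does not exist for this model, and the Mixture Lemma --- which transfers $\xi$-regularity into $x$-regularity by iterating the compact gain term --- has no analogue. What replaces it in the correct argument is a hypoelliptic/hypodissipative smoothing estimate: the combination of the velocity ellipticity with the transport term $\xi\cdot\nabla_{x}$ regularizes the solution \emph{immediately} in $x$ (the paper's Lemma~\ref{improve}, proved via an H\'erau-type functional $\mathcal{F}(t,h)$ mixing $\|h\|^2$, $t\|\nabla_{\xi}h\|^2$, $t^2\langle\nabla_x h,\nabla_\xi h\rangle$, $t^3\|\nabla_x h\|^2$), so the solution is in $H^2_xL^2_\xi$ for any $t>0$ with a controlled $t^{-3}$ blow-up as $t\to0^+$. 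A finite Picard iteration against the integral operator $\widetilde{K}$ is still used, but the smoothing each step comes from the elliptic operator, not from gain-term averaging. Without this hypoelliptic input your argument has no source of $x$-regularity and cannot close.

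A secondary imprecision: outside the Mach region the paper's weight is not simply $e^{c_{\alpha}(\langle x\rangle+t)^{2/(1-\gamma)}}$ but a carefully constructed $e^{\alpha\vartheta(t,x,\xi)}$ depending also on $\xi$, with a space--velocity domain decomposition via cut-offs in $\delta(\langle x\rangle-Mt)\langle\xi\rangle^{\gamma-1}$. Because this weight does not commute with $\widetilde{K}$, the coercivity of $L$ cannot be used directly; the proof hinges on estimating $\widetilde{K}-\widetilde{K}_{w}$ on the different domain pieces. The factor $(1+t)$ in the exterior soft-potential bound comes from a linear-in-$t$ growth in that weighted Gr\"onwall estimate (not from iterating Duhamel against the weight). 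You should expect this step to require the detailed commutator analysis, not just ``balancing transport growth against collision dissipation''.
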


\subsection{Review of previous works and significant points of the paper}

Let us give an overview of the previous works on the Cauchy theory for the Landau equation in a close-to-equilibrium framework. We refer to Alexandre and Villani \cite{[2]}
for the existence of renormalized solutions, to Desvillette and Villani \cite{[18]}
for conditionally almost exponential convergence towards equilibrium and to a
recent work by Carrapatoso, Tristani and Wu \cite{[CTK]} for exponential decay
towards equilibrium when initial data are close enough to equilibrium.
Moreover, Guo \cite{[Guo]} and Strain and Guo \cite{[31], [32]} developed an
existence and convergence towards equilibrium theory based on energy method
for initial data close to the equilibrium state in some Sobolev spaces. Recently
the set of initial data for which this theory is valid has been enlarged by
Carrapatoso and Mischler \cite{[9]} via a linearization method.

In this paper, we study the linearized Landau equation with hard or moderately soft potentials ($\gamma\in [-2,1]$) in
the close to equilibrium setting. In the literature, this kind of problem
basically focuses on the rate of convergence to equilibrium (see the reference
listed above). In contrast, in this paper we supply a very explicit
description of the solution in the sense of pointwise estimate.

Here are some significant points of the paper:
\begin{itemize}
\item We give a complete pointwise description of the solution, which consists of two parts: inside the finite Mach number region (the large time behavior) and
outside the finite Mach number region (the asymptotic behavior).

\begin{enumerate}

\item Concerning the solution inside the finite Mach number region (the large time behavior), thanks to the spectrum analysis \cite{YangYu} and our
generalization (Lemma \ref{lem:smooth}), we have pointwise fluid structure, which
is much richer than previous results. The leading terms of wave
propagation have been recognized. More precisely, they are characterized by the Huygens waves, the diffusion waves and the Riesz waves.

\item Concerning the solution outside the finite Mach number region (the asymptotic behavior), we have exponential decay
for $-1\leq\gamma\leq 1$ and sub-exponential decay for $-2\leq
\gamma<-1$ in the space variable $x$. The results
are consistent with the wave behaviors inside the finite Mach number region respectively.
We believe this is the first result for the asymptotic behavior of the
Landau kinetic equation.
\end{enumerate}

\medskip

\item  The pointwise behavior of leading fluid part is determined by how the solution depends on the Fourier transformed variable of the spatial variable, i.e., smoothly or analytically (see Proposition \ref{flu_smooth}). This connection has been investigated for various models, for example, the compressible Navier-Stokes equation\cite{Zeng1994,[LiuWang]}, the one space dimensional Boltzmann equation with hard sphere \cite{[LiuYu]} and with hard potential \cite{[LeeLiuYu]} and the three space dimensional Boltzmann equation with hard sphere \cite{[LiuYu1],[LiuYu2]}. It was noted in \cite{[LiuYu2]} that the wave patterns in 3D case are much richer than 1D case. In addition to the Huygens waves in 3D case, which is the counterpart of the diffusion waves in 1D case, they also contain the diffusion waves and the Riesz waves (those decay slowly inside the finite Mach number region). The identification of those wave patterns heavily relies on the decomposition and recombination of eigenvalues near the origin and associated eigenfuntions. The whole procedure makes use of their analytic dependencies on the Fourier variables (see sections 7.4--7.5 \cite{[LiuYu2]} for details).  However, for the Landau equation with $\gamma\in[-2,-1)$, the dependence is no more analytic. We fully exploit the symmetry properties of eigenvalues and eigenfunctions and thus establish the appropriate decomposition (the result is summarized in Lemma \ref{lem:smooth}), so that we can obtain the pointwise estimate for this case. The method used here can be generalized to deal with other kinetic models.

\medskip

\item The regularization estimate plays a crucial role in this paper (see
Lemma \ref{improve}), this enables us to
obtain the pointwise estimate without regularity assumption on the initial
data.
In the literature, the regularization estimates for the kinetic Fokker-Planck
equation and the Landau equation have been proved for various purposes, see for
instance \cite{[Herau]}, \cite{[MisMou]}, \cite{Villani} (Appendix A.21.2) for the
Fokker-Planck case and \cite{[CTK]} for the Landau case. In this paper, we
construct the regularization estimates in suitable weighted function space (more
precisely, suitable for outside the finite Mach number region), the calculation of the estimate is
interesting and  more sophisticated than before. Moreover, this type of
regularization estimate is itself new for the Landau equation. \medskip

\item The pointwise estimate of the solution outside the finite Mach number region is constructed by the weighted energy estimate. The time dependent weight functions are chosen according to different interactions between particles. For $-1\leq \gamma \leq 1$, the weight function depends only on the time and the space variables, and exponentially grows in space. Since it commutes with the operator $K$, the estimate is relatively simple. However, for $-2\leq \gamma <-1$, the weight function is much more complicated. Indeed, it depends on the velocity variable as well and thus does not commute with the operator $K$, which leads to the coercivity of linearized collision operator cannot be applied directly and loss of control of some terms at first glance. The difficulty is eventually overcome by fine tuning the weight functions, introducing refined space-velocity domain decomposition and analyzing the operator $K$ with weight accordingly (see Proposition \ref{weig_2}).
\end{itemize}

For the pointwise behavior of
the kinetic type equation, the Boltzmann equation for hard sphere and hard
potential with cutoff (see \cite{[LeeLiuYu],[LiuYu],[LiuYu2], [LiuYu1]}) should be mentioned; those works are some of the most important results in kinetic theory. Let us point out the similarities and
differences between the Landau kinetic equation and the Boltzmann
equation for hard sphere or hard potential with cutoff.

\begin{itemize}
\item The solutions of both in large time are dominated by the fluid parts. To extract them, both need the long wave-short wave decomposition. In 3D, for Landau with $\gamma\in[-1,1]$ and Boltzmann with hard sphere, the fluid parts are similar, they are characterized by  Huygens waves, diffusion waves and Riesz waves. Moreover, the former two waves are of exponential type while the Riesz waves are of the algebraic type. By comparison, the exponential type are replaced by algebraic type for Landau with $\gamma\in[-2,-1)$, and the Riesz waves remain the same. The fluid behavior can be seen formally from the
Chapman-Enskog expansion, which indicates that the macroscopic part (the fluid part) of solution satisfies the viscous conservation laws system. For both Boltzmann and Landau there are conservation laws of mass, momentum and energy; this explains the wave structures of fluid parts. This picture is valid even for some more general kinetic equation as well since physically the kinetic model can be approximated by fluid
equations in the long term. \medskip

\item Since the leading terms of solution in large time are fluid parts and
they essentially have finite propagation speed, the solution outside the finite Mach number region
 should be insignificant. In fact, it is shown that the asymptotic
behaviors exponentially or sub-exponentially decay. This is similar to the
solution of the Boltzmann equation outside the finite Mach number region. \medskip

\item The regularization mechanism of the Landau equation is distinct from that of the Boltzmann equation. For the Boltzmann equation, the initial singularity will be preserved
(although it decays in time very fast). Thus, one has to single out the
singular kinetic wave, and then the space regularity of resultant remainder
part comes from the transport term and the compact part of the collision
operator (see the "Mixture Lemma" in \cite{[LeeLiuYu]}, \cite{[LiuYu]} and
\cite{[LiuYu2]}). For the Landau equation, however, the solution becomes
smooth in the space variable immediately. This property comes from the
combined effect of the ellipticity in the velocity variable $\xi
$ and the transport term (see Lemma \ref{improve}).
\end{itemize}

\subsection{Method of proof and plan of the paper}
The main idea of this paper is to combine the long wave-short wave
decomposition, the weighted energy estimate
and the regularization estimate together to analyze the solution. The long-short wave decomposition, which is based on the Fourier transform, gives the fluid structure of the solution. The weighted energy estimate is for the pointwise estimate of the solution outside the finite Mach number region, in which regularization is used. We explain the idea as below.

For the region inside the finite Mach number region, the solution is
dominated by the fluid part, which is contained in the long wave part. In order to obtain its estimate, we devise different methods for $-1\leq\gamma\leq 1$ and $-2\leq\gamma<-1$ respectively. Taking advantage of spectrum
information of the Landau collision operator \cite{YangYu} (in fact, we need more information about analyticity or smoothness of the operator, which can be found in Lemma \ref{lem:smooth}), the complex analytic ($-1\leq\gamma\leq 1$) or the Fourier multiplier ($-2\leq\gamma<-1$) techniques can
be applied to obtain pointwise structure of the fluid part. The regularization estimate together with $L^{2}$ decay of the
short wave yields the $L^{\infty}$ decay of the short wave, this finishes the pointwise structure inside the finite Mach number region.

Note that to complete the structure outside the finite Mach number region, the weighted energy estimates (Proposition \ref{weig_1} and Proposition \ref{weig_2}) come to
play a role. The weighted functions are carefully chosen for different
$\gamma$'s. It is noted that the sufficient understanding of the structure inside the finite Mach number region, which has been obtained previously, is needed in the estimate.
And the regularization effect (Lemma
\ref{improve}) makes it possible to do the higher order
weighted energy estimates without any regularity assumption of the initial condition. Then the desired pointwise estimate follows from
Sobolev inequality.

It is worth making a comment for the case $-3\leq\gamma<-2$. Due to the weak coercivity, the linearized collision operator does not satisfy any spectral gap inequality and the detailed spectral information such as Lemma \ref{pr12} and Lemma \ref{lem:smooth} are absent. As a consequence, the pointwise description of fluid structure is too much to hope for. However, it is possible to obtain the time decay of the solution inside the finite Mach number region by Kawashima's
moments method \cite{[Kawashima]} and Strain's interpolation argument \cite{[Strain]}. On the other hand, we realize that the regularization estimate and the weighted energy estimates work as well; this recognize the behavior outside the finite Mach number region. Nevertheless, since our main concerns are the pointwise estimate and the explicit wave propagation, we choose to omit this part in our paper.

The rest of this paper is organized as follows: We first prepare some important properties in section \ref{pre} for the spectrum analysis and regularization estimates.
Then we study the solution inside the finite Mach number region in section \ref{inside} and outside the finite Mach number region in section \ref{outside}. Finally, we
prove Lemma \ref{lem:smooth} and Lemma \ref{improve} in sections \ref{Pf-smooth} and \ref{Hypo} respectively, those are key lemmas in this paper.

\section{Preliminaries}\label{pre}
In this section, we will prepare some important properties, including linearized collision operator, spectrum analysis of the collision operator and the regularization estimate.

We can decompose the linearized collision operator $L$ as the diffusion operator and the integral operator,
$$
Lf= \widetilde{\Lambda}f +\widetilde{K}f\,.
$$
The diffusion operator $ \widetilde{\Lambda}$ is
$$
\widetilde{\Lambda}f =\mu^{-1/2}\nabla_{\xi}\cdot\big[(\sigma \mu)\nabla_{\xi}(\mu^{-1/2}f)\big]\,,
$$
where the symmetric matrix $\sigma(\xi)$ is defined by
\begin{equation}\begin{array}{l}
\displaystyle \sigma(\xi)=\int_{\R^{3}}\Phi(\xi-\xi_{*})\mu(\xi_{*})d\xi_{*}\,,
\end{array}\end{equation}
and the integral operator $\widetilde{K}$ is
$$
\widetilde{K}f=\int_{\R^{3}}\mu^{-1/2}(\xi)\mu^{-1/2}(\xi_{*})(\nabla_{\xi},\nabla_{\xi_{*}}\cdot Z(\xi,\xi_{*}))f(\xi_{*})d\xi_{*}\equiv
\int_{\R^{3}}\tilde{k}(\xi,\xi_{*})f(\xi_{*})d\xi_{*}\,,
$$
where
$$
Z(\xi,\xi_{*})=\mu(\xi)\mu(\xi_{*})\Phi(\xi-\xi_{*})\,.
$$

Let us list some fundamental properties of the linearized Landau collision operator $L$:
\begin{lemma} \label{prop1}{\rm\cite{[Degond], [Guo]}} For any $\ga\geq -3$,
\begin{enumerate}[(i).]
\item
$$
\widetilde{\Lambda} g=\nabla_{\xi}\cdot\big[\sigma\nabla_{\xi}g\big]+ \frac{1}{2}\left(\nabla_{\xi}\cdot\big[\sigma\xi\big]-\frac{1}{2}(\xi,\sigma\xi )\right)g\,.
$$
\item The spectrum of $\sigma(\xi)$ consists of a simple eigenvalue $\la_{1}(\xi)>0$ associated with the eigenvector $\xi$,
and a double eigenvalue $\la_{2}(\xi)>0$ associated with eigenvectors $\xi^{\perp}$. Immediately, for any vector function $u$, we have
\begin{align*}
\nabla_{\xi}\cdot\sigma&=-\sigma\xi=- \la_{1}\xi\,,\quad
 (\xi,\sigma\xi )=\la_{1}(\xi)|\xi|^{2}\,, \\
(u,\sigma u )&=\la_{1}(\xi)|\mathbb{P}(\xi)u|^{2}+\la_{2}(\xi)\big|\big[I_{3}-\mathbb{P}(\xi)\big]u\big|^{2}\\
&\geq c_0 \left\{\left<\xi\right>^{\ga} \left|\mathbb{P}(\xi)u\right|^2 + \left<\xi\right>^{\ga+2} \left| \left[I_{3}-\mathbb{P}(\xi)\right]u\right|^2	 \right\}\,.
\end{align*}
Moreover, as $|\xi|\to \infty$, we have
$$
\la_{1}(\xi)\sim 2\left< \xi \right>^{\ga},\quad \la_{2}(\xi)\sim\left< \xi \right>^{\ga+2}\,.
$$

\item For any multi-index $k$, we have
$$
|\pa^{k}_{\xi}\sigma|\lesssim\left< \xi \right>^{\ga+2-|k|}\,,\quad |\pa_{\xi}^{k}(\sigma\xi)|\lesssim\left< \xi \right>^{\ga+1-|k|}\,,
$$
and for $|\xi|\to \infty$,
$$
|\pa_{\xi}^{k}\la_{1}(\xi)|\lesssim\left< \xi \right>^{\ga-|k|}\,,\quad
|\pa_{\xi}^{k}\la_{2}(\xi)|\lesssim\left< \xi \right>^{\ga+2-|k|}\,.
$$
\item Let
$$
\Lambda g=\widetilde{\Lambda}g-\varpi \chi_{R} g\,, \quad Kg=\widetilde{K}g+\varpi \chi_{R} g\,,
$$
where $\varpi, R$ are large enough, then
\begin{align*}
\big<-\Lambda g, g\big>_{\xi}\geq c_{0}|g|^{2}_{L^{2}_{\sigma}}
\end{align*}
and
\begin{align*}
\big<Kg, g\big>_{\xi}\leq |g|^{2}_{L^{2}_{\xi}}
\,.
\end{align*}
\item (Coercivity) There exists $\nu_{0}>0$ such that%
\begin{equation}
\left\langle -Lg,g\right\rangle _{\xi}\geq \nu_{0}\left\vert \mathrm{P}%
_{1}g\right\vert _{L_{\sigma}^{2}}^{2}. \label{coercivity}%
\end{equation}

\end{enumerate}
\end{lemma}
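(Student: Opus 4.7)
Since the five statements are classical and are attributed to \cite{[Degond], [Guo]}, the plan is to outline the key ideas rather than repeat the full derivations. For item (i), I would expand $\widetilde{\Lambda}f=\mu^{-1/2}\nabla_{\xi}\cdot[\sigma\mu\nabla_{\xi}(\mu^{-1/2}f)]$ by the product rule, using $\nabla_{\xi}\mu^{\pm 1/2}=\mp\tfrac{1}{2}\xi\mu^{\pm 1/2}$. The two first-order terms $\tfrac{1}{2}(\sigma\xi,\nabla_{\xi}f)$ and $-\tfrac{1}{2}(\xi,\sigma\nabla_{\xi}f)$ produced by this expansion cancel thanks to the symmetry $\sigma=\sigma^{T}$, leaving exactly the claimed identity.

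For items (ii) and (iii), I would argue using the rotational symmetry of $\sigma(\xi)=\int \Phi(\xi-\xi_{*})\mu(\xi_{*})d\xi_{*}$: since $\Phi$ is a scalar function of $|\xi-\xi_{*}|$ times $I_{3}-(\xi-\xi_{*})\otimes(\xi-\xi_{*})/|\xi-\xi_{*}|^{2}$ and $\mu$ is isotropic, one has $O\sigma(\xi)O^{T}=\sigma(O\xi)$ for every $O\in SO(3)$. Specialising to rotations that fix $\xi$ forces $\sigma(\xi)$ to commute with the full stabiliser of $\xi$, and hence its spectral decomposition is precisely a simple eigenvalue along $\xi$ together with a double eigenvalue on $\xi^{\perp}$; positivity is inherited from $\Phi\geq 0$. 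The pointwise asymptotics $\la_{1}\sim 2\langle\xi\rangle^{\ga}$ and $\la_{2}\sim\langle\xi\rangle^{\ga+2}$, as well as the derivative bounds in (iii), then follow from an explicit Gaussian expansion of the convolution integral as $|\xi|\to\infty$, differentiated under the integral sign; this is the computational content of \cite{[Degond]}.

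Item (iv) would be obtained by inserting the decomposition of $\sigma$ from (ii) into the formula of (i). After integration by parts, $\langle -\widetilde{\Lambda}g,g\rangle_{\xi}$ splits into $\int(\nabla_{\xi}g,\sigma\nabla_{\xi}g)\,d\xi$ and a zeroth-order piece involving $\nabla_{\xi}\cdot(\sigma\xi)-\tfrac{1}{2}(\xi,\sigma\xi)$. The gradient part already delivers the two $\nabla_{\xi}g$ contributions to the $L^{2}_{\sigma}$ norm via the eigenvalue lower bound in (ii), while the zeroth-order multiplier grows like $\langle\xi\rangle^{\ga+2}$ at infinity by (iii), supplying the missing $|\langle\xi\rangle^{(\ga+2)/2}g|_{L^{2}_{\xi}}^{2}$ term. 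On the remaining bounded set in $\xi$ the multiplier may fail to be positive, but it is bounded, so choosing $R$ and $\varpi$ large enough in the cutoff $-\varpi\chi_{R}g$ restores full coercivity. The companion bound for $K$ is then the observation that $\widetilde{K}$ is a Hilbert--Schmidt-type integral operator and $\varpi\chi_{R}$ is a bounded multiplication operator, so both can be absorbed into $|g|^{2}_{L^{2}_{\xi}}$ after adjusting constants.

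Finally, for (v), I would invoke the standard spectral-gap argument of \cite{[Guo]}: since $L=\widetilde{\Lambda}+\widetilde{K}$ has the explicit five-dimensional null space spanned by the collision invariants, combining (iv) with a compactness/contradiction argument on the orthogonal complement of $\mathrm{Ker}(L)$ yields $\langle -Lg,g\rangle_{\xi}\geq\nu_{0}|\mathrm{P}_{1}g|_{L^{2}_{\sigma}}^{2}$. The main obstacle, and the place where the argument genuinely departs from the corresponding Boltzmann estimate, is the anisotropic weight built into $|\cdot|_{L^{2}_{\sigma}}$: the parallel and transverse parts of $\nabla_{\xi}g$ carry different powers of $\langle\xi\rangle$, reflecting the degeneracy of $\Phi$ along the collision axis, and every step above must track this anisotropy carefully.
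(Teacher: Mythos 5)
The paper does not supply a proof of this lemma; it simply cites \cite{[Degond], [Guo]} and lists the statements. Your sketch is a correct high-level reconstruction of the arguments in those references. In particular, the computation in (i) checks out: with $\nabla_{\xi}\mu^{\pm 1/2}=\mp\tfrac{1}{2}\xi\mu^{\pm 1/2}$, the two first-order terms $-\tfrac{1}{2}(\xi,\sigma\nabla_{\xi}g)$ and $+\tfrac{1}{2}(\sigma\xi,\nabla_{\xi}g)$ cancel by the symmetry of $\sigma$, leaving exactly the stated identity. The rotational-invariance argument for (ii), the explicit Gaussian-convolution asymptotics for (iii), and the hard-sphere-style compactness argument for (v) are all the standard routes taken in \cite{[Degond]} and \cite{[Guo]}.

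One small point worth being more explicit about in (iv): the zeroth-order multiplier $\tfrac{1}{2}\bigl(\nabla_{\xi}\cdot(\sigma\xi)-\tfrac{1}{2}(\xi,\sigma\xi)\bigr)$ has sign $\sim -\tfrac{1}{2}\langle\xi\rangle^{\gamma+2}$ as $|\xi|\to\infty$ because $(\xi,\sigma\xi)=\lambda_{1}|\xi|^{2}\sim 2\langle\xi\rangle^{\gamma+2}$ dominates $\nabla_{\xi}\cdot(\sigma\xi)\lesssim\langle\xi\rangle^{\gamma}$; it is this negativity (not mere growth) that makes the contribution to $\langle -\widetilde{\Lambda}g,g\rangle_{\xi}$ positive and of the right size for the $L^{2}_{\sigma}$ norm, with the cutoff $\varpi\chi_{R}$ controlling the bounded region where the sign is not guaranteed. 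That is exactly what you describe, just stated with the sign made precise.
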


The following lemma, which will be used in Proposition \ref{weig_1}, is a
consequence of (\ref{coercivity}).

\begin{lemma}\label{lem-a}
Let $\ga\geq -2$, there exist $C_{1}, C_{2}$ such that
$$
\big<|\xi|^{\ga+2}g,g\big>_{\xi}\leq C_{1} \big<-Lg,g\big>_{\xi}+C_{2}|g|^{2}_{L^{2}_{\xi}}\,.
$$
\end{lemma}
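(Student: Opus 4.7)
The plan is to exploit the Macro-Micro decomposition $g = \mathrm{P}_0 g + \mathrm{P}_1 g$ and bound each piece separately, using the coercivity \eqref{coercivity} only on the microscopic part. Since $\gamma\geq -2$, we have $\gamma+2\geq 0$, so $|\xi|^{\gamma+2}$ is a (sub)polynomial weight and in particular $|\xi|^{\gamma+2}\leq \langle\xi\rangle^{\gamma+2}$.

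First I would write
\[
\langle |\xi|^{\gamma+2} g,g\rangle_\xi \;\leq\; 2\langle |\xi|^{\gamma+2} \mathrm{P}_0 g,\mathrm{P}_0 g\rangle_\xi + 2\langle |\xi|^{\gamma+2} \mathrm{P}_1 g,\mathrm{P}_1 g\rangle_\xi.
\]
For the macroscopic term, $\mathrm{P}_0 g=\sum_{i=0}^{4}\langle g,\chi_i\rangle_\xi \chi_i$ lies in the five-dimensional span of $\{\chi_i\}$, each of which carries the factor $\mu^{1/2}$. Since $\gamma+2\geq 0$, the matrix elements $\langle |\xi|^{\gamma+2}\chi_i,\chi_j\rangle_\xi$ are finite constants. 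Hence by Cauchy--Schwarz and orthonormality of $\{\chi_i\}$,
\[
\langle |\xi|^{\gamma+2}\mathrm{P}_0 g,\mathrm{P}_0 g\rangle_\xi \;\lesssim\; \sum_{i=0}^{4}|\langle g,\chi_i\rangle_\xi|^{2} \;=\; |\mathrm{P}_0 g|^{2}_{L^2_\xi}\;\leq\;|g|^{2}_{L^2_\xi}.
\]

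For the microscopic term, I use $|\xi|^{\gamma+2}\leq \langle\xi\rangle^{\gamma+2}$ to obtain
\[
\langle |\xi|^{\gamma+2}\mathrm{P}_1 g,\mathrm{P}_1 g\rangle_\xi \;\leq\; |\langle\xi\rangle^{(\gamma+2)/2}\mathrm{P}_1 g|^{2}_{L^2_\xi}\;\leq\;|\mathrm{P}_1 g|^{2}_{L^2_\sigma},
\]
since the first term in the definition of $|\cdot|_{L^2_\sigma}$ is precisely $|\langle\xi\rangle^{(\gamma+2)/2}\cdot|^{2}_{L^2_\xi}$. The coercivity estimate \eqref{coercivity} then yields
\[
|\mathrm{P}_1 g|^{2}_{L^2_\sigma}\;\leq\; \frac{1}{\nu_0}\langle -Lg,g\rangle_\xi.
\]
Adding the two bounds produces $C_1=2/\nu_0$ and an explicit $C_2$, completing the proof.

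The argument is essentially routine; the only point that needs mild care is the restriction $\gamma\geq -2$, which is exactly what ensures both that $|\xi|^{\gamma+2}$ can be controlled by $\langle\xi\rangle^{\gamma+2}$ globally in $\xi$ (with no singularity at the origin) and that the macroscopic moments $\langle|\xi|^{\gamma+2}\chi_i,\chi_j\rangle_\xi$ are finite. No serious obstacle is expected.
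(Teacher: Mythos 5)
Your argument is correct and is precisely the intended proof: the paper gives no written proof for Lemma \ref{lem-a}, merely noting that it is ``a consequence of \eqref{coercivity},'' and your macro--micro split with the coercivity bound on the $\mathrm{P}_1$ piece and finite Gaussian moments on the $\mathrm{P}_0$ piece is exactly the argument that remark is pointing to. (One could equivalently invoke Lemma \ref{prop1}(iv), $\langle -\Lambda g,g\rangle_\xi\geq c_0|g|^2_{L^2_\sigma}$ and $\langle Kg,g\rangle_\xi\leq|g|^2_{L^2_\xi}$, to get $|g|^2_{L^2_\sigma}\lesssim\langle -Lg,g\rangle_\xi+|g|^2_{L^2_\xi}$ directly without decomposing $g$, but both routes are elementary and lead to the same conclusion.)
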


In order to study the solution inside the finite Mach number region, we need to recall the
spectrum $\rm{Spec}(\eta)$, $\eta \in\mathbb{R}^3$, of the operator $-i  \xi\cdot\eta +L$.
\begin{lemma}\label{pr12}{\rm\cite{[Degond], YangYu}}
Set $\eta=|\eta|\om$. For any $\ga\geq-2$, there exist $\de_{0}>0$ and $\tau=\tau(\de_{0})>0$ such that
\begin{enumerate}[(i).]
\item  For any $|\eta|>\de_{0}$,
\begin{equation*}
\hbox{\rm{Spec}}(\eta )\subset\{z\in\mathbb{C} : Re(z)<-\tau\}\,.
\end{equation*}
\item For any $|\eta |<\de_{0}$, the spectrum within the region $\{z\in\mathbb{C} : Re(z)>-\tau\}$ consists of exactly five eigenvalues $\{\varrho_{j}(\eta  )\}_{j=0}^{4}$,
\begin{equation*}
\hbox{\rm{Spec}}(\eta )\cap\{z\in\mathbb{C} : Re(z)>-\tau\}=
\{\varrho_{j}(\eta  )\}_{j=0}^{4}\,,
\end{equation*}
and corresponding eigenvectors $\{e_{j}(\eta  )\}_{j=0}^{4}$, where
\begin{equation*}\begin{array}{l}
\displaystyle \varrho_{j}( \eta   )=-i\, a_{j}|\eta |-A_{j}|\eta |^{2}+O(| \eta   |^{3})\,,
\\ \\
\displaystyle e_{j}( \eta   )=E_{j}+O(| \eta   |)\,,
\end{array}
\end{equation*}
here $A_{j}>0$, $\big<e_{j}(- \eta  ),e_{l}( \eta  )\big>_{\xi}=\de_{jl}$, $1\leq j,l\leq 3$ and
\begin{equation*}\label{speed}
\left\{\begin{array}{l}
a_{0}=\sqrt{\frac{5}{3}}\,,\quad a_{1}=-\sqrt{\frac{5}{3}}\,,\quad a_{2}=a_{3}=a_{4}=0 \,,
\\[2mm]
E_{0}=\sqrt{\frac{3}{10}}\chi_{0}+\sqrt{\frac{1}{2}}\om\cdot \overline{\chi}+\sqrt{\frac{1}{5}}\chi_{4}\,,
\\[2mm]
E_{1}=\sqrt{\frac{3}{10}}\chi_{0}-\sqrt{\frac{1}{2}}\om\cdot \overline{\chi}+\sqrt{\frac{1}{5}}\chi_{4}\,,
\\[2mm]
E_{2}= -\sqrt{\frac{2}{5}}\chi_{0}+\sqrt{\frac{3}{5}}\chi_{4}\,,
\\[2mm]
E_{3}= \om_{1}\cdot \overline{\chi}\,,
\\[2mm]
E_{4}= \om_{2}\cdot \overline{\chi}\,.
\end{array}
\right.\end{equation*}
where $\overline{\chi}=(\chi_{1},\chi_{2},\chi_{3})$, and $\{\om_{1},\om_{2}, \om\}$ is an orthonormal basis of $\R^{3}$. More precisely, the semigroup $e^{(-i \xi\cdot\eta +L )t}$ can be decomposed as
\begin{align*}
\displaystyle e^{(-i \xi\cdot\eta +L )t}g
&=e^{(-i \xi\cdot\eta +L)t}\Pi_{\eta}^{\perp}g\nonumber
\\
&\quad+\textbf{1}_{\{| \eta  |<\de_{0}\}}\sum_{j=0}^{4}e^{\varrho_{j}( \eta  )t}\big<e_{j}(- \eta   ), g\big>_{\xi}e_{j}( \eta   )\,.
\end{align*}
where $\textbf{1}_{\{\cdot\}}$ is the indicator function, and there exists $C>0$ such that
$$
\left|e^{(-i \xi\cdot\eta +L)t}\Pi_{\eta}^{\perp}g \right|_{L^{2}_{\xi}}\leq e^{-Ct}|g|_{L^{2}_{\xi}}\,.
$$
\end{enumerate}
\end{lemma}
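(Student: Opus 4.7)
The plan is to combine the macro--micro decomposition with Kato-type perturbation analysis of the operator family $H(\eta) = -i\,\xi\cdot\eta + L$, treating $-i\,\xi\cdot\eta$ as a perturbation of $L$ in the low-frequency regime and running a direct energy argument in the high-frequency regime. The structural inputs are the coercivity from Lemma~\ref{prop1}(v), the fact that $H(0)=L$ has an isolated eigenvalue $0$ of multiplicity five with spectral gap $\nu_0>0$, and the rotational and parity symmetry of $L$ in $\xi$ which selects the form of the eigenvectors $E_j$.

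For part (i), I would fix $\delta_0>0$ and choose $\tau\in(0,\nu_0)$. Given the resolvent equation $(H(\eta)-z)f=g$ with $\mathrm{Re}(z)>-\tau$ and $|\eta|>\delta_0$, testing against $f$ cancels the skew-adjoint transport term and coercivity gives $|\mathrm{P}_1 f|_{L^2_\sigma}^{2}\lesssim |f|_{L^2_\xi}|g|_{L^2_\xi}$. The macroscopic component $\mathrm{P}_0 f$ is then recovered from the macroscopic equations obtained by testing against the basis $\{\chi_i\}$: because $\xi\cdot\eta$ couples $\mathrm{P}_0 f$ to $\mathrm{P}_1 f$ non-degenerately whenever $|\eta|$ is bounded below, one gets $|\mathrm{P}_0 f|_{L^2_\xi}\lesssim |\mathrm{P}_1 f|_{L^2_\sigma}+|g|_{L^2_\xi}$. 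Combining and absorbing yields a uniform resolvent bound $|f|_{L^2_\xi}\lesssim |g|_{L^2_\xi}$, hence $\mathrm{Spec}(\eta)\subset\{\mathrm{Re}(z)<-\tau\}$.

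For part (ii), I would use a Dunford contour. Let $\Gamma$ be a small circle in $\{\mathrm{Re}(z)>-\tau\}$ around $0$. For $|\eta|<\delta_0$, a Neumann-type perturbation argument, carried out within the macro--micro split in order to bypass the unboundedness of $\xi\cdot\eta$ relative to $L$, shows that $(H(\eta)-z)^{-1}$ exists on $\Gamma$ and depends smoothly on $\eta$; thus $\Pi(\eta)=\tfrac{1}{2\pi i}\oint_{\Gamma}(z-H(\eta))^{-1}\,dz$ has constant rank five. The eigenvalues $\varrho_j(\eta)$ of $H(\eta)$ inside $\Gamma$ coincide with those of the $5\times 5$ matrix $\Pi(\eta)H(\eta)\Pi(\eta)$. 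Expanding by Rayleigh--Schr\"odinger, the leading term is $-i\,\mathrm{P}_0(\xi\cdot\eta)\mathrm{P}_0$, an anti-Hermitian matrix on $\mathrm{Ker}(L)$ whose eigenvalues are $\pm\sqrt{5/3}\,|\eta|$ (sound modes) and a triple zero (two transverse, one entropy) with eigenvectors exactly $E_0,\ldots,E_4$ as listed. The second-order correction $-\mathrm{P}_0(\xi\cdot\eta)L^{-1}\mathrm{P}_1(\xi\cdot\eta)\mathrm{P}_0$ is, by rotational invariance and parity, diagonal in the same basis and yields the strictly positive dissipation coefficients $A_j>0$. This produces the claimed expansions for $\varrho_j(\eta)$ and $e_j(\eta)$.

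The semigroup decomposition is finally obtained from $e^{H(\eta)t}=\tfrac{1}{2\pi i}\oint (z-H(\eta))^{-1}e^{zt}\,dz$. Deforming the contour to the vertical line $\mathrm{Re}(z)=-\tau$ and collecting the residues at the five simple eigenvalues $\varrho_j(\eta)$ gives the explicit sum; the leftover contour integral acts on $\Pi_\eta^\perp g$ and is controlled by $e^{-Ct}|g|_{L^2_\xi}$ through the uniform resolvent bound. The hardest step is establishing the smooth dependence of the resolvent on $\eta$ when $\gamma\in[-2,-1)$, because $\xi\cdot\eta$ is not relatively bounded by $L$ in $L^2_\xi$ (coercivity only controls the weighted $L^2_\sigma$ norm), so Kato's analytic perturbation theory does not apply verbatim; the resolvent must be constructed at the macro--micro level, yielding only $C^k$ rather than analytic dependence on $\eta$ in that regime, which is precisely the refinement made quantitative in Lemma~\ref{lem:smooth} and exploited later in the pointwise analysis.
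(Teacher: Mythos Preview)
The paper does not supply its own proof of Lemma~\ref{pr12}; it is quoted as a known result from \cite{[Degond], YangYu}. Your sketch is essentially the standard route taken in those references: resolvent/energy bounds for $|\eta|>\delta_0$, and for $|\eta|<\delta_0$ a macro--micro reduction to a $5\times5$ eigenvalue problem followed by perturbative expansion. In that sense your proposal is correct and aligned with the literature the paper cites.

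Where there is overlap with proofs the paper actually contains is Section~\ref{Pf-smooth}, which proves the refinement Lemma~\ref{lem:smooth} (and the auxiliary Lemma~\ref{smooth}). There the paper follows the same architecture you outline: reduce via $O(3)$-symmetry to the one-parameter problem $(-i\xi_1|\eta|+L)\psi=\varrho\psi$, apply the macro--micro decomposition to land on a finite-dimensional eigen-problem, and then reconstruct $\psi_j$ from $\mathrm{P}_0\psi_j$ through $\mathscr{L}_j^1$. The paper's treatment of the analyticity/smoothness dichotomy is exactly the point you flag at the end: for $-1\le\gamma\le1$ it verifies the Kato--Rellich relative-boundedness $|\xi g|_{L^2_\xi}^2\le C_1|Lg|_{L^2_\xi}^2+C_2|g|_{L^2_\xi}^2$ (via $\langle\Lambda g,\Lambda g\rangle_\xi\gtrsim|\langle\xi\rangle^{\gamma+2}g|_{L^2_\xi}^2$), hence obtains analytic dependence, while for $-2\le\gamma<-1$ it simply cites \cite{YangYu} for smoothness, consistent with your observation that the perturbation is no longer $L$-bounded there. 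So your outline is faithful both to the cited sources and to the portions the paper does prove.
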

Note that the parameters $A_{j}, 0\leq j\leq 4$ with the relations $A_{0}=A_{1}$ and $A_{3}=A_{4}$, are the dissipation parameters corresponding to the Chapman-Enskog expansion relating the Landau equation to the Navier-Stokes equation.

Furthermore, we have more detailed information about the smooth and analytic properties of  eigen-pairs $\{(\varrho_j,e_j)\}_{j=0}^{4}$, which is essentially used in obtaining the estimate of fluid structure (Proposition \ref{flu_smooth}).
\begin{lemma}\label{lem:smooth}
	For $\eta\in\mathbb{R}^{3}$ with $\left|\eta\right|\ll1$,	
	\begin{align*}
		\varrho_{0}\left(\eta\right) & =-i\left|\eta\right|\left(c+\mathscr{A}_{0}\left(\left|\eta\right|^{2}\right)\right)-A_{0}\left|\eta\right|^{2}+\mathscr{A}_{1}\left(\left|\eta\right|^{2}\right),\\
		\varrho_{1}\left(\eta\right) & =i\left|\eta\right|\left(c+\mathscr{A}_{0}\left(\left|\eta\right|^{2}\right)\right)-A_{0}\left|\eta\right|^{2}+\mathscr{A}_{1}\left(\left|\eta\right|^{2}\right),\\
		\varrho_{2}\left(\eta\right) & =-A_{2}\left|\eta\right|^{2}+\mathscr{A}_{2}\left(\left|\eta\right|^{2}\right),\\
		\varrho_{3}\left(\eta\right)=\varrho_{4}\left(\eta\right) & =-A_{3}\left|\eta\right|^{2}+\mathscr{A}_{3}\left(\left|\eta\right|^{2}\right),
	\end{align*}
	for some analytic (smooth) functions $\mathscr{A}_{j}:\mathbb{R}\to\mathbb{R}$,
	$j=0,1,2,3$ when $-1\leq\gamma\leq1$ ($-2\leq\gamma<-1$) respectively.  $\varrho_j$, $j=2,3,4$ are actually real-valued functions.  For $\left|s\right|\ll1$,
	\[
	\mathscr{A}_{0}\left(s\right)=O\left(s\right),\quad\mathscr{A}_{j}\left(s\right)=O\left(s^{2}\right),\quad j=1,2,3.
	\]
	Furthermore, there exist analytic (smooth) functions $\mathtt{a}_{k.l}^{j}:\mathbb{R}\to\mathbb{R}$
	such that when $-1\leq\gamma\leq1$ ($-2\leq\gamma<-1$),
	
	\[
	\begin{aligned}e_{0}\left(\eta\right)= & \mathscr{L}_{0}\left[\left(\mathtt{a}_{0,1}^{0}(\left|\eta\right|^{2})+i\left|\eta\right|\textrm{a}_{0,2}^{0}(\left|\eta\right|^{2})\right)\chi_{0}\right.\\
	& \left.\qquad+\left(\mathtt{a}_{0,1}^{1}(\left|\eta\right|^{2})+i\left|\eta\right|\mathtt{a}_{0,2}^{1}(\left|\eta\right|^{2})\right)
\om\cdot \overline{\chi}\right.\\
	& \left.\qquad+\left(\mathtt{a}_{0,1}^{4}(\left|\eta\right|^{2})+i\left|\eta\right|\mathtt{a}_{0,2}^{1}(\left|\eta\right|^{2})\right)\chi_{4}\right],
	\end{aligned}
	\]
	\[
	\begin{aligned}e_{1}\left(\eta\right)= & \mathscr{L}_{1}\left[\left(\mathtt{a}_{0,1}^{0}(\left|\eta\right|^{2})-i\left|\eta\right|\textrm{a}_{0,2}^{0}(\left|\eta\right|^{2})\right)\chi_{0}\right.\\
	& \left.\qquad-\left(\mathtt{a}_{0,1}^{1}(\left|\eta\right|^{2})-i\left|\eta\right|\mathtt{a}_{0,2}^{1}(\left|\eta\right|^{2})\right)\om\cdot \overline{\chi}\right.\\
	& \left.\qquad+\left(\mathtt{a}_{0,1}^{4}(\left|\eta\right|^{2})-i\left|\eta\right|\mathtt{a}_{0,2}^{1}(\left|\eta\right|^{2})\right)\chi_{4}\right],
	\end{aligned}
	\]
	\[
	 e_{2}\left(\eta\right)=\mathscr{L}_{2}\Bigl[\mathtt{a}_{2,1}^{0}(\left|\eta\right|^{2})\chi_{0}+i\mathtt{a}_{2,2}^{1}(\left|\eta\right|^{2})\sum_{j=1}^{3}\eta_{j}\chi_{j}+\mathtt{a}_{2,1}^{4}(\left|\eta\right|^{2})\chi_{4}\Bigr],
	\]
	\[
	e_{3}\left(\eta\right)=\mathscr{L}_{3}\left[\mathtt{a}_{3,1}^{2}(\left|\eta\right|^{2})g^{-1}\cdot\chi_{2}\right],
	\]
	\[
	e_{4}\left(\eta\right)=\mathscr{L}_{4}\left[\mathtt{a}_{3,1}^{2}(\left|\eta\right|^{2})g^{-1}\cdot\chi_{3}\right].
	\]
Here
	\[\mathscr{L}_j=
	 \left[1+\left(L-\mathrm{P}_{1}i\xi\cdot\eta-\varrho_{j}\left(\left|\eta\right|\right)\right)^{-1}\left(\mathrm{P}_{1}i\xi\cdot\eta\right)\right],\]
and $g\in O(3)$, sends $\frac{\eta}{|\eta|}$ to $(1,0,0)^T$ with group action defined in equation \eqref{group-action}.
\end{lemma}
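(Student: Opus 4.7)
\textbf{Proof proposal for Lemma \ref{lem:smooth}.}

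The plan is to reduce the problem by rotational invariance to a one-parameter family, then extract the even/odd structure from a reflection symmetry in $\xi$, and finally carry out (analytic or smooth) perturbation theory on the resulting reduced five-dimensional problem.

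\emph{Step 1 (Rotational reduction).} Since $L$ is rotation-invariant in $\xi$, for any $g \in O(3)$ sending $\eta/|\eta|$ to $(1,0,0)^{T}$ we have an intertwining $U_g(-i\xi\cdot\eta+L)U_g^{-1}=\mathcal{L}(\epsilon):=L-i\epsilon\xi_{1}$, with $\epsilon:=|\eta|$. It therefore suffices to analyze $\mathcal{L}(\epsilon)$ near $\epsilon=0$; the transformation $g$ afterwards explains the directional quantities $\omega\cdot\overline{\chi}$, $\sum_j\eta_j\chi_j$, and $g^{-1}\cdot\chi_j$ in the statement.

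\emph{Step 2 (Parity and real/imaginary structure).} Let $R:(\xi_{1},\xi_{2},\xi_{3})\mapsto(-\xi_{1},\xi_{2},\xi_{3})$. Then $R \in O(3)$ commutes with $L$, whereas $R(i\epsilon\xi_{1})R^{-1}=-i\epsilon\xi_{1}$, so $R\mathcal{L}(\epsilon)R^{-1}=\mathcal{L}(-\epsilon)$. Also, since $L$ has real coefficients, complex conjugation gives $\overline{\mathcal{L}(\epsilon)}=\mathcal{L}(-\epsilon)$. Combining the two,
\[
\varrho_{j}(-\epsilon)=\varrho_{j}(\epsilon)\qquad\text{for the real eigenvalues } j=2,3,4,
\]
and $\varrho_{0}(-\epsilon)=\overline{\varrho_{0}(\epsilon)}$ (so $\mathrm{Re}\,\varrho_{0}$ is even and $\mathrm{Im}\,\varrho_{0}$ is odd in $\epsilon$). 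In the same way the eigenprojections satisfy $R P_{j}(\epsilon) R^{-1}=P_{j}(-\epsilon)$ and $\overline{P_{j}(\epsilon)}=P_{j}(-\epsilon)$, which forces each component $\langle e_{j}(\epsilon),\chi_{k}\rangle_{\xi}$ to be purely real (resp.\ purely imaginary) according to the parity of $\chi_{k}$ under $R$. This is the mechanism producing the precise placement of the explicit $i|\eta|$ factors in the statement.

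\emph{Step 3 (Regularity in $\epsilon$).} Using Lemma \ref{prop1} and Lemma \ref{pr12}, the five eigenvalues $\{\varrho_{j}\}$ sit in a compact piece of spectrum that is separated from the rest for $\epsilon$ small. We then employ the macro-micro decomposition to turn the eigenvalue problem into a $5\times 5$ matrix problem. Writing $e_{j}=e_{j}^{\mathrm{mac}}+e_{j}^{\mathrm{mic}}$ with $e_{j}^{\mathrm{mac}}=\mathrm{P}_{0}e_{j}$, the equation $\mathcal{L}(\epsilon)e_{j}=\varrho_{j}e_{j}$ yields
\[
e_{j}^{\mathrm{mic}}=\bigl(L-\mathrm{P}_{1}i\xi\cdot\eta-\varrho_{j}\bigr)^{-1}\bigl(\mathrm{P}_{1}i\xi\cdot\eta\bigr)e_{j}^{\mathrm{mac}},
\]
so that $e_{j}=\mathscr{L}_{j}e_{j}^{\mathrm{mac}}$ with $\mathscr{L}_{j}$ the operator in the statement, and the remaining task is a perturbation problem for the $5\times 5$ reduced matrix $\mathrm{P}_{0}\mathcal{L}(\epsilon)\mathscr{L}_{j}\mathrm{P}_{0}$. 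For $-1\le\gamma\le 1$ the operator $L$ has a spectral gap in a Hilbert space in which $\mathrm{P}_{1}i\xi_{1}$ is relatively bounded, so Kato's analytic perturbation theory applies and one gets analyticity in $\epsilon$. For $-2\le\gamma<-1$ this relative boundedness is lost, and only finitely many iterations of $(L-\mathrm{P}_{1}i\xi_{1})^{-1}(\mathrm{P}_{1}i\xi_{1})$ are available in any fixed weighted space; iterating in progressively heavier weights yields smoothness (but not analyticity) in $\epsilon$ of every order. Combined with Step 2 the even functions become analytic/smooth functions of $\epsilon^{2}$, and the odd ones take the form $\epsilon\cdot(\text{analytic/smooth in }\epsilon^{2})$, giving the stated functions $\mathscr{A}_{j}$ and $\mathtt{a}_{k,l}^{j}$. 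The computation of $E_{j}$ in Lemma \ref{pr12} identifies the constant terms; the relations $A_{0}=A_{1}$ and $A_{3}=A_{4}$ follow from the same parity/degeneracy.

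\emph{Main obstacle.} The delicate point is the case $-2\le\gamma<-1$ in Step 3: one must quantify the loss of weight per iteration of $(L-\mathrm{P}_{1}i\xi_{1})^{-1}(\mathrm{P}_{1}i\xi_{1})$ and choose the ambient weighted space so that arbitrarily many derivatives in $\epsilon^{2}$ are available, even though no single space accommodates the whole power series. Carefully orchestrating this weight budget, together with the parity constraints of Step 2, is what rules out odd-power $|\eta|$ terms in the real parts and produces the precise algebraic form of the eigenvectors displayed in the lemma.
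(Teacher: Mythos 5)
Your proposal follows essentially the same roadmap as the paper: rotational reduction via $g\in O(3)$, parity relations from the reflection $\mathcal{R}:(\xi_1,\xi_2,\xi_3)\mapsto(-\xi_1,\xi_2,\xi_3)$ together with complex conjugation, the macro-micro reduction yielding the operator $\mathscr{L}_j$, Kato--Rellich to get analyticity for $-1\le\gamma\le 1$, and a weaker smoothness argument for $-2\le\gamma<-1$. But there are two concrete issues in the details.

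First, your Step 2 states $R P_j(\epsilon) R^{-1}=P_j(-\epsilon)$ for every $j$. This is only correct for $j=2,3,4$. Since $\varrho_0(\epsilon)$ and $\varrho_1(\epsilon)$ are distinct for $\epsilon\neq 0$ and swap under $\epsilon\mapsto-\epsilon$ (because $\varrho_0(-\epsilon)=\varrho_1(\epsilon)$, as one checks from the leading asymptotics $\mp i\sqrt{5/3}\,\epsilon$), the reflection identity must be $R P_0(\epsilon) R^{-1}=P_1(-\epsilon)$ and $R P_1(\epsilon) R^{-1}=P_0(-\epsilon)$, i.e., $(\mathcal{R}\cdot\psi_0)(-|\eta|)=\psi_1(|\eta|)$. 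The constraint you get on the macroscopic coefficients $\beta_{jl}$ is therefore a cross-relation between the $j=0$ and $j=1$ families (e.g. $\beta_{10}(|\eta|)=\beta_{01}(-|\eta|)$), not a fixed-$j$ reality condition as you wrote. This is precisely what produces the specific coupling between the formulas for $e_0(\eta)$ and $e_1(\eta)$ in the statement; your version, taken literally, would not yield the correct form.

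Second, in your final sentence of Step 3 you assert that an even smooth function of $\epsilon$ is a smooth function of $\epsilon^2$, and an odd one is $\epsilon$ times a smooth function of $\epsilon^2$. In the analytic case this is a triviality of power series, but in the $C^\infty$ case (which is the one you actually need for $-2\le\gamma<-1$) this is a nontrivial theorem of Whitney and must be invoked explicitly; it is not a formal consequence of even/odd parity. Finally, your description of the smoothness argument for $\gamma<-1$ via ``progressively heavier weights'' is a plausible heuristic but not a proof; the paper simply refers to the spectral analysis of Yang--Yu for this part, which is a legitimate shortcut but requires at least a citation and identification of the precise statement being used.
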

\begin{remark}
	This lemma is a generalization of Lemma 7.8 \cite{[LiuYu1]} and will be proved in section \ref{Pf-smooth}.
\end{remark}

The following combinations of weight functions $w(t,x,\xi)$ and $\varrho(x,\xi)$ are needed for (weighted) energy estimate (Lemmas \ref{improve} and \ref{improve*}, Propositions \ref{weig_1} and  \ref{weig_2}):
\begin{equation}\label{weight-functions}
\left\{
\begin{array}{lll}
w(t,x,\xi)=1,& \varrho(x,\xi)=1 & \mbox{for} \quad \gamma\in[-2,1],\\
w(t,x,\xi)=\exp\left(\frac{\left<x\right>-Mt}{2D}\right), & \varrho(x,\xi)=\exp\left(\frac{\left<x\right>}{D}\right)& \mbox{for}\quad \gamma\in[-1,1],\\
w(t,x,\xi)=\exp\left(\frac{\alpha \vartheta(t,x,\xi)}{2}\right), & \varrho(x,\xi)=\exp\left(\alpha \vartheta(0,x,\xi) \right)& \mbox{for}\quad \gamma\in[-2,-1).\\
\end{array}
\right.
\end{equation}
Here $D>0$ is large, $\alpha>0$ is small, both need to be chosen later. And
\begin{align*}
\vartheta(t,x,\xi)&=5\Big(\de(\left<x\right>-Mt)\Big)^{\frac{2}{1-\ga}}(1-\chi)
\\
&\quad+\bigg[(1-\chi)\de\left(\left<x\right>-Mt\right)\left<\xi\right>^{1+\ga}+3\left<\xi\right>^{2}\bigg]\chi\,.
\end{align*}
The cut-off function $\chi$ is short for
$$
\chi=\chi\Big( \de(\left<x\right>-Mt)\left<\xi\right>^{\ga-1} \Big)\,.
$$
Now, let us state the regularization effect of the linearized Landau equation with weights in small time, this is the key lemma in this paper and
we will prove it in section \ref{Hypo}.
\begin{lemma}\label{improve}
Let $f$ be the solution to equation \eqref{in.1.c}. Let weight function $\varrho(x,\xi)$ be any one of three cases in \eqref{weight-functions}. Then the following regularization estimate holds:
\[
\left\| f(t)\right\|_{H^2_x L^2_\xi(\varrho)} \lesssim t^{-3} \left\| f_0\right\|_{L^2(\varrho\,\mathcal{M})}\quad \mbox{for}\quad 0<t\leq 1,
\]
here
\begin{align*}
\mathcal{M}\equiv \begin{cases}
1 & \ga\in\left[0,1\right],\\
\left<\xi\right>^{2 |\ga|}  & \ga \in \left[-2,0\right).
\end{cases}
\end{align*}
\end{lemma}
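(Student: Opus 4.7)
The plan is a time-weighted hypoelliptic energy method for the kinetic operator $\partial_t+\xi\cdot\nabla_x-L$, in the spirit of H\'erau--Villani and Mouhot--Neumann, extended to the weighted space $L^2(\varrho)$. The rate $t^{-3}$ in the conclusion matches the Kolmogorov-type scaling $\|\partial_x^\alpha\partial_\xi^\beta f\|_{L^2}\lesssim t^{-(3|\alpha|+|\beta|)/2}\|f_0\|_{L^2}$, driven by the commutator identity $\partial_{x_i}=[\partial_{\xi_i},\xi\cdot\nabla_x]$: each $x$-derivative costs $t^{-3/2}$, so two cost $t^{-3}$.

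First I would fix an integer $N\ge 4$ and set up a twisted functional of the form
\[
\mathcal{F}(t)=\sum_{\substack{|\alpha|+|\beta|\leq N\\ |\alpha|\leq 2}}A_{\alpha,\beta}\,t^{3|\alpha|+|\beta|}\|\partial_x^\alpha\partial_\xi^\beta f\|^2_{L^2(\varrho)}+\sum_{\mathrm{cross}}B_{\alpha,\beta,i}\,t^{3|\alpha|+|\beta|+1}\langle\partial_x^\alpha\partial_\xi^\beta f,\partial_x^{\alpha+e_i}\partial_\xi^{\beta-e_i}f\rangle_{L^2(\varrho)},
\]
with positive constants $A_{\alpha,\beta}$ chosen much larger than $B_{\alpha,\beta,i}$ so that $\mathcal F$ is equivalent to its diagonal part. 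Differentiating in $t$ and substituting $\partial_t f=-\xi\cdot\nabla_x f+Lf$, the transport part of each diagonal term integrates by parts to zero (modulo weight commutators addressed in the next step) and the coercivity of Lemma~\ref{prop1}(v) contributes a good negative piece $\sum c_{\alpha,\beta}|\mathrm{P}_1\partial_x^\alpha\partial_\xi^\beta f|^2_{L^2_\sigma(\varrho)}$. The ``bad'' growth $(3|\alpha|+|\beta|)t^{3|\alpha|+|\beta|-1}\|\partial_x^\alpha\partial_\xi^\beta f\|^2$ from $\tfrac{d}{dt}t^{3|\alpha|+|\beta|}$ is absorbed by the cross term: using $[\nabla_\xi,\xi\cdot\nabla_x]=\nabla_x$ on its transport part produces precisely $-t^{3|\alpha|+|\beta|+1}\|\partial_x^{\alpha+e_i}\partial_\xi^{\beta-e_i}f\|^2$, and Cauchy--Schwarz on the mixed pieces closes the estimate when the $A_{\alpha,\beta}$ are tuned inductively from the highest $|\alpha|$ downward. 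Integrating in time then yields $\mathcal F(t)\lesssim\|f_0\|^2_{L^2(\varrho\mathcal M)}$ on $(0,1]$, and reading off the coefficient of $\|\partial_x^\alpha f\|^2$ gives $\|\partial_x^\alpha f\|_{L^2(\varrho)}\lesssim t^{-3|\alpha|/2}\|f_0\|_{L^2(\varrho\mathcal M)}$, which, summed over $|\alpha|\le 2$, yields the claimed $t^{-3}$ bound.

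Handling the weight is routine when $\varrho=1$ or $\varrho=\exp(\langle x\rangle/D)$: in both cases $\varrho$ commutes with $\nabla_\xi$ and with $L$, and $[\xi\cdot\nabla_x,\varrho^{1/2}]$ contributes only a factor $O(\langle\xi\rangle/D)$ absorbed by the $\langle\xi\rangle^{\gamma+2}$-diffusion of Lemma~\ref{prop1}(ii) once $D$ is large. The sensitive case is $\gamma\in[-2,-1)$, where $\varrho=\exp(\alpha\vartheta(0,x,\xi))$ depends on $\xi$ through the cutoff $\chi(\delta\langle x\rangle\langle\xi\rangle^{\gamma-1})$, so both $[\nabla_\xi,\varrho^{1/2}]$ and $[K,\varrho^{1/2}]$ are nonzero. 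I would handle these by a space-velocity domain decomposition along $\chi$: on $\{\delta\langle x\rangle\langle\xi\rangle^{\gamma-1}\le 2\}$ the $\xi$-part of $\vartheta$ acts as an $\langle\xi\rangle^{1+\gamma}$-weight with $1+\gamma\le 0$, hence bounded, while on the complement the $x$-part $(\delta(\langle x\rangle-Mt))^{2/(1-\gamma)}$ dominates and $|\nabla_\xi\vartheta|$ is bounded by $\langle\xi\rangle^{\gamma}$ times a bounded $x$-weight. On each region the new weighted commutator terms are bounded by $C\alpha$ times the coercive diffusion plus a remainder carrying at most one extra factor $\langle\xi\rangle^{2|\gamma|}$ on the initial data, which is precisely the source of $\mathcal M$ in the lemma.

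The main obstacle is the control of $[\widetilde K,\varrho^{1/2}]$ in this last case: one must estimate the kernel $\tilde k(\xi,\xi_*)\bigl(\varrho^{1/2}(\xi)/\varrho^{1/2}(\xi_*)-1\bigr)$ region-by-region against the Gaussian decay of $\tilde k$ and show that it maps $L^2(\varrho)\to L^2(\varrho)$ with a norm that can be made arbitrarily small by choosing $\alpha$ small. This is the short-time analogue of the weighted $K$-estimate used in the proof of Proposition~\ref{weig_2} and is the technical heart of the lemma; once it is in place, together with the parallel bound on $[\nabla_\xi,\varrho^{1/2}]$, the functional scheme of the first step closes without further modification and completes the proof.
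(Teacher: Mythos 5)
Your plan captures the H\'erau-style time-weighted functional that the paper also uses, but it misses the structural decomposition that makes the argument tractable, and the missing piece is not a cosmetic simplification. The paper's proof first splits $L=\Lambda+K$ (with $\Lambda=\widetilde\Lambda-\varpi\chi_R$, $K=\widetilde K+\varpi\chi_R$) and runs the H\'erau functional only for the semigroup $e^{t\mathcal L}$ of the ``hypoelliptic core'' $\mathcal L=-\xi\cdot\nabla_x+\Lambda$ (Lemma~\ref{hypodiss}, Lemma~\ref{regul}); it then puts $K$ back through a Picard/Duhamel iteration $f=h^{(0)}+\cdots+h^{(3)}+\mathcal R^{(3)}$ together with the boundedness of $K$ on $L^2(\varrho\,m_0)$ (Lemma~\ref{bdK}). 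This split buys three things that your direct scheme has to fight for. First, $\Lambda$ is fully coercive, $\langle-\Lambda g,g\rangle_\xi\gtrsim|g|^2_{L^2_\sigma}$, whereas the full $L$ only gives $\nu_0|\mathrm P_1 g|^2_{L^2_\sigma}$ as you note; your functional therefore has to track $\mathrm P_0\partial^\alpha_x\partial^\beta_\xi f$ separately, which is doable but is not addressed in the proposal. Second, and more seriously, because your functional carries $\xi$-derivatives up to $|\beta|\le N$ with $N\ge 4$ on the \emph{full} equation, the commutators $[\partial_\xi^\beta,\widetilde K]$ appear, and the kernel $\partial_\xi^\beta\tilde k(\xi,\xi_*)$ worsens the singularity to $|\xi-\xi_*|^{\gamma-|\beta|}$, which is no longer locally integrable already for $|\beta|=1$ when $\gamma$ is near $-2$. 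One can integrate by parts to move derivatives onto $f$, but you do not mention this, and it has to be done consistently with the weight $\varrho$ in all the cross terms, which is precisely the delicacy the paper avoids by never differentiating $K$ in $\xi$ at all — $K$ commutes with $\nabla_x$ and is only hit by $x$-derivatives in their iteration. Third, because the H\'erau step in the paper needs only \emph{one} $x$-derivative (the $t^{-3}$ appears from using the semigroup property twice in Step~2), they can keep the functional small and avoid the whole ladder up to $|\beta|=N$.

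A further quantitative point you gloss over is the source of $\mathcal M=\langle\xi\rangle^{2|\gamma|}$: in the paper each application of Lemma~\ref{regul} upgrades the weight by one factor $\langle\xi\rangle^{|\gamma|}$, i.e., $m_0\to m_1\to m_2$, because the lowest-order piece of the functional must carry $m_1$ to give the dissipation $\int\langle\xi\rangle^{\gamma+2}h^2\varrho m_1\sim\int\langle\xi\rangle^2 h^2\varrho m_0$ needed to absorb all lower-order commutator terms. Two $x$-derivatives then naturally cost $m_2=\mathcal M$. Your sentence ``a remainder carrying at most one extra factor $\langle\xi\rangle^{2|\gamma|}$'' names the right endpoint but not the mechanism, and in your one-shot functional the weight bookkeeping would have to be done for all $|\alpha|+|\beta|\le N$ simultaneously, which is not set up. In short, the proposal identifies the right tool (H\'erau functional) and the right final obstacle (the weighted $K$-estimate for $\gamma\in[-2,-1)$), but it omits the $\Lambda/K$ split plus Duhamel iteration that the paper uses to keep $K$ out of the energy hierarchy, and without that split the scheme as written does not close for $\gamma$ near $-2$.
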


\section{Wave inside the finite Mach number region}\label{inside}

In this section, we want to study the solution to the linearized Landau equation inside the finite Mach number region, i.e., the large time behavior of the solution. Using the Fourier transform, the solution to the linearized Landau equation can be written as
\begin{equation}
\displaystyle
\mathbb{G}^{t}f_{0}=f(t,x,\xi)= \int_{\R^{3}}e^{i\eta x+(-i \xi\cdot\eta+L)t}\widehat{f}_{0}(\eta,\xi)d\eta\,,
\end{equation}
where $\widehat{f}$ denotes the Fourier transform with respect to the space variable.
We can decompose the solution $f$ into the long wave part $\mathbb{G}^{t}_{L}f_{0}$ and the short wave part $\mathbb{G}^{t}_{S}f_{0}$:
\begin{equation}\begin{array}{l}\label{bot.2.e}
\displaystyle
\mathbb{G}^{t}_{L}f_{0}= \int_{|\eta|<\de_{0}}e^{i\eta x+(-i \xi\cdot\eta+L)t}\widehat{f}_{0}(\eta,\xi)d\eta\,,
\\ \\
\displaystyle
\mathbb{G}^{t}_{S}f_{0}= \int_{|\eta|>\de_{0}}e^{i\eta x+(-i \xi\cdot\eta+L)t}\widehat{f}_{0}(\eta,\xi)d\eta\,.
\end{array}
\end{equation}
The following long-short wave analysis relies on spectral analysis (Lemma \ref{pr12}).
\begin{lemma}\label{short}{\rm(Short wave $\mathbb{G}^{t}_{S}$)} Let $\ga\geq -2$ and $f_{0}\in L^{2}$, there exists constant $c>0$ such that
\begin{equation}\label{bot.2.f}
\|\mathbb{G}^{t}_{S}f_{0}\|_{L^{2}}\lesssim e^{-ct}\|f_{0}\|_{L^{2}}\,.
\end{equation}
\end{lemma}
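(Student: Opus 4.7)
The natural approach is to pass to the Fourier side in the space variable. By Plancherel's theorem and the definition of $\mathbb{G}^{t}_{S}$ in \eqref{bot.2.e},
\[
\|\mathbb{G}^{t}_{S}f_{0}\|^{2}_{L^{2}} = \int_{|\eta|>\delta_{0}} \bigl|e^{(-i\xi\cdot\eta+L)t}\widehat{f}_{0}(\eta,\cdot)\bigr|^{2}_{L^{2}_{\xi}}\,d\eta.
\]
The problem thus reduces to establishing a \emph{uniform} semigroup estimate of the form
\[
\bigl|e^{(-i\xi\cdot\eta+L)t}g\bigr|_{L^{2}_{\xi}} \lesssim e^{-ct}\,|g|_{L^{2}_{\xi}} \qquad \text{for all } |\eta|>\delta_{0},
\]
with some $c>0$ independent of $\eta$. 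Once this is in hand, substituting back and applying Plancherel to $f_{0}$ gives $\|\mathbb{G}^{t}_{S}f_{0}\|^{2}_{L^{2}} \lesssim e^{-2ct}\|f_{0}\|^{2}_{L^{2}}$ directly.

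To prove the uniform bound I would combine two ingredients. First, setting $\widetilde{g}(t)=e^{(-i\xi\cdot\eta+L)t}g$, the skew-symmetry of $-i\xi\cdot\eta$ on $L^{2}_{\xi}$ and the coercivity \eqref{coercivity} yield
\[
\tfrac{1}{2}\tfrac{d}{dt}|\widetilde{g}(t)|^{2}_{L^{2}_{\xi}} = \mathrm{Re}\langle L\widetilde{g},\widetilde{g}\rangle_{\xi} \leq -\nu_{0}\,|\mathrm{P}_{1}\widetilde{g}(t)|^{2}_{L^{2}_{\sigma}},
\]
so the semigroup is contractive and the microscopic part dissipates, but nothing is gained yet for the macroscopic component $\mathrm{P}_{0}\widetilde{g}$, since $L$ has a five-dimensional kernel. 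Second, Lemma \ref{pr12}(i) supplies a \emph{quantitative} spectral gap: for every $|\eta|>\delta_{0}$, the spectrum of $-i\xi\cdot\eta+L$ lies in $\{Re(z)<-\tau\}$ with $\tau=\tau(\delta_{0})>0$ independent of $\eta$. The contractivity from the first step delivers a uniform resolvent bound on the imaginary axis, and the Gearhart–Prüss theorem on the Hilbert space $L^{2}_{\xi}$ then upgrades the pointwise spectral gap into an operator-norm exponential decay with a uniform rate $c<\tau$. Equivalently, one can construct a Kawashima-type compensating operator whose antisymmetric coupling with the transport term produces coercivity on $\mathrm{P}_{0}\widetilde{g}$ as soon as $|\eta|>\delta_{0}$, and then close a modified energy inequality $\tfrac{d}{dt}\mathcal{E}\leq -c\mathcal{E}$.

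The principal obstacle is precisely the uniformity in $\eta$. A pure energy identity based on coercivity cannot reach the macroscopic part, while Lemma \ref{pr12}(i) by itself only guarantees a qualitative spectrum location. The bridge between the two—contractivity plus Gearhart–Prüss, or equivalently a compensating function—must be invoked to convert the spectral gap into operator-norm exponential decay, and it is in this step that $\tau(\delta_{0})>0$ being independent of $\eta$ in the range $|\eta|>\delta_{0}$ is crucially used. After that, integration in $\eta$ and a final application of Plancherel give the stated estimate \eqref{bot.2.f}.
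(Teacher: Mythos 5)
Your Plancherel reduction is correct and is the natural first step, and your identification of the two ingredients (coercivity controls the microscopic part, the spectral gap is needed for the macroscopic part) is the right decomposition of the problem. However, the bridge you propose has a genuine error. You write that ``the contractivity from the first step delivers a uniform resolvent bound on the imaginary axis.'' This is false: contractivity of $e^{(-i\xi\cdot\eta+L)t}$ only gives $\|(z-(-i\xi\cdot\eta+L))^{-1}\|\leq 1/\mathrm{Re}(z)$ for $\mathrm{Re}(z)>0$, which degenerates as $\mathrm{Re}(z)\to 0^{+}$ and says nothing on the imaginary axis itself. A contraction semigroup with spectrum strictly in $\{\mathrm{Re}(z)<0\}$ need not decay at all: the diagonal operator on $\ell^2$ with entries $-\frac{1}{n}+in$ generates a contraction semigroup, has spectrum in $\{\mathrm{Re}(z)<0\}$, yet $\|e^{tA}\|=\sup_n e^{-t/n}=1$ for every $t$, because the resolvent blows up along the imaginary axis. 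So Lemma~\ref{pr12}(i) (spectral location) together with contractivity cannot be upgraded to operator-norm exponential decay by Gearhart--Pr\"uss without separately proving the uniform resolvent bound on a strip $\{\mathrm{Re}(z)\geq-c\}$, and that resolvent bound is precisely the hard content that is missing from your write-up.

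What does work, and what the paper implicitly relies on, is that the spectral analysis in \cite{YangYu} (which underlies Lemma~\ref{pr12}) proves not just the spectral location for $|\eta|>\delta_0$ but the operator-norm semigroup estimate
$\bigl|e^{(-i\xi\cdot\eta+L)t}g\bigr|_{L^2_\xi}\lesssim e^{-ct}|g|_{L^2_\xi}$
uniformly in $|\eta|>\delta_0$; once that is cited, your Plancherel step finishes the lemma in one line. Your alternative route via a Kawashima-type compensating operator $S(\omega)$, $\omega=\eta/|\eta|$, is a valid self-contained replacement---one adds $\lambda\frac{|\eta|}{1+|\eta|^2}\mathrm{Re}\langle iS(\omega)\widehat{f},\widehat{f}\rangle_\xi$ to the energy, the antisymmetric coupling with $-i\xi\cdot\eta$ produces $-\frac{|\eta|^2}{1+|\eta|^2}|\mathrm{P}_0\widehat{f}|^2$, and combined with the coercivity \eqref{coercivity} this closes a Gronwall inequality with rate $\gtrsim\frac{\delta_0^2}{1+\delta_0^2}$ for $|\eta|>\delta_0$---but as written it is only named, not carried out; if you take that route you must actually exhibit $S(\omega)$ and verify the cross terms, which involves the structure of $\mathrm{P}_0\xi\mathrm{P}_0$ and $\mathrm{P}_0\xi\mathrm{P}_1$.
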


In order to study the long wave part $\mathbb{G}_{L}^{t}$, we need to decompose it further into the fluid part and non-fluid part, i.e. $\mathbb{G}_{L}^{t}=\mathbb{G}_{L;0}^{t}+\mathbb{G}_{L;\perp}^{t}$, where
\begin{equation}\begin{array}{l}\label{bot.2.g}
\displaystyle \mathbb{G}^{t}_{L;0}f_{0}= \int_{|\eta|<\de_{0}}\sum_{j=0}^{4}e^{\varrho_{j}(\eta )t}e^{i \eta x}\big<e_{j}(-\eta  ), \hat{f_{0}}\big>_{\xi}e_{j}(\eta  )d\eta\,,
\\ \\
\displaystyle \mathbb{G}^{t}_{L;\perp}f_{0}= \int_{|\eta|<\de_{0}}e^{i \eta x}e^{(-i \xi\cdot\eta +L)t}\Pi_{\eta}^{\perp}\hat{f}_{0}d\eta\,.
\end{array}\end{equation}
For the non-fluid long wave part, it is easy to get the following property:
\begin{proposition}\label{nonflu_long}{\rm(Non-fluid long wave $\mathbb{G}^{t}_{L;\perp}$)} Let $\ga\geq -2$ and $f_{0}\in L^{2}$, there exists a constant $c>0$ such that
\begin{equation}\label{bot.2.h}
\|\mathbb{G}^{t}_{L;\perp}f_{0}\|_{H^{s}_{x}L^{2}_{\xi}}\lesssim e^{-ct}\|f_{0}\|_{L^{2}_{x}L^{2}_{\xi}}
\end{equation}
for any $s\geq0$.
\end{proposition}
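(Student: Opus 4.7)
The plan is to reduce the claim to a direct application of Plancherel's theorem together with the spectral gap estimate already established in Lemma \ref{pr12}(ii), exploiting the fact that the Fourier variable is confined to a bounded ball $\{|\eta|<\delta_0\}$ so that polynomial weights in $\eta$ cost nothing.

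First I would observe that, by definition,
\[
\widehat{\mathbb{G}^{t}_{L;\perp}f_{0}}(\eta,\xi)=\mathbf{1}_{\{|\eta|<\delta_{0}\}}\,e^{(-i\xi\cdot\eta+L)t}\Pi_{\eta}^{\perp}\widehat{f}_{0}(\eta,\xi).
\]
Plancherel's theorem in the $x$-variable then gives
\[
\|\mathbb{G}^{t}_{L;\perp}f_{0}\|_{H^{s}_{x}L^{2}_{\xi}}^{2}\;\lesssim\;\int_{|\eta|<\delta_{0}}(1+|\eta|^{2})^{s}\,\big|e^{(-i\xi\cdot\eta+L)t}\Pi_{\eta}^{\perp}\widehat{f}_{0}(\eta,\cdot)\big|_{L^{2}_{\xi}}^{2}\,d\eta.
\]
Since the integration domain is $|\eta|<\delta_{0}$, the Sobolev weight is uniformly bounded, $(1+|\eta|^{2})^{s}\leq(1+\delta_{0}^{2})^{s}$, and can be pulled out of the integral.

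Next, I would invoke the spectral decomposition in Lemma \ref{pr12}(ii): on the invariant subspace $\Pi_{\eta}^{\perp}L^{2}_{\xi}$, the semigroup $e^{(-i\xi\cdot\eta+L)t}$ satisfies the uniform decay bound
\[
\big|e^{(-i\xi\cdot\eta+L)t}\Pi_{\eta}^{\perp}\widehat{f}_{0}(\eta,\cdot)\big|_{L^{2}_{\xi}}\;\leq\;e^{-Ct}\big|\widehat{f}_{0}(\eta,\cdot)\big|_{L^{2}_{\xi}}
\]
for all $|\eta|<\delta_{0}$, where $C$ may be taken as the spectral gap $\tau$ from that lemma. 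Substituting this pointwise-in-$\eta$ bound into the Plancherel identity and applying Plancherel once more in reverse to recover $\|f_{0}\|_{L^{2}}$ yields
\[
\|\mathbb{G}^{t}_{L;\perp}f_{0}\|_{H^{s}_{x}L^{2}_{\xi}}^{2}\;\lesssim\;(1+\delta_{0}^{2})^{s}\,e^{-2Ct}\,\|f_{0}\|_{L^{2}}^{2},
\]
which is the claimed estimate with $c=C$.

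There is essentially no real obstacle here; the statement is a routine consequence of the spectral information already in hand. The only thing worth being careful about is that the spectral projector $\Pi_{\eta}^{\perp}$ is defined as the complementary projector onto the part of $\operatorname{Spec}(\eta)$ with real part $\leq -\tau$, so the decay rate is genuinely uniform in $\eta$ for $|\eta|<\delta_{0}$; and that the gain of $H^{s}_{x}$ regularity is free simply because the long-wave cutoff makes $|\eta|$ bounded — no smoothing mechanism from $L$ is needed for this particular step.
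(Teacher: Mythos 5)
Your proof is correct and is precisely the routine argument the paper has in mind (the paper omits the proof, calling it "easy to get"): the claim reduces to Plancherel plus the uniform semigroup decay on $\Pi_{\eta}^{\perp}L^{2}_{\xi}$ from Lemma \ref{pr12}(ii), with the $H^{s}_{x}$ gain coming for free from the boundedness of $\{|\eta|<\delta_{0}\}$.
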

By the detailed information of spectrum and eigenfunctions, we are able to estimate the long wave fluid part, which gives the leading order of solution at large time.
\begin{proposition}\label{flu_smooth}{\rm(Fluid Wave $\mathbb{G}_{L;0}^{t}$)}
	Let $\textbf{c}=\sqrt{5/3}$ be the sound speed associated with the normalized
	global Maxwellian. Let $f_0$ be compactly supported in $x$.
	
	(1) For $-1\leq\gamma\leq1$ and any given Mach number $\mathbb{M}>1$,
	there exists positive constant $C$ such that for $\left|x\right|\leq\left(\mathbb{M}+1\right)\textbf{c}t$,
	\[
	\left|\mathbb{G}^{t}_{L;0}f_{0}\right|_{L_{\xi}^{2}}  \leq C
	\left[
	\begin{array}[c]{l}
	\left(1+t\right)^{-2}e^{-\frac{\left(\left|x\right|-\textbf{c}t\right)^{2}}{Ct}}+\left(1+t\right)^{-3/2}e^{-\frac{\left|x\right|^{2}}{Ct}}\\[2mm]
	+\textbf{1}_{\{\left|x\right|\leq \textbf{c}t\}} \left(1+t\right)^{-3/2}\left(1+\frac{\left|x\right|^{2}}{1+t}\right)^{-3/2}+e^{-t/C}
	\end{array}
	\right]
	\left\|f_0\right\|_{L^1_x L^2_\xi}\,.
	\]
	
	(2) For $-2\leq\gamma<-1$ and any given positive integer $N$, there
	exists positive constant C such that
	\[
	\left|\mathbb{G}^{t}_{L;0}f_{0}\right|_{L_{\xi}^{2}} \leq C
	\left[
	\begin{array}[c]{l}
	 \left(1+t\right)^{-2}\left(1+\frac{\left(\left|x\right|-\textbf{c}t\right)^{2}}{1+t}\right)^{-N}+\left(1+t\right)^{-3/2}\left(1+\frac{\left|x\right|^{2}}{1+t}\right)^{-N}\\[2mm]
	+\textbf{1}_{\{\left|x\right|\leq \textbf{c}t\}}\left(1+t\right)^{-3/2}
	\left(1+\frac{\left|x\right|^{2}}{1+t}\right)^{-3/2}+ e^{-t/C}
	\end{array}
	\right]
	\left\|f_0\right\|_{L^1_x L^2_\xi}\,.
	\]
\end{proposition}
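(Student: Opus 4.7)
The plan is to insert the explicit expressions from Lemma~\ref{lem:smooth} into
\[
\mathbb{G}^{t}_{L;0}f_{0}(x,\xi)=\int_{|\eta|<\de_{0}}\sum_{j=0}^{4}e^{\varrho_{j}(\eta)t+i\eta\cdot x}\big<e_{j}(-\eta),\widehat{f}_{0}\big>_{\xi}\,e_{j}(\eta)\,d\eta,
\]
and reduce each summand to a handful of prototype $\eta$-integrals.  Since $|\widehat{f}_{0}(\eta,\cdot)|_{L^{2}_{\xi}}\leq\|f_{0}\|_{L^{1}_{x}L^{2}_{\xi}}$ uniformly in $\eta$, it suffices to estimate the $L^{2}_{\xi}$ norm of each $\eta$-integral with this $L^{\infty}_{\eta}$ control of the data.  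I would then split the five terms into the \emph{acoustic} pair $j=0,1$, which carry the imaginary phase $\mp i|\eta|(\mathbf{c}+\mathscr{A}_{0}(|\eta|^{2}))t$, and the \emph{dissipative} triple $j=2,3,4$, whose exponents are real and purely $O(|\eta|^{2})$.

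For the acoustic modes I factor
\[
e^{\varrho_{0}(\eta)t+i\eta\cdot x}=e^{-A_{0}|\eta|^{2}t}\,e^{i(\eta\cdot x-\mathbf{c}|\eta|t)}\,e^{-i|\eta|\mathscr{A}_{0}(|\eta|^{2})t+\mathscr{A}_{1}(|\eta|^{2})t},
\]
and similarly for $\varrho_{1}$; since $\mathscr{A}_{0}(s)=O(s)$ and $\mathscr{A}_{1}(s)=O(s^{2})$, the last factor is absorbed into $e^{-\frac{1}{2}A_{0}|\eta|^{2}t}$ on $|\eta|<\de_{0}$ for $\de_{0}$ small.  Using the parity decomposition of $e_{0},e_{1}$ in Lemma~\ref{lem:smooth} (even-in-$|\eta|$ coefficients $\mathtt{a}_{0,1}^{\ast}$ plus odd-in-$|\eta|$ coefficients $i|\eta|\mathtt{a}_{0,2}^{\ast}$), I pass to polar coordinates $\eta=\rho\om$, perform the sphere integration explicitly (which produces $\frac{\sin(\rho|x|)}{\rho|x|}$-type kernels), and reduce the problem to radial prototypes
\[
\mathcal{H}_{k}^{\pm}(t,x)=\int_{0}^{\de_{0}}\rho^{k}e^{-A_{0}\rho^{2}t}e^{i\rho(|x|\pm\mathbf{c}t)}b(\rho^{2})\,d\rho,
\]
where $b$ is a bounded smooth factor collecting all auxiliary coefficients.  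In case~(1) the integrand is analytic in $\rho$, so I would shift the contour $\rho\mapsto\rho+i\beta$ with $\beta\sim(|x|\pm\mathbf{c}t)/t$ and extend to $+\infty$ at the cost of an $e^{-t/C}$ tail; this yields the Huygens Gaussian $(1+t)^{-2}e^{-(|x|-\mathbf{c}t)^{2}/(Ct)}$.  In case~(2) only smoothness is available, so I would instead integrate by parts $N$ times in $\rho$, each step supplying a factor $(|x|\pm\mathbf{c}t)^{-1}$ and producing the algebraic profile $(1+t)^{-2}(1+(|x|-\mathbf{c}t)^{2}/(1+t))^{-N}$; inside $|x|\leq(\mathbb{M}+1)\mathbf{c}t$ the mirror branch with $|x|+\mathbf{c}t$ is controlled by the same inequality.

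For the dissipative modes $j=2,3,4$ the exponential factor is the pure Gaussian $e^{-A_{j}|\eta|^{2}t+\mathscr{A}_{j}(|\eta|^{2})t}$, and the amplitude contributed by $\big<e_{j}(-\eta),\widehat{f}_{0}\big>_{\xi}e_{j}(\eta)$ is polynomial in $\eta$ (with the linear factor $\sum\eta_{j}\chi_{j}$ appearing in the $e_{2}$ case, per Lemma~\ref{lem:smooth}).  Direct Fourier inversion yields the diffusion wave $(1+t)^{-3/2}e^{-|x|^{2}/(Ct)}$ in case~(1), and $(1+t)^{-3/2}(1+|x|^{2}/(1+t))^{-N}$ in case~(2) via $N$-fold integration by parts in $\eta$.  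The high-frequency tail $|\eta|>\de_{0}$ produces only $e^{-t/C}$ by Lemma~\ref{pr12}(i), which combined with Proposition~\ref{nonflu_long} accounts for the residual $e^{-t/C}$ in the statement.

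The main obstacle is the rigorous identification of the Riesz wave $(1+t)^{-3/2}(1+|x|^{2}/(1+t))^{-3/2}\mathbf{1}_{\{|x|\leq\mathbf{c}t\}}$.  This wave is not produced by any single mode; rather, it emerges from a delicate recombination of the two acoustic branches with the matching quadratic-in-$|\eta|$ amplitudes from the dissipative modes, where oscillatory cancellations outside $|x|<\mathbf{c}t$ leave precisely the Riesz profile inside the Mach cone.  In case~(1) this recombination follows Liu--Yu~\cite{[LiuYu2]} using the analyticity granted by Lemma~\ref{lem:smooth}.  In case~(2) the loss of analyticity forces us to work with smooth Fourier multipliers: each integration by parts may only hit the coefficients $\mathscr{A}_{j}$ and $\mathtt{a}_{k,l}^{j}$ finitely many times, and tracking these derivatives uniformly in $N$ while keeping the implicit constants finite is the principal analytic difficulty of the proposition.
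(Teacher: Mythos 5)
Your outline correctly identifies the overall strategy (pass to Fourier, insert the spectral data from Lemma~\ref{lem:smooth}, control $|\widehat f_0|_{L^2_\xi}$ by $\|f_0\|_{L^1_x L^2_\xi}$, and then treat the analytic case by contour shifting and the smooth case by $N$-fold integration by parts), and this does match the two key technical lemmas that the paper cites (Lemma~\ref{heatkernel_ana} for $-1\leq\gamma\leq 1$, Lemma~\ref{heatkernel_smo} for $-2\leq\gamma<-1$). However, your \emph{decomposition} of the five modes is not the one that makes the argument go through, and as a result your account of the Riesz wave misidentifies its source.

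You split the sum over $j$ into ``acoustic'' modes $j=0,1$ and ``dissipative'' modes $j=2,3,4$ and propose to estimate each summand separately after passing to polar coordinates and reducing to radial prototypes $\mathcal{H}_k^\pm$. This fails because the individual eigenfunctions $e_j(\eta)$ are \emph{not} smooth (let alone analytic) functions of the three-dimensional variable $\eta$ near $\eta=0$: they carry factors of $\om=\eta/|\eta|$ (explicitly $\om\cdot\overline\chi$ in $e_0,e_1$, and $g^{-1}\cdot\chi_2$, $g^{-1}\cdot\chi_3$ in $e_3,e_4$, where $g$ depends on $\eta/|\eta|$). So neither Lemma~\ref{heatkernel_ana} nor Lemma~\ref{heatkernel_smo} applies mode by mode, and the reduction to a single radial integral after ``performing the sphere integration explicitly'' is not available without first removing the directional singularity. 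What the paper actually invokes is the Liu--Yu reorganization into Huygens, contact, rotational and Riesz \emph{pairs}: the combinations $|e_0\rangle\langle e_0|+|e_1\rangle\langle e_1|$, $|e_0\rangle\langle e_1|+|e_1\rangle\langle e_0|$, $|e_3\rangle\langle e_3|+|e_4\rangle\langle e_4|$, etc., are what replace products of $\om_i$'s by the projections $\om\otimes\om$ and $I-\om\otimes\om$; the symmetry relations of Lemma~\ref{lem:smooth} are precisely what makes the scalar coefficients in each pair depend only on $|\eta|^2$, hence smooth (resp.\ analytic) in $\eta$ once the Riesz multiplier $\eta_i\eta_j/|\eta|^2$ is factored out.

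Consequently, the Riesz wave does not come from a ``recombination of the two acoustic branches with the matching quadratic-in-$|\eta|$ amplitudes from the dissipative modes,'' nor from any oscillatory cancellation between them. It comes from the degree-zero homogeneous factors $\eta_i\eta_j/|\eta|^2$ that survive the pairing of $e_3,e_4$ (and the cross-terms in $e_0,e_1$). These are literally Riesz transforms applied to a heat-type kernel, and the slow algebraic profile $(1+t)^{-3/2}(1+|x|^2/(1+t))^{-3/2}\mathbf{1}_{\{|x|\leq\mathbf{c}t\}}$ is exactly the known large-time estimate of Riesz transforms of heat kernels; it is not an interference effect between acoustic and diffusive modes. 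Until you redo the mode decomposition as a pairing and pull out the Riesz multipliers explicitly, the estimate for $j=3,4$ and the cross-terms is not justified, and the IBP scheme in the smooth case cannot be run because $\partial_\eta^\beta e_j$ blows up at $\eta=0$ for a single $j$.
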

\begin{proof}
	When $-1\leq\gamma\leq1$, by Lemma \ref{lem:smooth}, this corresponds to analytic
	case and hence complex analytic technique is applicable. This lemma
	follows from similar calculations as those in sections 7.4-7.5 \cite{[LiuYu1]}.
	As for $-2\leq\gamma<-1$, this corresponds to merely smooth case. Nevertheless, one
	can still use the framework of the above proof, such as Huygens pair, contact pair, rotational pair
and Riesz pair decomposition. But when following the argument, one needs to replace "analytic" by "smooth" and
 complex analytic techniques by real variables techniques accordingly. Here the smooth part of Lemma \ref{lem:smooth}
 is necessary in the calculations. We only list two crucial lemmas which are used for analytic case and smooth case
 respectively in estimate and omit the details.
\begin{lemma}
[{Reformulation of Lemma 7.11, \cite{[LiuYu1]}}]\label{heatkernel_ana} Suppose
that $g(\eta,t,\xi )$ is analytic in $\eta$ for $|\eta|<\delta_{0}\ll1$ and
satisfies
\[
|g(\eta,t,\xi )|_{L^{2}_{\xi }}\lesssim e^{-A|\eta|^{2}t+O(|\eta|^{4})t}\,,
\]
for some $A>0$. Then in the region of $|x|<(\mathfrak{M}+1)t$, $\mathfrak{M}$
is any given positive constant, there exists a constant $C$ such that the
following inequality holds:
\[
\displaystyle \left|  \int_{|\eta|<\delta_{0}} e^{i x\cdot\eta} \eta^{\alpha} g(\eta,t,
\xi )d\eta\right|  _{L^{2}_{\xi }} \leq C\left[  (1+t)^{-\frac{3+|\alpha|}{2}} e^{-\frac
{|x|^{2}}{Ct}} +e^{-t/C} \right]  .
\]

\end{lemma}

\begin{lemma}
[{Reformulation of Lemma 2.2, \cite{[LiuWang]}}]\label{heatkernel_smo} Let
$x,\eta,\xi \in\mathbb{R}^{3}$. Suppose $g(\eta,t,\xi )$ has compact support in the
variable $\eta$, and there exists a constant $b>0$, such that $g(\eta,t,\xi )$
satisfies
\[
\left\vert D_{\eta}^{\beta}(g(\eta,t,\xi ))\right\vert _{L_{\xi }^{2}}\leq
C_{\beta}(1+t^{|\beta|/2})e^{-b|\eta|^{2}t}%
\]
for any multi-index $\beta$ with $|\beta|\leq2N$,
then there exists a positive constant $C_{N}$ such that
\[
\left\vert \int_{|\eta|<\delta_{0}}e^{ix\cdot\eta}g(\eta,t,\xi )d\eta\right\vert
_{L_{\xi }^{2}}\leq C_{N}\left[  (1+t)^{-3/2}B_{N}(|x|,t)+e^{-t/C_N}\right]  ,
\]
where $N$ is any fixed integer and
\[
B_{N}(|x|,t)=\left(  1+\frac{|x|^{2}}{1+t}\right)^{-N}.
\]

\end{lemma}
	From these two lemmas, one can see the origin of difference between heat kernel in analytic case and algebraic decay in smooth case.
\end{proof}

The following lemma is the regularization effect for the linearized Landau equation:
\begin{lemma}\label{improve*}
Let $f$ be the solution to  equation \eqref{in.1.c}. There exists constant $C$ such that for $t\geq 1$
\[
\left\|f(t)\right\|_{H^{2}_x L^2_\xi}\leq C\left\|f_{0} \right\|_{L^2 (\mathcal{M})},
\]
where
\begin{equation}\label{eq_l13.1}
\mathcal{M}\equiv \begin{cases}
1& \ga\in\left[0,1\right],\\
\left<\xi\right>^{2|\ga|}  & \ga \in \left[-2,0\right).
\end{cases}
\end{equation}
\end{lemma}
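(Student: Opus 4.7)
The plan is to combine the short-time smoothing of Lemma \ref{improve} with a simple long-time energy monotonicity. The semigroup $\mathbb{G}^{t}$ does not regularize in $x$ by itself for large $t$, so we pay the weight $\mathcal{M}$ once, up front, to create $H^{2}_{x}L^{2}_{\xi}$ regularity at time $t=1$, and then propagate that regularity forward using the coercivity of $L$ which costs nothing.

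\medskip

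\textbf{Step 1: short-time smoothing at $t=1$.} Apply Lemma \ref{improve} with the trivial weight $\varrho(x,\xi)\equiv 1$ at time $t=1$; this belongs to the admissible case $\gamma\in[-2,1]$ of \eqref{weight-functions}. It yields directly
\[
\left\Vert f(1)\right\Vert _{H^{2}_{x}L^{2}_{\xi}}\;\lesssim\;\left\Vert f_{0}\right\Vert _{L^{2}(\mathcal{M})}.
\]

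\medskip

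\textbf{Step 2: propagation for $t\geq 1$.} Because $L$ acts only in the velocity variable, the equation \eqref{in.1.c} commutes with any spatial derivative $\pa_{x}^{\alpha}$, so $g_{\alpha}:=\pa_{x}^{\alpha}f$ solves the same linearized Landau equation. Taking the $L^{2}_{x,\xi}$ inner product of that equation with $g_{\alpha}$, the transport term $\xi\cdot\nabla_{x}g_{\alpha}$ contributes zero after integration in $x$, and the coercivity estimate \eqref{coercivity} of Lemma \ref{prop1}(v) gives
\[
\frac{1}{2}\frac{d}{dt}\left\Vert g_{\alpha}(t)\right\Vert _{L^{2}}^{2}
=\left\langle Lg_{\alpha},g_{\alpha}\right\rangle _{L^{2}_{x,\xi}}
\leq -\nu_{0}\left\Vert \mathrm{P}_{1}g_{\alpha}(t)\right\Vert _{L^{2}_{\sigma}}^{2}
\leq 0.
\]
Hence $t\mapsto\left\Vert \pa_{x}^{\alpha}f(t)\right\Vert _{L^{2}}$ is non-increasing for every $\alpha$; summing over $|\alpha|\leq 2$ gives, for all $t\geq 1$,
\[
\left\Vert f(t)\right\Vert _{H^{2}_{x}L^{2}_{\xi}}\leq\left\Vert f(1)\right\Vert _{H^{2}_{x}L^{2}_{\xi}}\lesssim\left\Vert f_{0}\right\Vert _{L^{2}(\mathcal{M})},
\]
which is the claim.

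\medskip

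\textbf{Main obstacle.} The only non-trivial ingredient is Lemma \ref{improve} itself, whose proof is deferred to section \ref{Hypo}; once that hypoelliptic-type estimate is accepted, the present lemma is essentially a bookkeeping result. The choice of the large weight $\mathcal{M}=\left\langle \xi\right\rangle ^{2|\gamma|}$ in the soft case $\gamma\in[-2,0)$ is precisely what Lemma \ref{improve} requires to absorb the $\xi$-derivatives produced by the hypoelliptic scheme when the diffusion degenerates for large $|\xi|$.
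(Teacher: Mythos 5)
Your proposal is correct and follows essentially the same route as the paper: short-time regularization via Lemma \ref{improve} with trivial weight, then propagation by energy estimate. The only difference is cosmetic — the paper invokes the energy estimate by citation (Guo, Lemma 4), while you write out the one-line computation showing $\|\partial_x^\alpha f(t)\|_{L^2}$ is non-increasing.
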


\begin{proof}
By Lemma \ref{improve} (take the weight function $\varrho=1$), one can improve the regularity of solution in finite time,
\[
\left\|f(1)\right\|_{H^{2}_x L^2_\xi }\leq C\left\|f(0) \right\|_{L^2 (\mathcal{M})}\,.
\]
Then standard energy estimate gives for $t>1$ (for example, cf \cite{[Guo]} Lemma 4),
\[
\left\|f(t)\right\|_{H^{2}_x L^2_\xi}\leq C\left\|f(1)\right\|_{H^{2}_x L^2_\xi }\,.
\]
This completes the proof of the lemma.
\end{proof}

Now, we are in the position to get the pointwise behavior of the short wave part $\mathbb{G}^{t}_{S}f_{0}$:
\begin{proposition}\label{short*} For $t\geq1$,
\[
\|\mathbb{G}^{t}_{S}f_{0}(t)\|_{L^{\infty}_{x}L^{2}_{\xi}}\lesssim e^{-ct}\|f_{0}\|_{L^2(\mathcal{M})}\,.
\]
\end{proposition}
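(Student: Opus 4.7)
The goal is to convert the $L^{2}$ short-wave decay of Lemma \ref{short} into a pointwise-in-$x$ bound, which requires two additional ingredients: a Sobolev embedding in three space dimensions, and spatial regularity of the solution at the cost of a stronger norm on the initial data. The key structural observation is that the short-wave projector $P_{S}=\mathbf{1}_{\{|\eta|>\delta_{0}\}}$ acts as a spatial Fourier multiplier, hence it commutes both with the full solution operator $\mathbb{G}^{t}$ and with every spatial derivative $\pa_{x}^{\al}$. In particular, $\mathbb{G}^{t}_{S}=P_{S}\mathbb{G}^{t}=\mathbb{G}^{t-1}_{S}\mathbb{G}^{1}$, which is the factorization that allows us to use one unit of time for regularization and the remaining $t-1$ units for decay.

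The plan is to split $t=1+(t-1)$ and estimate as follows. First, apply the Sobolev embedding $H^{2}_{x}\hookrightarrow L^{\infty}_{x}$ in $\R^{3}$ to obtain
\[
\|\mathbb{G}^{t}_{S}f_{0}\|_{L^{\infty}_{x}L^{2}_{\xi}}\lesssim \|\mathbb{G}^{t}_{S}f_{0}\|_{H^{2}_{x}L^{2}_{\xi}}=\sum_{|\al|\leq 2}\|\pa_{x}^{\al}\mathbb{G}^{t-1}_{S}\mathbb{G}^{1}f_{0}\|_{L^{2}}.
\]
Next, using that $\pa_{x}^{\al}$ commutes with $\mathbb{G}^{t-1}_{S}$ and then applying the $L^{2}$ decay of the short wave (Lemma \ref{short}) with $f_{0}$ replaced by $\pa_{x}^{\al}\mathbb{G}^{1}f_{0}$, we get
\[
\sum_{|\al|\leq 2}\|\mathbb{G}^{t-1}_{S}\pa_{x}^{\al}\mathbb{G}^{1}f_{0}\|_{L^{2}}\lesssim e^{-c(t-1)}\sum_{|\al|\leq 2}\|\pa_{x}^{\al}\mathbb{G}^{1}f_{0}\|_{L^{2}}= e^{-c(t-1)}\|\mathbb{G}^{1}f_{0}\|_{H^{2}_{x}L^{2}_{\xi}}.
\]
Finally, invoking the regularization estimate in Lemma \ref{improve*} at $t=1$ controls the right-hand side by $\|f_{0}\|_{L^{2}(\mathcal{M})}$, and absorbing the harmless factor $e^{c}$ into the implied constant produces the claimed bound $e^{-ct}\|f_{0}\|_{L^{2}(\mathcal{M})}$.

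There is no real obstacle here, since all three ingredients — the commutativity of the Fourier multiplier $P_{S}$ with $\pa_{x}^{\al}$ and with $\mathbb{G}^{t}$, the exponential $L^{2}$ decay of $\mathbb{G}^{t}_{S}$, and the smoothing estimate gaining two spatial derivatives in one unit of time at the price of the weight $\mathcal{M}$ — have already been established. The only point worth double-checking is that the weight $\mathcal{M}$ appearing through Lemma \ref{improve*} (corresponding to the choice $\varrho\equiv1$ in Lemma \ref{improve}) matches the weight in the statement of the proposition, which it does by definition \eqref{eq_l13.1}. This three-line argument is what one would naturally call the \emph{$L^{\infty}_{x}L^{2}_{\xi}$ upgrade} of the short-wave estimate, and it relies essentially on the fact that all spatial regularity we need is gained for free after the first unit of time has elapsed.
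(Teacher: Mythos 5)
Your proof is correct, but it takes a different route from the paper's. The paper starts from the identity $\mathbb{G}^{t}_{S}f_{0}=f(t)-\mathbb{G}^{t}_{L}f_{0}$, derives from Lemma \ref{short} the exponentially decaying $L^{2}$ bound and from Lemma \ref{improve*} (plus the easy long-wave $H^{2}$ control coming from compact Fourier support) a \emph{time-uniform} $H^{2}_{x}L^{2}_{\xi}$ bound, and then obtains decaying $L^{\infty}_{x}L^{2}_{\xi}$ control by a Gagliardo--Nirenberg/Sobolev interpolation between the two; this produces $e^{-ct}$ with $c$ a fraction of the $L^{2}$ rate. You instead exploit the semigroup factorization $\mathbb{G}^{t}_{S}=\mathbb{G}^{t-1}_{S}\mathbb{G}^{1}$ together with the Fourier-multiplier commutation $[P_{S},\pa_{x}^{\al}]=0$: you apply the plain embedding $H^{2}_{x}\hookrightarrow L^{\infty}_{x}$, push the spatial derivatives through $\mathbb{G}^{t-1}_{S}$, invoke Lemma \ref{short} on each $\pa_{x}^{\al}\mathbb{G}^{1}f_{0}$, and then use Lemma \ref{improve*} only at $t=1$. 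Your route avoids the interpolation step entirely (so it preserves the full short-wave rate rather than a fractional power of it), avoids the long-/short-wave decomposition at the $H^2$ level, and isolates the regularization to a single unit of time. The paper's route is shorter to state because it bypasses the $\mathbb{G}^{t-1}_{S}\mathbb{G}^{1}$ factorization and the commutation argument, at the modest cost of appealing to a product-type Sobolev interpolation. Both are sound and deliver the same conclusion.
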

\begin{proof}
Note that
\[
\mathbb{G}^{t}_{S}f_{0}=f(t)-\mathbb{G}^{t}_{L}f_{0}.
\]
By Lemma \ref{short} and \ref{improve*}, we have
\begin{align*}
& \left\|\mathbb{G}^{t}_{S}f_{0} \right\|_{L^2 }\lesssim e^{-C t} \left\|f_{0} \right\|_{L^2},\\
& \left\|\mathbb{G}^{t}_{S}f_{0} \right\|_{H^2_x L^2_\xi }\leq \left\|f(t) \right\|_{H^2_x L^2_\xi}+\left\|\mathbb{G}^{t}_{L}f_{0} \right\|_{H^2_x L^2_\xi }\lesssim \left\|f(0) \right\|_{L^2 (\mathcal{M})}.
\end{align*}
Use Sobolev inequality to conclude that
\[
\|\mathbb{G}^{t}_{S}f_{0}(t)\|_{L^{\infty}_{x}L^{2}_{\xi}}\lesssim e^{-ct}\|f_{0}\|_{L^2(\mathcal{M})}\,.
\]
\end{proof}
With Propositions \ref{nonflu_long}, \ref{flu_smooth} and \ref{short*}, we have the structure of $\mathbb{G}^t f_0$ for $\left<x\right>\leq 2M t$.

\begin{proposition} Let $f(t)=\mathbb{G}^t f_0$ be the solution to the linearized Landau equation \eqref{in.1.c}, and let $\textbf{c}=\sqrt{5/3}$
be the sound speed associated with normalized global Maxwellian. If $t\geq 1$, then
\begin{enumerate}
\item For $-2\leq\gamma<-1$ and any given positive integer $N$, there exists positive constant $C$ (depending on $N$) such that
\[
	\left|\mathbb{G}^{t}f_{0}\right|_{L_{\xi}^{2}}  \leq C\left[
	\begin{array}[c]{l}
		 \left(1+t\right)^{-2}e^{-\frac{\left(\left|x\right|-\textbf{c}t\right)^{2}}{Ct}}+\left(1+t\right)^{-3/2}e^{-\frac{\left|x\right|^{2}}{Ct}}\\[2mm]
		+\textbf{1}_{\{\left|x\right|\leq \textbf{c}t\}} \left(1+t\right)^{-3/2}\left(1+\frac{\left|x\right|^{2}}{1+t}\right)^{-3/2}+e^{-t/C}
	\end{array}
	\right]|||f_0||| \, .
\]
\item For $-1\leq\gamma\leq 1 $, there exists positive constant $C$  such that
\begin{align}
\left| \mathbb{G}^t f_0 \right|_{L^2_\xi} &\leq C\left[
\begin{array}[c]{l}
	 \left(1+t\right)^{-2}\left(1+\frac{\left(\left|x\right|-\textbf{c}t\right)^{2}}{1+t}\right)^{-N}+\left(1+t\right)^{-3/2}\left(1+\frac{\left|x\right|^{2}}{1+t}\right)^{-N}\\[2mm]
	+\textbf{1}_{\{\left|x\right|\leq \textbf{c}t\}}\left(1+t\right)^{-3/2}
	\left(1+\frac{\left|x\right|^{2}}{1+t}\right)^{-3/2}+ e^{-t/C}
\end{array}
\right]|||f_0||| \, .\nonumber
\end{align}
\end{enumerate}
Here $\textbf{1}_{\{\cdot\}}$ is the indicator function,
\[
|||f_0|||\equiv \max \left\{ \left\|f_0\right\|_{L^{2} (\mathcal{M})},\left\|f_0\right\|_{L^1_x L^2_\xi} \right\},
\]
and
\[
\mathcal{M}\equiv \begin{cases}
1 & \ga\in\left[0,1\right],\\
\left<\xi\right>^{2 |\ga|}  & \ga \in \left[-2,0\right).
\end{cases}
\]
\end{proposition}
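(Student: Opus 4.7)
\medskip

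The plan is to combine the three ingredients already in place, namely the fluid long wave estimate, the non-fluid long wave estimate, and the short wave estimate, by using the orthogonal decomposition
\[
\mathbb{G}^{t}f_{0}=\mathbb{G}^{t}_{L;0}f_{0}+\mathbb{G}^{t}_{L;\perp}f_{0}+\mathbb{G}^{t}_{S}f_{0}
\]
that was established in \eqref{bot.2.e}--\eqref{bot.2.g}, and to estimate each piece pointwise in $(t,x)$ by the appropriate ingredient of $|||f_{0}|||$.

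First I would invoke Proposition \ref{flu_smooth} to control the fluid long wave $\mathbb{G}^{t}_{L;0}f_{0}$. In the analytic regime $-1\leq\gamma\leq1$ this yields the Huygens--diffusion--Riesz wave bound with Gaussian-type profiles, while in the smooth regime $-2\leq\gamma<-1$ one obtains the corresponding algebraic analogue $(1+(\cdot)^{2}/(1+t))^{-N}$; in both cases the right-hand side is $\|f_{0}\|_{L^{1}_{x}L^{2}_{\xi}}\leq |||f_{0}|||$. These are precisely the leading three terms appearing in each of the two cases of the proposition.

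Next, for the two remaining pieces I only need to show they contribute something of size $e^{-t/C}|||f_{0}|||$ uniformly in $x$. For $\mathbb{G}^{t}_{S}f_{0}$, Proposition \ref{short*} already produces $\|\mathbb{G}^{t}_{S}f_{0}\|_{L^{\infty}_{x}L^{2}_{\xi}}\lesssim e^{-ct}\|f_{0}\|_{L^{2}(\mathcal{M})}$, which is absorbed directly into $e^{-t/C}|||f_{0}|||$. For $\mathbb{G}^{t}_{L;\perp}f_{0}$, Proposition \ref{nonflu_long} gives $\|\mathbb{G}^{t}_{L;\perp}f_{0}\|_{H^{s}_{x}L^{2}_{\xi}}\lesssim e^{-ct}\|f_{0}\|_{L^{2}}$ for every $s\geq 0$. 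Choosing $s=2$ and invoking the Sobolev embedding $H^{2}_{x}\hookrightarrow L^{\infty}_{x}$ in $\mathbb{R}^{3}$ promotes this to an $L^{\infty}_{x}L^{2}_{\xi}$ bound with the same exponential rate (possibly with a slightly smaller constant $c>0$), again absorbable into the $e^{-t/C}$ term. Summing the three estimates and using $\|f_{0}\|_{L^{2}}\leq\|f_{0}\|_{L^{2}(\mathcal{M})}\leq|||f_{0}|||$ yields the stated bound.

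There is no genuine analytic obstacle here, since the three components have already been analyzed in the preceding propositions; the content is bookkeeping. The only point that requires mild care is the Sobolev step for $\mathbb{G}^{t}_{L;\perp}f_{0}$: one has to pick $s>3/2$ so that the embedding holds in three space dimensions and make sure that the resulting $\|f_{0}\|_{L^{2}}$ norm is dominated by $|||f_{0}|||$. Everything else, including matching the two cases $-1\leq\gamma\leq1$ and $-2\leq\gamma<-1$ to the analytic and smooth versions of Proposition \ref{flu_smooth} respectively, follows immediately from the statements already assembled.
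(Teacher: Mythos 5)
Your proof is correct and follows exactly the same route as the paper, which simply combines Propositions \ref{nonflu_long}, \ref{flu_smooth}, and \ref{short*} via the decomposition $\mathbb{G}^{t}=\mathbb{G}^{t}_{L;0}+\mathbb{G}^{t}_{L;\perp}+\mathbb{G}^{t}_{S}$. You spell out the Sobolev embedding step needed to turn the $H^{s}_{x}L^{2}_{\xi}$ bound of Proposition \ref{nonflu_long} into a pointwise-in-$x$ bound, which the paper leaves implicit; otherwise the arguments are identical.
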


\section{Wave outside the finite Mach number region}\label{outside}

In this section, we will study the solution outside the finite Mach number region, i.e., the behavior of the solution when $|x|$ large. In order to do this, we need some weighted energy estimates.

\subsection{Weighted energy estimate: $-1\leq \ga\leq 1$}

\begin{proposition}\label{weig_1} Let $-1\leq\ga\leq 1$ and $f$ solves the linearized Landau equation \eqref{in.1.c}. Consider the weight function
\[
w(x,t)=e^{\frac{\left\langle x\right\rangle -Mt}{D}},
\]
where $D$ and $M$ are large number to be choose later. Then we have
$$
\|wf(t)\|_{H^{2}_{x}L^{2}_{\xi}}\lesssim \|wf(1)\|_{H^{2}_{x}L^{2}_{\xi}}\quad \mbox{for}~ t\geq1.
$$
\end{proposition}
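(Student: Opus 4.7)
The plan is to derive an equation for the weighted unknown $g\equiv wf$ and run a standard energy estimate in $H^2_x L^2_\xi$, exploiting the fact that $w$ depends only on $(t,x)$ and therefore commutes with the $\xi$-only operator $L$. Using $\partial_t w = -(M/D)w$ and $\nabla_x w = (x/(D\langle x\rangle))w$, a direct computation yields
\[
\partial_t g + \xi\cdot\nabla_x g - Lg = a(t,x,\xi)\,g, \qquad a(t,x,\xi)\equiv -\frac{M}{D}+\frac{\xi\cdot x}{D\langle x\rangle},
\]
so every effect of the weight is encoded in the smooth multiplier $a$, which grows at most linearly in $\xi$ and is bounded in $x$.

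For the $L^2$-level estimate I take the $L^2_{x,\xi}$ inner product with $g$: the transport term vanishes after integration by parts in $x$, the collision term contributes $\int\langle -Lg,g\rangle_\xi\,dx \geq \nu_0\|\mathrm{P}_1 g\|_{L^2_\sigma}^2$ by the coercivity \eqref{coercivity}, and the source is estimated by
\[
\int a\,g^2\,dx\,d\xi \leq -\frac{M}{D}\|g\|_{L^2}^2 + \frac{1}{D}\int |\xi|\,g^2\,dx\,d\xi.
\]
Since $\gamma\geq -1$, one has $|\xi|\lesssim |\xi|^{\gamma+2}+1$, so Lemma \ref{lem-a} gives $\int |\xi|\,g^2 \lesssim \int\langle -Lg,g\rangle_\xi\,dx + \|g\|_{L^2}^2$. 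Choosing $D$ large enough that the resulting $\langle -Lg,g\rangle$ contribution carries a prefactor at most $1/2$, and then $M$ large enough to dominate the leftover $\|g\|_{L^2}^2$ constant, produces $\frac{d}{dt}\|g\|_{L^2}^2\leq 0$.

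For the $H^2_x$ estimate I apply $\partial_x^{\alpha}$, $|\alpha|\leq 2$, to the equation and repeat the computation with $g_\alpha\equiv \partial_x^\alpha g$ in place of $g$. Since $|\partial_x^\beta a|\lesssim |\xi|/D$ for every $|\beta|\geq 1$, the commutator $\partial_x^\alpha(ag)-a\,g_\alpha$ is a finite sum of products $(\partial_x^\beta a)\,g_{\alpha-\beta}$ with $1\leq|\beta|\leq|\alpha|$; each such term is bounded by Cauchy-Schwarz against a linear $\xi$-weight acting on lower-order derivatives and absorbed exactly as before by Lemma \ref{lem-a}, each at the price of a factor $1/D$. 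Summing over $|\alpha|\leq 2$ and keeping $D$, $M$ large yields
\[
\frac{d}{dt}\sum_{|\alpha|\leq 2}\|g_\alpha\|_{L^2}^2 + \frac{1}{2}\sum_{|\alpha|\leq 2}\int\langle -Lg_\alpha,g_\alpha\rangle_\xi\,dx \leq 0,
\]
which, integrated from $1$ to $t$, gives $\|wf(t)\|_{H^2_xL^2_\xi}\lesssim \|wf(1)\|_{H^2_xL^2_\xi}$.

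The main obstacle is the linear $\xi$-growth of the source $ag$ created by $\xi\cdot\nabla_x w$: the coercivity \eqref{coercivity} alone controls only $\mathrm{P}_1 g$, not a $|\xi|$-weighted $L^2$ norm. The restriction $\gamma\geq -1$ is precisely what allows us to absorb this growth through Lemma \ref{lem-a}, and the freedom to choose $D$ large (to tame the constant from that lemma) and then $M$ large (to dominate the leftover $\|g\|_{L^2}^2$) is what makes the estimate close. For $\gamma<-1$ this mechanism fails and one must resort to the $\xi$-dependent weight treated in Proposition \ref{weig_2}.
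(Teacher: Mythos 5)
Your proposal is correct and follows essentially the same path as the paper: define $u=wf$, derive the transport equation with the extra multiplier $\tfrac{1}{D}\bigl(M-\tfrac{x\cdot\xi}{\langle x\rangle}\bigr)$, use coercivity together with Lemma \ref{lem-a} to absorb the $|\xi|$-growth of the source (valid precisely because $\gamma\geq-1$), handle the commutator in the $x$-derivative estimates in the same way, and close by taking $D$, then $M$, large. The only cosmetic difference is that the paper writes out the $H^1_x$ level and remarks that $H^2_x$ is identical, whereas you treat all $|\alpha|\leq 2$ uniformly.
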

\begin{proof}
For simplicity, we only prove the $H^{1}_{x}$ estimate. The $H_{x}^{2}$ estimate is completely the same as $H_{x}^{1}$ estimate. Let $u=wf$, then $u$ solves the equation
$$
\pa_{t}u+\xi\cdot\nabla_{x}u+\frac{1}{D} \Big(M-\frac{x\cdot \xi}{\left<x\right>}\Big)u=Lu\,.
$$
For the $L^2$ estimate, the energy estimate gives
\begin{align*}
\frac{1}{2}\frac{d}{dt}\|u\|^{2}_{L^{2}}+\frac{1}{D}\int_{\R^{6}} \Big(M-\frac{x\cdot \xi}{\left<x\right>}\Big)u^{2}d\xi dx
+\int_{\R^{6}} (-Lu)u d\xi dx  =0\,.
\end{align*}
For the first derivative estimate, note that $\nabla_{x}u$ satisfies the equation
$$
\pa_{t}\left(\nabla_{x}u\right)+\xi\cdot\nabla_{x}\left(\nabla_{x}u\right)+\frac{1}{D} \Big(M-\frac{x\cdot \xi}{\left<x\right>}\Big)\left(\nabla_{x}u\right)=L\left(\nabla_{x}u\right)+\frac{1}{D}\left[\nabla_{x},\frac{x\cdot \xi}{\left<x\right>}\right]u\,,
$$
direct calculation gives
\begin{align*}
\frac{1}{2}\frac{d}{dt}\|\nabla_{x}u\|^{2}_{L^{2}}+\frac{1}{D}\int_{\R^{6}} \Big(M-\frac{x\cdot \xi}{\left<x\right>}\Big)|\nabla_{x}u|^{2}d\xi dx
+\int_{\R^{6}} \left(-L\nabla_{x}u,\nabla_{x}u\right)d\xi dx  \leq \frac{1}{D}\int_{\R^{6}}|\xi||u|^{2}+|\xi||\nabla_{x}u|^{2}d\xi dx  \,.
\end{align*}
By Lemma \ref{lem-a}, we have
$$
\Big|\frac{1}{D}\int_{\R^{6}} |\xi||u|^{2}d\xi dx  \Big|\leq \frac{1}{D}\int_{\R^{6}} \Big[C_{2}u^{2}+C_{1}(-Lu)u\Big]d\xi dx\, ,
$$
and
$$
\Big|\frac{1}{D}\int_{\R^{6}} |\xi||\nabla_{x}u|^{2}d\xi dx  \Big|\leq \frac{1}{D}\int_{\R^{6}} \Big[C_{2}|\nabla_{x}u|^{2}+C_{1}\left(-L\nabla_{x}u,\nabla_{x}u\right)\Big]d\xi dx \,,
$$
this means
\begin{align*}
\frac{1}{2}\frac{d}{dt}\|u\|^{2}_{H^{1}_{x}L^{2}_{\xi}}&+\Big(\frac{M}{D}-\frac{2C_{2}}{D}\Big)\|u\|^{2}_{H^{1}_{x}L^{2}_{\xi}}
+\Big(1-\frac{C_{1}}{D}\Big)\int_{\R^{6}} (-Lu)u+\left(-L\nabla_{x}u,\nabla_{x}u\right)d\xi dx  \leq 0\,.
\end{align*}
This completes the proof of the lemma by choosing $M$, $D$ large enough.
\end{proof}

\subsection{Weighted energy estimate: $-2\leq \ga<-1$} Let us recall the weight function first. We set the exponent of the weight function to be
\begin{align*}
\vartheta(t,x,\xi)&=5\Big(\de(\left<x\right>-Mt)\Big)^{\frac{2}{1-\ga}}(1-\chi)
\\
&\quad+\bigg[(1-\chi)\Big(\de(\left<x\right>-Mt)\Big)\left<\xi\right>^{1+\ga}+3\left<\xi\right>^{2}\bigg]\chi\,,
\end{align*}
where we used the simplified notation
$$
\chi=\chi\Big( \de(\left<x\right>-Mt)\left<\xi\right>^{\ga-1} \Big)\,,
$$
$M$ is a large positive constant, $\de$ is a small positive constant, all of them need to be chosen later. The motivation of the weight function in this case comes from \cite{[cc]}. We define
$$H_{+}=\left\{(x,\xi): \de(\left<x\right>-Mt)>2\left<\xi\right>^{1-\ga} \right\}\,,$$
$$H_{0}=\left\{(x,\xi): \left<\xi\right>^{1-\ga}\leq\de(\left<x\right>-Mt)\leq 2\left<\xi\right>^{1-\ga} \right\}\,,$$
and
$$H_{-}=\left\{(x,\xi): \de(\left<x\right>-Mt)<\left<\xi\right>^{1-\ga} \right\}\,.$$

\begin{proposition}\label{weig_2} Let $-2\leq\ga< -1$ and $f$ solves the linearized Landau equation \eqref{in.1.c}, consider the weight function
\[
w=e^{\alpha\vartheta(t,x,\xi)}.
\]
Then we have
$$
\|wf(t)\|_{H^{2}_{x}L^{2}_{\xi}}\lesssim \|wf(1)\|_{H^{2}_{x}L^{2}_{\xi}}+t\left\|f_{0} \right\|_{L^2 (\mathcal{M})}\quad \mbox{for}~t\geq1.
$$
\end{proposition}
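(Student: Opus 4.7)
\medskip

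\noindent\textbf{Proof proposal for Proposition \ref{weig_2}.}

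My plan is to carry out a weighted $L^{2}$ energy estimate on $u=wf$, where the delicate velocity-dependent weight is designed so that the effective ``transport of the weight'' is dissipative in all three regions $H_{+}$, $H_{0}$, $H_{-}$. Starting from $\partial_{t}f+\xi\cdot\nabla_{x}f=Lf$, multiplying by $w^{2}f$ and integrating by parts in $x$, one obtains
\[
\frac{1}{2}\frac{d}{dt}\|u\|_{L^{2}}^{2}-\alpha\!\int\!\!\!\int w^{2}\bigl[(\partial_{t}+\xi\cdot\nabla_{x})\vartheta\bigr]f^{2}\,d\xi\,dx \;=\;\int\!\!\!\int w^{2}f\,Lf\,d\xi\,dx.
\]
Using $L=-\Lambda+K$ from Lemma \ref{prop1}(iv), I would rewrite $w^{2}fLf = uLu + u[L,w]f$, so that coercivity \eqref{coercivity} controls $-\int uLu \geq c_{0}|P_{1}u|_{L_{\sigma}^{2}}^{2}$ and the remaining task is to handle two families of terms: the streaming term $-\alpha(\partial_{t}+\xi\cdot\nabla_{x})\vartheta$ and the commutator $[L,w]f$.

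The weight analysis is then carried out region by region. On $H_{+}$, $\chi\equiv 0$ and $\vartheta$ is $\xi$-independent, so $(\partial_{t}+\xi\cdot\nabla_{x})\vartheta\sim -\delta(\delta(\langle x\rangle-Mt))^{(1+\gamma)/(1-\gamma)}(M-C|\xi|)$; since $|\xi|\lesssim (\delta(\langle x\rangle-Mt))^{1/(1-\gamma)}$ on $H_{+}$, choosing $M$ large and $\delta$ small makes this quantity bounded above by $-c_{0}M\delta(\delta(\langle x\rangle-Mt))^{(1+\gamma)/(1-\gamma)}$, which is precisely the sub-exponential dissipation rate driving the decay outside the Mach cone. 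On $H_{-}$, $\chi\equiv1$ and $\vartheta=\delta(\langle x\rangle-Mt)\langle\xi\rangle^{1+\gamma}+3\langle\xi\rangle^{2}$; the key point is that the extra Gaussian piece $3\alpha\langle\xi\rangle^{2}$ is \emph{small} (because $\alpha$ is small) so the weight remains dominated by a Gaussian in $\xi$, while the time derivative produces the analogous negative contribution $-c_{0}M\delta\langle\xi\rangle^{1+\gamma}$. The transition region $H_{0}$ has $|\nabla\chi|$ and $|\nabla_{\xi}\chi|$ supported only where $\delta(\langle x\rangle-Mt)\sim\langle\xi\rangle^{1-\gamma}$, and a direct computation shows these boundary contributions match in size across the two definitions of $\vartheta$, and are absorbable.

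The principal obstacle is the commutator $[L,w]f$, which is nontrivial precisely because $w$ depends on $\xi$ on $H_{-}\cup H_{0}$ (contrast with Proposition \ref{weig_1}). For the diffusion piece, one has
\[
w\Lambda f-\Lambda(wf)\;=\;-2(\nabla_{\xi}w)\cdot\sigma\nabla_{\xi}f-f\,\nabla_{\xi}\!\cdot\!\bigl[\sigma\nabla_{\xi}w\bigr],
\]
so pairing with $u$ and using Cauchy--Schwarz together with the $\sigma$-bounds in Lemma \ref{prop1}(iii) lets one absorb the worst part into $c_{0}|P_{1}u|_{L_{\sigma}^{2}}^{2}$, provided $\alpha|\nabla_{\xi}\vartheta|^{2}\sigma \le \varepsilon \sigma$ on the support of the weight; the choice $\alpha\ll 1$ and the small-$\xi$-growth built into $\vartheta$ on $H_{-}$ (where $\langle\xi\rangle^{1+\gamma}$ decays since $\gamma<-1$) make this work. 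For the integral piece, $wKf-K(wf)=\int\tilde k(\xi,\xi_{*})(w(\xi)-w(\xi_{*}))f(\xi_{*})\,d\xi_{*}$, and the refined space--velocity decomposition from the introduction is essential: one splits the $\xi_{*}$ integration according to whether $(x,\xi_{*})$ lies in $H_{\pm}$ or $H_{0}$ and uses the Gaussian decay of $\tilde k(\xi,\xi_{*})$ in $|\xi-\xi_{*}|$ against the (at most Gaussian) growth of $w(\xi)/w(\xi_{*})$; smallness of $\alpha$ ensures this ratio stays controlled.

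Finally, for the $H_{x}^{2}$ statement I would apply $\partial_{x}^{\alpha}$ for $|\alpha|\leq 2$ and repeat the above, noting the additional commutators $[\partial_{x}^{\alpha},w]$ produce lower-order weighted derivatives already controlled at the previous step, together with unweighted pieces of the form $\|\partial_{x}^{\beta}f\|$ with $|\beta|\leq 2$. By Lemma \ref{improve*}, $\|f(t)\|_{H_{x}^{2}L_{\xi}^{2}}\lesssim\|f_{0}\|_{L^{2}(\mathcal{M})}$ uniformly in $t\geq 1$, so these residual terms yield, after Grönwall and integration in time from $1$ to $t$, the additional linear-in-$t$ contribution $t\|f_{0}\|_{L^{2}(\mathcal{M})}$ in the statement. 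Collecting everything gives the claimed bound. The hard part will be the $K$-commutator: choosing $\alpha$ and the coefficient $3$ in front of $\langle\xi\rangle^{2}$ jointly so that the Gaussian weight does not destroy integrability of the Landau kernel, while simultaneously keeping the weighted transport strictly dissipative in all three regions.
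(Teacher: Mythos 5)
Your overall strategy -- weighted $L^{2}$ energy estimate on $u=wf$, decomposing the commutator of $L$ with the weight into a diffusion piece and an integral piece, then handling the streaming of the weight region by region -- is exactly the strategy of the paper's proof. However there are two genuine problems in the details.

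First, your description of $\vartheta$ on $H_{-}$ is wrong. On $H_{-}$ one has $\chi\equiv 1$, hence $1-\chi\equiv 0$, so the term $(1-\chi)\,\delta(\langle x\rangle-Mt)\langle\xi\rangle^{1+\gamma}$ vanishes identically and $\vartheta=3\langle\xi\rangle^{2}$ there, which is independent of $t$ and $x$. Consequently $\partial_{t}\vartheta=\xi\cdot\nabla_{x}\vartheta=0$ on $H_{-}$, so your assertion that ``the time derivative produces the analogous negative contribution $-c_{0}M\delta\langle\xi\rangle^{1+\gamma}$'' is false; there is neither a helpful nor a harmful streaming contribution on $H_{-}$. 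This is not fatal to the conclusion (zero is still fine), but it propagates into your later justification of the $\widetilde{\Lambda}$-commutator, where you attribute smallness of $\nabla_\xi\vartheta$ on $H_{-}$ to ``small growth $\langle\xi\rangle^{1+\gamma}$''; in fact $\nabla_\xi\vartheta\sim\xi$ on $H_{-}$, and the correct reason the quadratic-in-$\nabla_\xi\vartheta$ term is absorbable is that $(\nabla_\xi\vartheta,\sigma\nabla_\xi\vartheta)\lesssim\langle\xi\rangle^{\gamma+2}$ and the coefficient is $\alpha^{2}\ll 1$. (Incidentally, the paper avoids the mixed $\nabla_\xi w\cdot\sigma\nabla_\xi f$ term entirely: the conjugated operator $e^{\alpha\vartheta}\widetilde\Lambda e^{-\alpha\vartheta}$, paired with $u$, has its linear-in-$\nabla_\xi\vartheta$ contributions cancel after integration by parts, leaving only $\langle u,\widetilde\Lambda u\rangle + \alpha^{2}\langle u^{2}(\nabla_\xi\vartheta,\sigma\nabla_\xi\vartheta)\rangle$; your commutator route works but generates more terms to track.)

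Second, and more seriously, your treatment of the $\widetilde{K}$-commutator is too vague to be a proof, and it misses the one structural fact that makes the whole construction of $\vartheta$ work. You split the $\xi_{*}$-integration by which region $(x,\xi_{*})$ lies in and then assert ``Gaussian decay of $\tilde k$ against at most Gaussian growth of $w(\xi)/w(\xi_{*})$'' suffices. But on $H_{+}$, $\vartheta$ is not Gaussian in $\xi$ at all -- it equals $5(\delta(\langle x\rangle-Mt))^{2/(1-\gamma)}$, which grows without bound as $\langle x\rangle\to\infty$ and cannot be traded against the $\xi$-Gaussian in $\tilde k$. The reason this is harmless is precisely that $\vartheta$ is $\xi$-\emph{independent} on $H_{+}$, so when both $(x,\xi)$ and $(x,\xi_{*})$ lie in $H_{+}$ one has $\vartheta(\xi)-\vartheta(\xi_{*})=0$ and that region of the double integral contributes \emph{nothing}. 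The paper's decomposition into $A_{++}$, $A_{**}$, $A_{+*}\cup A_{*+}$ is designed exactly to isolate this: $A_{++}$ drops out, $A_{**}$ is handled by Taylor expansion plus a maximal-function estimate for the $|\xi-\xi_{*}|^{\gamma}$ singularity, and $A_{+*}\cup A_{*+}$ requires an additional argument (note $\vartheta(\xi)-\vartheta(\xi_{*})$ need not even be small there) which yields an $O(\alpha^{2})$ bound with weight $(\delta(\langle x\rangle-Mt))^{(1+\gamma)/(1-\gamma)}$. Your proposal does not explain why the mixed region $A_{+*}\cup A_{*+}$, where the ratio $w(\xi)/w(\xi_{*})$ compares an $x$-driven sub-exponential against a $\xi$-driven Gaussian, stays controlled; this is the hard step and it needs the precise form of the weight plus Taylor expansion, not just ``$\alpha$ small.''

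The final $H^{2}_{x}$ step and the appeal to Lemma \ref{improve*} to produce the linear-in-$t$ term are in the right spirit and match the paper's conclusion.
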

\begin{proof}
Let $u=wf$. Note that $u$ solves the equation
\begin{equation}\label{uu}
\pa_{t}u+\xi\cdot\nabla_{x}u-\alpha (\pa_{t}\vartheta +\xi\cdot\nabla_{x}\vartheta )u-e^{\alpha \vartheta }Le^{-\alpha \vartheta }u=0\,.
\end{equation}
The energy estimate gives
\begin{align*}
\frac{1}{2}\frac{d}{dt}\|u\|^{2}_{L^{2}}
=\int_{\R^{3}}\alpha \big<u, (\pa_{t}\vartheta +\xi\cdot\nabla_{x}\vartheta )u\big>_{\xi}dx
+\int_{\R^{3}} \big<u, e^{\alpha \vartheta }Le^{-\alpha \vartheta }u\big>_{\xi}dx\,.
\end{align*}
Note that $L=\widetilde{\Lambda}+\widetilde{K}$, for simplicity, define $\widetilde{K}_{w}f=e^{\alpha \vartheta }K(e^{-\alpha \vartheta }f)$, then
\begin{align}\label{L-Lw}
\big<u, e^{\alpha \vartheta }Le^{-\alpha \vartheta }u\big>_{\xi}&=\big<u, e^{\alpha \vartheta }\widetilde{\Lambda}e^{-\alpha \vartheta }u\big>_{\xi}
+\big<u, \widetilde{K}_{w}u\big>_{\xi}\nonumber
\\
&=\big<u, Lu\big>_{\xi}-\big<u, (\widetilde{K}-\widetilde{K}_{w})u\big>_{\xi}+\alpha^{2}\big<u^{2}\left(\nabla_{\xi}\vartheta , \sigma\nabla_{\xi}\vartheta\right) \big>_{\xi}\,.
\end{align}
For the first term of (\ref{L-Lw}), by Lemma \ref{prop1} \eqref{coercivity}, we have
$$
\int_{\R^{3}}\big<u, Lu\big>_{\xi}dx\leq- \|\mathrm{P}_{1}u\|^{2}_{L^{2}_{\sigma}}\,.
$$
For the second term of (\ref{L-Lw}), define
$$
A_{++}=\left\{(x,\xi,\xi_{*}): (x,\xi)\in H_{+}, (x,\xi_{*})\in H_{+}\right\}\,,
$$
$$
A_{**}=\left\{(x,\xi,\xi_{*}): (x,\xi)\in H_{0}\cup H_{-}, (x,\xi_{*})\in H_{0}\cup H_{-}\right\}\,,
$$
$$
A_{+*}=\left\{(x,\xi,\xi_{*}): (x,\xi)\in H_{+}, (x,\xi_{*})\in H_{0}\cup H_{-}\right\}\,,
$$
and
$$
 A_{*+}=\left\{(x,\xi,\xi_{*}): (x,\xi)\in H_{0}\cup H_{-}, (x,\xi_{*})\in H_{+}\right\}\,.
$$
Note that
\begin{align}\label{estimate-K}
&\quad\int_{\R^{3}}\big<u, (\widetilde{K}-\widetilde{K}_{w})u\big>_{\xi}dx\nonumber\\
&= \int_{\R^{3\times 3}}u(t,x,\xi)
\tilde{k}(\xi,\xi_{*})\left(1-e^{\alpha(\vartheta(t,x,\xi)-\vartheta(t,x,\xi_{*}))}\right)u(t,x,\xi_{*})d\xi d\xi_{*} dx\\
&= \int_{A_{++}}+\int_{A_{**}}+\int_{A_{+*}\cup A_{*+}}u(t,x,\xi)
\tilde{k}(\xi,\xi_{*})\left(1-e^{\alpha(\vartheta(t,x,\xi)-\vartheta(t,x,\xi_{*}))}\right)u(t,x,\xi_{*})d\xi d\xi_{*} dx\,,\nonumber
\end{align}
it is easy to see that $\vartheta(t,x,\xi)=\vartheta(t,x,\xi_{*})$ in $A_{++}$, this means no contribution in this region. For $A_{**}$,
if $\alpha$ small enough, by the Taylor expansion, we have
$$
\left|1-e^{\alpha(\vartheta(t,x,\xi)-\vartheta(t,x,\xi_{*}))}\right|\lesssim \alpha \left(|\xi|^{2}+|\xi_{*}|^{2}\right)
e^{\tilde{\alpha}(\vartheta(t,x,\xi)-\vartheta(t,x,\xi_{*}))}\,,
$$
where $\tilde{\alpha}\in (0,\alpha)$, this implies
\[
\begin{aligned}
&\quad\left|\int_{A_{**}}u(t,x,\xi)
\tilde{k}(\xi,\xi_{*})\left(1-e^{\alpha(\vartheta(t,x,\xi)-\vartheta(t,x,\xi_{*}))}\right)u(t,x,\xi_{*})d\xi d\xi_{*}dx\right|\\
&\lesssim  \alpha \int_{A_{**}} e^{-c\left(|\xi|^2+|\xi_{*}|^2\right)} \left|\xi-\xi_{*}\right|^{\gamma} \left|u(t,x,\xi)\right|\left|u(t,x,\xi_{*})\right|d\xi d\xi_{*} dx\\
&\lesssim \alpha\int_{A_{**}\cap\{|\xi-\xi_{*}|\leq 1\}} + \alpha\int_{A_{**}\cap\{|\xi-\xi_{*}|> 1\}}\\
&=: T_1+T_2.
\end{aligned}
\]
For $T_1$,
\[
\begin{aligned}
T_1 &=\alpha\int_{\R^{3}} dx \int_{\R^{3}} d\xi  \,\textbf{1}_{ A_{**}} e^{-c|\xi|^2} \left|u(t,x,\xi)\right| \left(\int_{\R^{3}} d\xi_{*}\,\textbf{1}_{\{|\xi-\xi_{*}|\leq1\}} |\xi-\xi_{*}|^{\gamma} e^{-c|\xi_{*}|^2} \left|u(t,x,\xi_{*})\right| \right)\\
&\lesssim \alpha \int_{H_{0}\cup H_{-}} e^{-c|\xi|^2} \left|u(t,x,\xi)\right|  \textbf{M}(e^{-c|\xi|^2}|u(t,x,\xi)| )d\xi dx,\\
\end{aligned}
\]
where $\textbf{M}$ denotes the maximal function. The basic property of maximal function states that $\textbf{M}$ is a bounded operator on $L^2$. Then by Cauchy-Schwarz inequality,
\[
T_1\lesssim \alpha \int_{H_{0}\,\cup H_{-}} e^{-c|\xi|^2} |u|^2 d\xi dx.
\]
It also follows from  Cauchy-Schwarz inequality that
\[
\begin{aligned}
T_2 & \lesssim \alpha \int_{\R^{3}} \left(\int_{\R^{3}}  e^{-c|\xi|^2} |u(t,x,\xi)|^2 d\xi\right)^{2}\, \textbf{1}_{H_{0}\,\cup H_{-}}\, dx\\
    & \lesssim \alpha \int_{H_{0}\,\cup H_{-}} e^{-c|\xi|^2} |u|^2 d\xi dx.
\end{aligned}
\]
In the region $A_{+*}\cup A_{*+}$, by symmetry and the Taylor expansion, there exists $\widehat{a}\in (0,a)$ such that
\begin{align*}
&\quad\left|\int_{A_{*+}\cup A_{+*}}u(t,x,\xi)
\tilde{k}(\xi,\xi_{*})\left(1-e^{\alpha(\vartheta(t,x,\xi)-\vartheta(t,x,\xi_{*}))}\right)
u(t,x,\xi_{*})d\xi d\xi_{*}dx\right|\\
&\lesssim \alpha^{2}\int_{A_{+*}}\left|u(t,x,\xi)
\tilde{k}(\xi,\xi_{*})\left(\vartheta^{2}(t,x,\xi)+\vartheta^{2}(t,x,\xi_{*})\right)e^{\widehat{\alpha}(\vartheta(t,x,\xi)-\vartheta(t,x,\xi_{*}))}
u(t,x,\xi_{*})\right|d\xi d\xi_{*}dx\\
&\lesssim\alpha^{2} \int_{H_{0}\cup H_{-}}e^{-c|\xi_{*}|^{2}}|u|^{2}d\xi_{*} dx
+\alpha^{2}\int_{H_{+}}\Big(\de(\left<x\right>-Mt)\Big)^{\frac{1+\ga}{1-\ga}}e^{-c|\xi|^{2}}|u|^{2}d\xi dx\,,
\end{align*}
for some small constant $c>0$. Here the treatment of singular part in $\tilde{k}(\xi,\xi_{*})$ is similar as for $T_1$, so we omit them. Summing up the above calculations, we have
\begin{align}\label{K-Kw}
&\quad\int_{\R^{3}}\big<u, (\widetilde{K}-\widetilde{K}_{w})u\big>_{\xi}dx\\
&\lesssim (\alpha+\alpha^{2})
 \int_{H_{0}\cup H_{-}}e^{-c|\xi|^{2}}|u|^{2}d\xi dx
+\alpha^{2}\int_{H_{+}}\Big(\de(\left<x\right>-Mt)\Big)^{\frac{1+\ga}{1-\ga}}e^{-c|\xi|^{2}}|u|^{2}d\xi dx\,.\nonumber
\end{align}
Finally, direct calculation gives
\begin{align*}
\nabla_{\xi}\vartheta &  =\left[  (\gamma-1)(1-2\chi)  \left(\de(\left\langle
x\right\rangle -Mt)\right)  \left\langle \xi\right\rangle ^{\gamma+1}+3(\gamma-1)|\xi|^{2}-5(\gamma-1)\left(
\de(\left\langle x\right\rangle -Mt)\right)  ^{\frac{2}{1-\gamma}%
}\right] \\
&  \quad\times\left[    \de(\left\langle x\right\rangle -Mt)
\left\langle \xi\right\rangle ^{\gamma-2}\right]  \frac{\xi}{\left\langle
\xi\right\rangle }\chi^{\prime}\\
&  \quad+\left[  \gamma  \left(\de(\left\langle x\right\rangle
-Mt)\right)  \left\langle \xi\right\rangle ^{\gamma}\right]  \frac{\xi}{\left\langle \xi\right\rangle
}(1-\chi)\chi+6\xi\chi\,.
\end{align*}
This implies
\[
\nabla_{\xi}\vartheta \sim\left\langle \xi\right\rangle\frac{\xi}{\left<\xi\right>}\quad
\hbox{on}\quad H_{0}\,\cup H_{-} \,,
\]
and
\[
\nabla_{\xi}\vartheta=0\quad\hbox{on}\quad H_{+}\,,
\]
hence
$$
|\left(\nabla_{\xi}\vartheta , \sigma\nabla_{\xi}\vartheta\right) |\lesssim \left<\xi\right>^{\ga+2}\quad
\hbox{on}\quad H_{0}\,\cup H_{-} \,.
$$
We have the estimate of the third term of (\ref{L-Lw})
\begin{align*}
\alpha^{2}\big|\int_{\R^{3}}\big<u^{2}\left(\nabla_{\xi}\vartheta , \sigma\nabla_{\xi}\vartheta\right) \big>_{\xi}dx\big|
&\leq \alpha^{2} \|\mathrm{P}_{1}u\|^{2}_{L^{2}_{\sigma}}+\alpha^{2}\int_{H_{0}\,\cup H_{-}}\left<\xi\right>^{\ga+2}|\mathrm{P}_{0}u|^{2}d\xi dx
\\
 &\leq \alpha^{2} \|\mathrm{P}_{1}u\|^{2}_{L^{2}_{\sigma}}+\alpha^{2}\int_{H_{0}\,\cup H_{-}}|\mathrm{P}_{0}u|^{2} d\xi dx\,.
\end{align*}
One can easily check that
\begin{equation}\label{theta-dec}
\begin{aligned}
\pa_{t}\vartheta
 & =-\delta M\left\langle \xi\right\rangle ^{\gamma+1}\left(  \frac{10
}{1-\gamma}\left[   \left( \delta(\left\langle x\right\rangle -Mt)\right)
\left\langle \xi\right\rangle ^{\gamma-1}\right]  ^{\frac{1+\gamma}{1-\gamma}%
}\left(  1-\chi\right)  +\chi(1-\chi)\right) \\
&  \quad+\de M  \left(
5\left[ \left( \delta(\left\langle x\right\rangle -Mt)\right)
\left\langle \xi\right\rangle ^{\gamma-1}\right]  ^{\frac{2}{1-\gamma}%
}-(1-2\chi)\left[  \left(\delta(\left\langle x\right\rangle -Mt)\right)
\left\langle \xi\right\rangle ^{\gamma-1}\right]-3  \right)  \left\langle
\xi\right\rangle ^{\gamma+1}\chi^{\prime}\,\\
&\leq0,
\end{aligned}
\end{equation}
(the constants $5$ and $3$ are chosen intentionally such that the quantity in
the latter bracket is nonnegative on $H_{0}$) and,
\begin{align*}
\nabla_{x}\vartheta &  =\delta\left(  \nabla_{x}\left\langle x\right\rangle
\right)  \left\langle \xi\right\rangle ^{\gamma+1}\left(  \frac{10
}{1-\gamma}\left[   \left( \delta(\left\langle x\right\rangle -Mt)\right)
\left\langle \xi\right\rangle ^{\gamma-1}\right]  ^{\frac{1+\gamma}{1-\gamma}%
}\left(  1-\chi\right)  +\chi(1-\chi)\right) \\
&  \quad-\de\left(  \nabla_{x}\left\langle x\right\rangle \right)  \left(
5\left[ \left( \delta(\left\langle x\right\rangle -Mt)\right)
\left\langle \xi\right\rangle ^{\gamma-1}\right]  ^{\frac{2}{1-\gamma}%
}-(1-2\chi)\left[  \left(\delta(\left\langle x\right\rangle -Mt)\right)
\left\langle \xi\right\rangle ^{\gamma-1}\right]-3  \right)  \left\langle
\xi\right\rangle ^{\gamma+1}\chi^{\prime}\,,
\end{align*}
hence
\[
\pa_{t}\vartheta=\xi\cdot\nabla_{x}\vartheta=0\quad\hbox{on}\quad H_{-}\,,
\]
for $H_{0},$ we get
\[
|\pa_{t}\vartheta|\lesssim\de M\left\langle \xi\right\rangle ^{\gamma+1}%
\quad\hbox{and}\quad|\xi\cdot\nabla_{x}\vartheta|\lesssim\de\left\langle
\xi\right\rangle ^{\gamma+2}.
\]
Finally, on $H_{+},$ we have
\[
\pa_{t}\vartheta=-\frac{10}{1-\gamma}\delta M\left[  \delta
(\left\langle x\right\rangle -Mt)  \right]  ^{\frac{1+\gamma
}{1-\gamma}},
\]%
\[
\xi\cdot\nabla_{x}\vartheta=\frac{10}{1-\gamma}\delta\frac{\xi\cdot x}%
{\left\langle x\right\rangle }\left[  \delta(\left\langle
x\right\rangle -Mt)  \right]  ^{\frac{1+\gamma}{1-\gamma}}.
\]

Direct calculation gives
\begin{align*}
\alpha\big|\int_{\R^{3}} \big<u, \xi\cdot\nabla_{x}\vartheta u\big>_{\xi}dx\big|&\leq \alpha\de \|\left<\xi\right>^{\frac{\ga+2}{2}}\mathrm{P}_{1}u\|^{2}_{L^{2}}
\\
&\quad+\alpha\de\int_{H_{+}}\Big(\de(\left<x\right>-Mt)\Big)^{\frac{1+\ga}{1-\ga}}\left<\xi\right> |\mathrm{P}_{0}u|^{2}d\xi dx
+\alpha\de\int_{H_{0}}	\left<\xi\right>^{\gamma+2}|\mathrm{P}_{0}u|^{2}d\xi dx
\end{align*}
and
\begin{align*}
\alpha\int_{\R^{3}} \big<u, \pa_{t}\vartheta u\big>_{\xi}dx&\leq \alpha \de M\|\left<\xi\right>^{\frac{\ga+1}{2}}\mathrm{P}_{1}u\|^{2}_{L^{2}}
\\
&\quad-\alpha\de M\int_{H_{+}}\Big(\de(\left<x\right>-Mt)\Big)^{\frac{1+\ga}{1-\ga}}|\mathrm{P}_{0}u|^{2}d\xi dx
+\alpha\de M\int_{H_{0}}|\mathrm{P}_{0}u|^{2}d\xi dx \,.
\end{align*}
In conclusion, we get
\begin{align*}
\frac{d}{dt}\|u\|^{2}_{L^{2}}&\leq  -(C-\alpha-\alpha^{2}-\alpha\de-\alpha\de M) \|\mathrm{P}_{1}u\|^{2}_{L^{2}_{\sigma}}\\
&\quad-\alpha\de M\int_{H_{+}}\Big(\de(\left<x\right>-Mt)\Big)^{\frac{1+\ga}{1-\ga}}|\mathrm{P}_{0}u|^{2}d\xi dx  \\
&\quad+  \alpha(\delta+\alpha) \int_{H_{+}}\Big(\de(\left<x\right>-Mt)\Big)^{\frac{1+\ga}{1-\ga}}\left<\xi\right>|\mathrm{P}_{0}u|^{2}d\xi dx\\
&\quad+\alpha(1+\alpha+\de+\de M)
\int_{H_{0}\cup H_{-}}\left<\xi\right>^{\gamma+2}|\mathrm{P}_{0}u|^{2}d\xi dx\\
&\leq \alpha(1+\alpha+\de+\de M)\|u\|_{L^{2}}\|f\|_{L^{2}}\,,
\end{align*}
if we choose $\de$ small, $\alpha$ much smaller than $\de$ and $M$ large enough.

For the $x$-derivative estimate, we can rewrite (\ref{uu}) as
\begin{align*}
\pa_{t}u+\xi\cdot\nabla_{x}u&=Lu+\left[\alpha \left(\partial_t \vartheta+\xi\cdot\nabla_{x} \vartheta - \nabla_\xi\cdot \left[\sigma\nabla_\xi \vartheta\right]\right)
+ \alpha^2 (\sigma\nabla_\xi \vartheta,\nabla_\xi \vartheta)\right]u\\
&\quad-2\alpha \lambda_1(\xi)\nabla_\xi \vartheta\cdot\nabla_\xi u+\left(\widetilde{K}_{w}-\widetilde{K}\right)u\,.
\end{align*}
We only need to control the commutator
terms:
\begin{equation}
\int_{{\mathbb{R}}^{3}}\left\langle \pa_{x_{i}}u,\left[\alpha\pa_{x_{i}} \left(\partial_t \vartheta+\xi\cdot\nabla_{x} \vartheta
- \nabla_\xi\cdot \left[\sigma\nabla_\xi \vartheta\right]\right)
+\alpha^2\pa_{x_{i}} (\sigma\nabla_\xi \vartheta,\nabla_\xi \vartheta)\right]  u\right\rangle _{\xi }dx, \label{comm1}%
\end{equation}%
\begin{equation}
\int_{{\mathbb{R}}^{3}}\left\langle \pa_{x_{i}}u,-2\alpha \lambda_1(\xi)\nabla_\xi \pa_{x_{i}}\vartheta\cdot\nabla_\xi u
\right\rangle _{\xi }dx\,, \label{comm2}%
\end{equation}
and%
\begin{equation}
\int_{{\mathbb{R}}^{3}}\left\langle \pa_{x_{i}}u,\left(\pa_{x_{i}}\widetilde{K}_{w}\right)u  \right\rangle _{\xi }dx. \label{comm3}%
\end{equation}
It is obvious that the decays of $\pa_{x_{i}}\left[\alpha \left(\partial_t \vartheta+\xi\cdot\nabla_{x} \vartheta
- \nabla_\xi\cdot \left[\sigma\nabla_\xi \vartheta\right]\right)
+ \alpha^2 (\sigma\nabla_\xi \vartheta,\nabla_\xi \vartheta)\right]$ and $\lambda_1(\xi)\nabla_\xi \pa_{x_{i}}\vartheta$ are faster than $\left[\alpha \left(\partial_t \vartheta+\xi\cdot\nabla_{x} \vartheta
- \nabla_\xi\cdot \left[\sigma\nabla_\xi \vartheta\right]\right)
+ \alpha^2 (\sigma\nabla_\xi \vartheta,\nabla_\xi \vartheta)\right]$ and  $\lambda_1(\xi)\nabla_\xi\vartheta$ respectively,
 hence the first term (\ref{comm1}) and the second term (\ref{comm2}) are easy to control. For
the third term (\ref{comm3}), direct calculation gives (similar to the estimate of (\ref{estimate-K}))
\begin{align*}
\left|\int_{{\mathbb{R}}^{3}}\left\langle \pa_{x_{i}}u,\left(\pa_{x_{i}}\widetilde{K}_{w}\right)u  \right\rangle _{\xi }dx\right|&\lesssim
\alpha\de
 \int_{H_{0}\cup H_{-}}e^{-c|\xi|^{2}}|u||\pa_{x_{i}}u|d\xi dx\\
&\quad+\alpha\de\int_{H_{+}}\Big(\de(\left<x\right>-Mt)\Big)^{\frac{1+\ga}{1-\ga}}e^{-c|\xi|^{2}}|u||\pa_{x_{i}}u|d\xi dx
\end{align*}
for some small constant $c>0$, hence the third term can also be controlled. The second derivative estimate is similar and
hence we omit the details. We then have
\begin{align*}
\frac{d}{dt}\Vert u\Vert_{H_{x}^{2}L_{\xi }^{2}}^{2}  \lesssim\Vert
u\Vert_{H_{x}^{2}L_{\xi }^{2}}\Vert f\Vert_{H_{x}^{2}L_{\xi }^{2}}\,.
\end{align*}
By Lemma \ref{improve*}, we get
\begin{align*}
\Vert u(t)\Vert_{H_{x}^{2}L_{\xi }^{2}}  &\lesssim\Vert
u(1)\Vert_{H_{x}^{2}L_{\xi }^{2}}+\int_{1}^{t}\Vert f(s)\Vert_{H_{x}^{2}L_{\xi }^{2}}ds\\
&\lesssim \Vert
u(1)\Vert_{H_{x}^{2}L_{\xi }^{2}}+t \Vert f_{0}\Vert_{L^{2}(\mathcal{M})}\,.
\end{align*}
This completes the proof of the proposition.
\end{proof}

\subsection{Conclusion}

\begin{lemma}\label{w-to-mu}
	 Let $w(t,x,\xi)$ and $\varrho(x,\xi)$ be weight functions defined in \eqref{weight-functions}. Then
	\[
	\left\|wf(t) \right\|_{H^2_x L^2_\xi}\lesssim \left\|f(t) \right\|_{H^2_x L^2_\xi (\varrho)}
	\]
\end{lemma}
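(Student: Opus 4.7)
The plan is to reduce the inequality to two pointwise facts about the weights: first, that $w(t,x,\xi)^{2}\leq\varrho(x,\xi)$ for all $t\geq0$, and second, that spatial derivatives of $w$ up to order two are controlled by $w$ itself. Once these are in place, a Leibniz expansion does all the work.

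More precisely, I would expand for each multi-index $|\alpha|\leq 2$
\[
\pa_{x}^{\alpha}(wf)=\sum_{\beta\leq\alpha}\binom{\alpha}{\beta}(\pa_{x}^{\beta}w)(\pa_{x}^{\alpha-\beta}f),
\]
and use the triangle inequality together with $\|g\|_{L^{2}(\varrho)}^{2}=\iint |g|^{2}\varrho\,d\xi dx$ to reduce the estimate to bounding $\iint|\pa_{x}^{\beta}w|^{2}|\pa_{x}^{\alpha-\beta}f|^{2}d\xi dx$. Hence the second task is to verify $|\pa_{x}^{\beta}w|\lesssim w$ for $|\beta|\leq2$ in each of the three cases of \eqref{weight-functions}. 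The first case ($w\equiv1$) is trivial. For the second case $w=\exp((\langle x\rangle-Mt)/(2D))$, direct differentiation of $\langle x\rangle$ gives $|\pa_{x}^{\beta}w|\leq C_{D}\,w$ with $C_{D}$ depending only on $D$ and $|\beta|$. For the third case I would use the explicit expression for $\nabla_{x}\vartheta$ already computed inside the proof of Proposition \ref{weig_2}; the key point is that on $H_{+}$ the factor $(\delta(\langle x\rangle-Mt))^{(1+\gamma)/(1-\gamma)}$ is uniformly bounded because $(1+\gamma)/(1-\gamma)\leq0$ for $\gamma\in[-2,-1)$, while on $H_{0}\cup H_{-}$ one has $|\nabla_{x}\vartheta|\lesssim\delta\langle\xi\rangle^{\gamma+1}\lesssim 1$. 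Second derivatives are treated analogously, with extra contributions from $\chi'$ localized in $H_{0}$ and therefore bounded. Altogether $|\pa_{x}^{\beta}w|\lesssim w$ with constants depending on $\alpha,\delta,M,D$.

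For the third task, namely $w^{2}\leq\varrho$, the first two cases follow since $Mt\geq0$ makes the time-dependent factor at most $1$. In the third case the assertion $\exp(\alpha\vartheta(t,x,\xi))\leq\exp(\alpha\vartheta(0,x,\xi))$ is exactly the monotonicity $\vartheta(t,x,\xi)\leq\vartheta(0,x,\xi)$ for $t\geq0$, and this is immediate from the computation $\pa_{t}\vartheta\leq0$ already carried out in \eqref{theta-dec} within the proof of Proposition \ref{weig_2}. Combining everything gives
\[
\|wf\|_{H^{2}_{x}L^{2}_{\xi}}^{2}\lesssim \sum_{|\alpha|\leq2}\iint|\pa_{x}^{\alpha}f|^{2}w^{2}d\xi dx\leq\sum_{|\alpha|\leq2}\iint|\pa_{x}^{\alpha}f|^{2}\varrho\,d\xi dx\lesssim\|f\|_{H^{2}_{x}L^{2}_{\xi}(\varrho)}^{2}.
\]

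The only non-routine step is the derivative bound in the third case, where $\vartheta$ mixes $\langle x\rangle-Mt$ with $\langle\xi\rangle$ through the cutoff $\chi(\delta(\langle x\rangle-Mt)\langle\xi\rangle^{\gamma-1})$ and splits into three regimes; the verification that neither $\nabla_{x}\vartheta$ nor $\nabla_{x}^{2}\vartheta$ blow up as $\langle x\rangle-Mt\to\infty$ relies crucially on the sign of the exponent $(1+\gamma)/(1-\gamma)$. Everything else is a direct application of Leibniz and the already-established monotonicity.
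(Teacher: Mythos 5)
Your proof is correct and follows essentially the same route as the paper: expand the $H^2_x$ norm by Leibniz, reduce to the pointwise bounds $|\partial_x^\beta w|\lesssim w$ for $|\beta|\le2$ (verified case by case in \eqref{weight-functions}, using boundedness of $(\delta(\langle x\rangle-Mt))^{(1+\gamma)/(1-\gamma)}$ on $H_+$ and of the powers of $\langle\xi\rangle$ with nonpositive exponents elsewhere), and then invoke $w(t,x,\xi)^2\le w(0,x,\xi)^2=\varrho(x,\xi)$, which for the third case is precisely the monotonicity $\partial_t\vartheta\le0$ from \eqref{theta-dec}. This matches the paper's argument step for step.
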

\begin{proof}
	It is easy to see that
	\[
	\begin{aligned}
	\left\|wf(t) \right\|_{H^2_x L^2_\xi}^2 & \lesssim \int_{\R^{6}}
f^2 \Big(w^2+ \left|\nabla_x w\right|^2 + \left|D_x^2 w\right|^2\Big)d\xi dx + \int_{\R^{6}} \left|\nabla f\right|^2 \Big(w^2 + \left|\nabla_x w\right|^2\Big)d\xi dx\\
	& \quad + \int_{\R^{6}} \left|D_x^2 f\right|^2 w^2 d\xi dx.
	\end{aligned}
	\]
	For $\gamma\in[-1,1]$, direct computation shows that
	\[
	\left|\nabla_x w\right|,\left|D_x^2 w\right|\lesssim \frac{1}{D} w\,.
	\]
	For $\gamma\in[-2,-1)$, we have
	\[
	\left|\nabla_x w\right|, \left|D_x^2 w\right|\lesssim \alpha\delta^2 w\left( \alpha \left<\xi\right>^{2(1+\gamma)}+\left<\xi\right>^{2\gamma} \right).
	\]
	It then follows that
	\[
	\left\|wf(t) \right\|_{H^2_x L^2_\xi}^2 \lesssim \int_{\R^{6}} \left(f^2 + \left|\nabla_x f\right|^2 +\left|D_x^2 f \right|^2\right)w^2 d\xi dx.
	\]
	On the other hand, from \eqref{theta-dec}, $\vartheta(t,x,\xi)$ is non-increasing in $t$, thus $w(t,x,\xi)\leq w(0,x,\xi)$. By definition,  $\varrho(x,\xi)=w(0,x,\xi)^2$, therefore we conclude
	\[
	\left\|wf(t) \right\|_{H^2_x L^2_\xi}^2 \lesssim \int_{\R^{6}} \left(f^2 + \left|\nabla_x f\right|^2 +\left|D_x^2 f \right|^2\right)\varrho\, d\xi dx=\left\|f(t) \right\|^2_{H^2_xL^2_\xi(\varrho)}.
	\]
\end{proof}

With Propositions \ref{improve}, \ref{weig_1}, \ref{weig_2} and Lemmas \ref{improve*}, \ref{w-to-mu} we have the structure of $\mathbb{G}^t f_0$ for $\left<x\right>> 2M t$.
\begin{proposition}
Let $f$ be the solution to the linearized Landau equation \eqref{in.1.c}, there exists positive constant $M$ such that for $\left<x\right>>2 M t$ and $t\geq 1$:
\begin{enumerate}
\item For $-1\leq\gamma\leq 1 $, there exist positive constants $C$ and $c$ such that
\[
\left|f (t,x) \right|_{L^2_\xi}  \leq C e^{-c\left(\left<x\right>+t\right)} \left\|f_0\right\|_{L^{2} (\mathcal{M})}\,.
\]

\item For $-2\leq\gamma<-1$, for any $\alpha>0$ sufficiently small, there exist positive constants $C$ and $c_\alpha$ such that
\[
 \left| f(t,x) \right|_{L^2_\xi}\leq C e^{-c_\alpha(\left<x\right>+t)^{\frac{2}{1-\gamma}}}
 \left\|f_0\right\|_{L^{2} (e^{\alpha\left|\xi\right|^2}\mathcal{M})}\,.
\]
\end{enumerate}
Here
\[
\mathcal{M}\equiv \begin{cases}
1 & \ga\in\left[0,1\right],\\
\left<\xi\right>^{2|\ga|}  & \ga \in \left[-2,0\right).
\end{cases}
\]
\end{proposition}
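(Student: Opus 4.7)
The plan is to assemble the four tools already at hand: the short-time regularization (Lemma \ref{improve}) to trade $L^2$ data for $H^2_x$ regularity at time $t=1$, the long-time weighted energy estimates (Propositions \ref{weig_1} and \ref{weig_2}) to propagate the $w$-weighted $H^2_x L^2_\xi$ norm for $t\geq 1$, the weight comparison (Lemma \ref{w-to-mu}) to connect the static weight $\varrho$ used in the regularization step with the dynamic weight $w$ used in the energy estimate, and three-dimensional Sobolev embedding $H^2_x\hookrightarrow L^\infty_x$ to pass to a pointwise statement in $x$. By construction $w(t,x,\xi)^{-1}$ is exponentially (resp.\ sub-exponentially) small in $\langle x\rangle+t$ throughout the region $\langle x\rangle>2Mt$, and extracting this factor from the already-controlled quantity $wf$ produces the stated decay.

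Concretely, I first apply Lemma \ref{improve} at $t=1$ with the appropriate $\varrho$ from \eqref{weight-functions} to get
\[
\|f(1)\|_{H^2_x L^2_\xi(\varrho)}\lesssim \|f_0\|_{L^2(\varrho\mathcal{M})}.
\]
Lemma \ref{w-to-mu} converts this to $\|wf(1)\|_{H^2_x L^2_\xi}\lesssim\|f(1)\|_{H^2_x L^2_\xi(\varrho)}$, and the appropriate weighted energy estimate then propagates the bound to $t\geq 1$:
\[
\|wf(t)\|_{H^2_x L^2_\xi}\lesssim \|f_0\|_{L^2(\varrho\mathcal{M})}+\mathbf{1}_{\{\gamma<-1\}}\,t\|f_0\|_{L^2(\mathcal{M})}.
\]
Sobolev embedding in $x$ gives $\|wf(t)\|_{L^\infty_x L^2_\xi}\lesssim \text{RHS}$, and the elementary pointwise inequality
\[
|f(t,x)|_{L^2_\xi}\leq \bigl(\inf_\xi w(t,x,\xi)\bigr)^{-1}|wf(t,x)|_{L^2_\xi}
\]
reduces the proof to computing a pointwise lower bound on $w$ in the region of interest.

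In the region $\langle x\rangle>2Mt$ the elementary inequality $\langle x\rangle-Mt\geq \tfrac14(\langle x\rangle+2Mt)\gtrsim \langle x\rangle+t$ holds. For $-1\leq\gamma\leq 1$, $w$ is independent of $\xi$ and equals $\exp((\langle x\rangle-Mt)/(2D))$, so $w^{-1}\lesssim \exp(-c(\langle x\rangle+t))$ once $D$ is fixed large. For $-2\leq\gamma<-1$ the crucial point is the uniform-in-$\xi$ lower bound
\[
\vartheta(t,x,\xi)\gtrsim \bigl(\delta(\langle x\rangle-Mt)\bigr)^{2/(1-\gamma)}.
\]
On $H_+$ this holds by the very definition of $\vartheta$; on $H_-$ the defining inequality $\langle\xi\rangle^{1-\gamma}\geq \delta(\langle x\rangle-Mt)$ gives $3\langle\xi\rangle^2\geq 3(\delta(\langle x\rangle-Mt))^{2/(1-\gamma)}$; the transition region $H_0$ is handled by tracking the cut-off $\chi$ in the definition of $\vartheta$. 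Combined with $\langle x\rangle-Mt\gtrsim \langle x\rangle+t$, this yields $\inf_\xi w(t,x,\xi)\geq \exp(c_\alpha(\langle x\rangle+t)^{2/(1-\gamma)})$.

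Two minor bookkeeping items finish the argument. Since $f_0$ is compactly supported in $x$ with $|x|\leq 1$, direct inspection of $\vartheta(0,x,\xi)$ on that support shows $\varrho(x,\xi)\lesssim e^{C+3\alpha\langle\xi\rangle^2}$, so $\|f_0\|_{L^2(\varrho\mathcal{M})}$ is dominated by $\|f_0\|_{L^2(e^{\alpha'|\xi|^2}\mathcal{M})}$ for a slightly larger $\alpha'$, consistent with the stated norm after relabeling. The extra factor of $t$ that appears only for $\gamma<-1$ is absorbed into the sub-exponential by slightly shrinking $c_\alpha$, using $(1+t)\leq C\exp(\epsilon(\langle x\rangle+t)^{2/(1-\gamma)})$ for any $\epsilon>0$. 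I expect the one delicate step to be the uniform-in-$\xi$ lower bound on $\vartheta$: the weight truly mixes $x$ and $\xi$ through $\chi$, and the estimate must remain sharp across all three of $H_+,H_0,H_-$ simultaneously; everything else is routine propagation and embedding.
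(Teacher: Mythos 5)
Your proposal is correct and mirrors the paper's own proof: short-time regularization (Lemma \ref{improve}) to get $H^2_x L^2_\xi(\varrho)$ control at $t=1$, conversion via Lemma \ref{w-to-mu}, propagation by Propositions \ref{weig_1} or \ref{weig_2}, Sobolev embedding in $x$, and a pointwise lower bound on $w$ in $\langle x\rangle>2Mt$; your remarks on absorbing the $(1+t)$ prefactor and relabeling $\alpha$ match the paper's final bookkeeping. The one place you add more detail than the paper is the verification of $\vartheta(t,x,\xi)\gtrsim(\delta(\langle x\rangle-Mt))^{2/(1-\gamma)}$ uniformly in $\xi$ across $H_+,H_0,H_-$, which the paper asserts without proof; your case check is sound (on $H_-$ one has $\chi=1$ so $\vartheta=3\langle\xi\rangle^2$ and the defining inequality of $H_-$ gives the bound, while on $H_0$ all three constituent quantities are mutually comparable), so this is a welcome clarification rather than a deviation.
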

\begin{proof}
For the first part, notice that $w(x,t)\geq e^{\frac{\left\langle x\right\rangle +2Mt}{8D}}$ when $\left<x\right>>2Mt$. It follows from Sobolev's inequality and Proposition \ref{weig_1} (replacing $D$ by $2D$) that there exist $C,c>0$ such that
\[
\left|f(t,x)\right|_{L^2_\xi}  \leq C e^{-c\left(\left<x\right>+t\right)}\left\| wf(1)\right\|_{H^2_x L^2_\xi}\quad \mbox{for}~\left<x\right> > 2Mt.
\]
From Lemmas \ref{w-to-mu} and \ref{improve*},
\[
\left\| wf(1)\right\|_{H^2_x L^2_\xi}\lesssim \left\| f(1) \right\|_{H^2_x L^2_\xi (\varrho)}\lesssim \left\|f_0 \right\|_{L^2(\varrho\,\mathcal{M})}\approx \left\|f_0 \right\|_{L^2(\mathcal{M})} .
\]
The last relation is due to  $f_{0}$ has compact support in the $x$-variable. Thus,
\[
\left|f (t,x) \right|_{L^2_\xi}  \leq C e^{-c\left(\left<x\right>+t\right)} \left\|f_0\right\|_{L^{2} (\mathcal{M})}.
\]

For the second part, let $w$ be defined in \eqref{weight-functions}. Applying Proposition \ref{weig_2} with replacing $\alpha$ by $\alpha/2$, we have
\[
\left\| wf(t)\right\|_{H^2_x L^2_\xi} \lesssim \left\| wf(1) \right\|_{H^2_x L^2_\xi}+t\left\|f_{0} \right\|_{L^2 (\mathcal{M})}.
\]
On the other hand, it follows from Lemmas \ref{improve}  and \ref{w-to-mu} that
\[
 \left\| wf(1) \right\|_{H^2_x L^2_\xi}\lesssim \left\| f(1) \right\|_{H^2_x L^2_\xi(\varrho)}\lesssim \left\| f_0 \right\|_{ L^2(\varrho\,\mathcal{M})}\lesssim\left\|f_0\right\|_{L^{2} (e^{3\alpha\left|\xi\right|^2}\mathcal{M})}.
\]
Observe that for $\left<x\right>>2Mt$,
\[
\vartheta(t,x,\xi)\gtrsim \left(\delta (\left<x\right>-Mt)\right)^{\frac{2}{1-\gamma}}.
\]
It follows from Sobolev inequality that
\[
\begin{aligned}
\sup_{(x,t)\in \mathbb{R}^3\times \mathbb{R}_+} e^{\alpha \left(\delta (\left<\left<x\right>\right>-Mt)\right)^{\frac{2}{1-\gamma}}}
\left| f(t,x) \right|_{L^2_\xi}\lesssim (1+t)\left\|f_0\right\|_{L^{2} (e^{3\alpha\left|\xi\right|^2}\mathcal{M})}.
\end{aligned}
\]
Note that for $\left<x\right>>2Mt$,
\[
\left<x\right>-Mt>\frac{\left<x\right>}{3} + \frac{Mt}{3},
\]
therefore there exist positive constants $C$ and $c_\alpha$ such that
\[
 \left| f(t,x) \right|_{L^2_\xi}\leq C (1+t)e^{-c_\alpha(\left<x\right>+t)^{\frac{2}{1-\gamma}}} \left\|f_0\right\|_{L^{2} (e^{3\alpha\left|\xi\right|^2}\mathcal{M})}\,.
\]
Here $\alpha>0$ can be chosen as small as we want.
\end{proof}

\section{Proof of Lemma \ref{lem:smooth}}\label{Pf-smooth}

The goal of this section is to prove Lemma \ref{lem:smooth}, which shows that how the eigenvalues and eigenfunctions of the operator $-i\xi\cdot\eta+L$ depend on Fourier variable $\eta$ (namely, smoothly or analytically). It is a basis for analyzing fluid structure of the solution. Consider eigen-problem
\begin{equation}
\left(-i\xi\cdot\eta+L\right)e_{j}\left(\eta\right)=\varrho_{j}\left(\eta\right)e_{j}\left(\eta\right).\label{eq:eigenprob}
\end{equation}
Here the parameter $\eta\in\mathbb{R}^3$ is three dimensional. Like eigen-problem for the Boltzmann equation \cite{[LiuYu2],[LiuYu1]}, to simplify the analysis, the first step is to reduce the parameter space by some symmetry properties of operator $L$. Thus, we start with the following Lemma.

The orthogonal group $O\left(3\right)$ has a natural action on $L_{\xi}^{2}$.
Let $g\in O\left(3\right)$, $f\in L_{\xi}^{2}$, the action is given
by
\begin{equation}
\label{group-action}
\left(g\cdot f\right)\left(\xi\right)=f\left(g^{-1}\xi\right).
\end{equation}

\begin{lemma}
	\label{lem:commute}The $O\left(3\right)$-action commutes with $L$,
	$\mathrm{P}_{0}$ and $\mathrm{P}_{1}$.
\end{lemma}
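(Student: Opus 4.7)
My plan is to establish the three commutation identities in order: $L$, then $\mathrm{P}_0$, then $\mathrm{P}_1$. The common thread is the rotational invariance of the Maxwellian $\mu$ (which is radial) and the equivariance of the Landau collision kernel $\Phi$ under $O(3)$. Throughout, I will use that the action $U_g f = g\cdot f$ is unitary on $L^2_\xi$, since the change of variables $\xi\mapsto g^{-1}\xi$ has Jacobian of absolute value $1$.

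For the commutation with $L$, the core observation is that $\Phi(g\xi)=g\Phi(\xi)g^T$: this follows from $B(|g\xi|)=B(|\xi|)$ together with $S(g\xi)=I_3-(g\xi)\otimes(g\xi)/|\xi|^2=g\,S(\xi)\,g^T$. Substituting $\xi_*\mapsto g^{-1}\xi_*$ in the integral defining $Q(F,G)$, and using $\nabla_\xi(F\circ g^{-1})=g^{-T}(\nabla F)\circ g^{-1}$ together with $\Phi(g\xi-g\xi_*)=g\Phi(\xi-\xi_*)g^T$, a direct but slightly tedious verification shows that $Q$ is $O(3)$-equivariant in the diagonal sense: $Q(g\cdot F,g\cdot G)=g\cdot Q(F,G)$. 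Since $g\cdot\mu=\mu$ and $\mu^{\pm 1/2}$ is radial (so multiplication by $\mu^{\pm 1/2}$ commutes with $U_g$), applying this identity to $F=\mu$ and $G=\mu^{1/2}f$ yields
\[
L(g\cdot f)=2\mu^{-1/2}Q\bigl(\mu,\mu^{1/2}(g\cdot f)\bigr)=2\mu^{-1/2}\,g\cdot Q(\mu,\mu^{1/2}f)=g\cdot Lf,
\]
as required. The equivariance of $Q$ is the main computational step, and keeping careful track of the matrix transpose in the factor $\Phi$ is the only place where one has to be attentive.

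For $\mathrm{P}_0$, I will show that $\mathrm{Ker}(L)=\mathrm{span}\{\chi_0,\chi_1,\chi_2,\chi_3,\chi_4\}$ is $U_g$-invariant. The functions $\chi_0=\mu^{1/2}$ and $\chi_4=\tfrac{1}{\sqrt6}(|\xi|^2-3)\mu^{1/2}$ are radial, hence fixed by $U_g$. For the three-dimensional subspace spanned by $\chi_i=\xi_i\mu^{1/2}$, $i=1,2,3$, we compute $(U_g\chi_i)(\xi)=(g^{-1}\xi)_i\mu^{1/2}(\xi)=\sum_j(g^{-1})_{ij}\chi_j(\xi)$, so this subspace is preserved. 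Thus $U_g$ leaves $\mathrm{Ker}(L)$ invariant. Since $U_g$ is unitary, $\mathrm{Ker}(L)^\perp$ is also $U_g$-invariant, and the orthogonal projection onto $\mathrm{Ker}(L)$ therefore commutes with $U_g$; this is the standard fact that a unitary operator commutes with the orthogonal projection onto any invariant closed subspace.

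Finally, $\mathrm{P}_1=\mathrm{Id}-\mathrm{P}_0$ commutes with $U_g$ because $\mathrm{Id}$ and $\mathrm{P}_0$ both do. The main obstacle, as noted, is verifying the equivariance $Q(g\cdot F,g\cdot G)=g\cdot Q(F,G)$; once this is in hand the remaining steps are routine consequences of the radial symmetry of $\mu$ and the unitarity of $U_g$.
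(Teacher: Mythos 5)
Your proof is correct and follows essentially the same route as the paper's: you establish $O(3)$-equivariance of the collision operator $Q$ via the covariance $\Phi(g\xi)=g\Phi(\xi)g^T$ and a change of variables, use radiality of $\mu$ to transfer this to $L$, and then deduce commutation with $\mathrm{P}_0$ (and hence $\mathrm{P}_1$). The only cosmetic difference is in the $\mathrm{P}_0$ step, where you invoke the abstract fact that a unitary operator commutes with the projection onto any invariant subspace after checking $\mathrm{Ker}(L)$ is $U_g$-invariant, whereas the paper computes $\mathrm{P}_0(g\cdot f)$ directly using that $\{g\cdot\chi_j\}$ is again an orthonormal basis of $\mathrm{Ker}(L)$; these are the same observation phrased at different levels of abstraction.
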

\begin{proof}
	The proof are simple calculations. For the nonlinear collision operator,
	we have
	\begin{align*}
		Q\left(g\cdot F,g\cdot F\right)\left(\xi\right) & =\nabla_{\xi}\cdot\int_{\mathbb{R}^{3}}\Phi\left(\xi-\xi_{*}\right)\left[\left(g\cdot F\right)\left(\xi_{*}\right)\nabla_{\xi}\left(g\cdot F\right)\left(\xi\right)-\left(g\cdot F\right)\left(\xi\right)\nabla_{\xi_{*}}\left(g\cdot F\right)\left(\xi_{*}\right)\right]d\xi_{*}\\
		& =\nabla_{\xi}\cdot\int_{\mathbb{R}^{3}}\Phi\left(\xi-\xi_{*}\right)\left[F\left(g^{-1}\xi_{*}\right)\nabla_{\xi}F\left(g^{-1}\xi\right)-F\left(g^{-1}\xi\right)\nabla_{\xi_{*}}F\left(g^{-1}\xi_{*}\right)\right]d\xi_{*}.
	\end{align*}
	Using change of variables $\zeta=g^{-1}\xi$, $\zeta_{*}=g^{-1}\xi_{*}$,
	the gradient $\nabla_{\xi}$ changes to $g\nabla_{\zeta}$, here we
	treat $\nabla_{\zeta}$ as a column vector and $g$ as a $3\times3$
	matrix. It is easy to see $\Phi\left(g\zeta\right)=g\Phi\left(\zeta\right)g^{T}$.
	Thus
	\begin{align*}
		Q\left(g\cdot F,g\cdot F\right)\left(\xi\right) & =g\nabla_{\zeta}\cdot\int_{\mathbb{R}^{3}}g\Phi\left(\zeta-\zeta_{*}\right)g^{T}\left[F\left(\zeta_{*}\right)g\nabla_{\zeta}F\left(\zeta\right)-F\left(\zeta\right)g\nabla_{\zeta_{*}}F\left(\zeta_{*}\right)\right]d\zeta_{*}\\
		& =\nabla_{\zeta}\cdot\int_{\mathbb{R}^{3}}\Phi\left(\zeta-\zeta_{*}\right)\left[F\left(\zeta_{*}\right)\nabla_{\zeta}F\left(\zeta\right)-F\left(\zeta\right)\nabla_{\zeta_{*}}F\left(\zeta_{*}\right)\right]d\zeta_{*}\\
		& =Q\left(F,F\right)\left(\zeta\right)=Q\left(F,F\right)\left(g^{-1}\xi\right)=g\cdot Q\left(F,F\right)\left(\xi\right).
	\end{align*}
	Noting that $Lf=2\mu^{-1/2}Q\left(\mu,\mu^{1/2}f\right)$ and $\mu=\left(2\pi\right)^{-3/2}e^{-\left|\xi\right|^{2}/2}$
	is invariant under $g$ action, we have $g\cdot Lf=Lg\cdot f.$
	
	Let $\left\{ \chi_{i},\,i=0,\dots,4\right\} $ be an orthonormal basis
	of $\mathrm{Ker}\,L$, then $\left\{ g\cdot\chi_{i},\,i=0,\dots,4\right\} $
	is also an orthonormal basis. Hence
	\begin{align*}
		\mathrm{P}_{0}\left(g\cdot f\right)\left(\xi\right) & =\sum_{j=0}^{4}\int_{\mathbb{R}^{3}}f\left(g^{-1}\zeta\right)\chi_{j}\left(\zeta\right)d\zeta\,\chi_{j}\left(\xi\right)\\
		& =\sum_{j=0}^{4}\int_{\mathbb{R}^{3}}f\left(\zeta\right)\chi_{j}\left(g\zeta\right)d\zeta\,\chi_{j}\left(\xi\right)\\
		& =\sum_{j=0}^{4}\int_{\mathbb{R}^{3}}f\left(\zeta\right)\left(g^{-1}\cdot\chi_{j}\right)\left(\zeta\right)d\zeta\,\left(g^{-1}\cdot\chi_{j}\right)\left(g^{-1}\xi\right)\\
		& =\left(\mathrm{P}_{0}f\right)\left(g^{-1}\xi\right)=\left(g\cdot\mathrm{P}_{0}f\right)\left(\xi\right).
	\end{align*}
	This in turn shows $\mathrm{P}_{1}=\mathrm{Id}-\mathrm{P}_{0}$ also
	commutes with $g$ action. The lemma is thus proved.
\end{proof}

As mentioned before, we can simplify the eigenvalue problem with this lemma. We choose a special group element $g\in O\left(3\right)$ which sends $\frac{\eta}{\left|\eta\right|}$
to $\left(1,0,0\right)^{T}$. Applying $g$ to (\ref{eq:eigenprob}),
we have
\begin{align*}
	\left(-ig^{-1}\xi\cdot\eta+L\right)\left(g\cdot e_{j}\right) & =\left(-i\xi\cdot g\eta+L\right)\left(g\cdot e_{j}\right)\\
	=\left(-i\xi_{1}\left|\eta\right|+L\right)\left(g\cdot e_{j}\right) & =\varrho_{j}\left(\eta\right)\left(g\cdot e_{j}\right).
\end{align*}
In this way, the original equation (\ref{eq:eigenprob}) is reduced to the following simplified problem:
\begin{equation}
\left(-i\xi_{1}\left|\eta\right|+L\right)\psi_{j}\left(\left|\eta\right|\right)=\varrho_{j}\left(\left|\eta\right|\right)\psi_{j}\left(\left|\eta\right|\right),\label{eq:eigenprob1}
\end{equation}
with $\psi_{j}\left(\left|\eta\right|\right)=\left(g\cdot e_{j}\right)\left(\eta\right)$.
Note that the dependence on $\eta$ is only through $\left|\eta\right|$.

	Next, it is natural to ask how the eigen-pairs $\{\varrho_i(|\eta|), \psi_{i}(|\eta|)\}, 0\leq i\leq 4$ for the reduced eigenvalue problem \eqref{eq:eigenprob1} depend on one dimensional parameter $|\eta|$. In fact, we are going to show that they are smooth in $|\eta|$ when $-2\leq\ga\leq 1$ and analytic in $|\eta|$ when $-1\leq\ga\leq 1$.
\begin{lemma}\label{smooth}
	For $|\eta|<\delta_{0}$, The eigenvalues $\varrho_i(|\eta|)$ and corresponding eigenfunctions $\psi_i(|\eta|)$, $0\leq i\leq 4$, are smooth in $|\eta|$ for $-2\leq \ga\leq 1$. Moreover, those are analytic in $|\eta|$ for $-1\leq \ga \leq 1$.
\end{lemma}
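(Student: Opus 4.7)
The plan is to apply a Lyapunov--Schmidt reduction to the one-parameter eigenvalue problem
\[
(L - is\xi_1)\psi_j(s) = \varrho_j(s)\psi_j(s),\qquad s = |\eta|,
\]
and then upgrade the qualitative spectral information supplied by Lemma \ref{pr12} to smooth (resp.\ analytic) dependence on $s$ via a classical perturbation argument on a reduced $5\times 5$ matrix problem. Decompose $\psi_j = \phi_j^0 + \phi_j^1$ with $\phi_j^0 = \mathrm{P}_0\psi_j\in\mathrm{Ker}(L)$ and $\phi_j^1 = \mathrm{P}_1\psi_j$, and apply $\mathrm{P}_0,\mathrm{P}_1$ to the eigenproblem to obtain
\begin{align*}
-is\,\mathrm{P}_0\xi_1(\phi_j^0 + \phi_j^1) &= \varrho_j\phi_j^0,\\
\bigl(\mathrm{P}_1 L - is\,\mathrm{P}_1\xi_1\mathrm{P}_1 - \varrho_j\bigr)\phi_j^1 &= is\,\mathrm{P}_1\xi_1\phi_j^0.
\end{align*}
By the coercivity of $L$ on $\mathrm{Range}(\mathrm{P}_1)$ (Lemma \ref{prop1}(v)), the operator on the left of the second equation is invertible there for $|s|,|\varrho_j|$ small; inverting gives $\phi_j^1 = \mathscr{L}_j\phi_j^0 - \phi_j^0$ in the notation of Lemma \ref{lem:smooth}, and substitution into the first equation produces a reduced $5\times 5$ eigenproblem $M(s,\varrho_j)\phi_j^0 = \varrho_j\phi_j^0$. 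Once $M$ is shown to be smooth (resp.\ holomorphic) jointly in $(s,\varrho)$ near $(0,0)$, the non-degenerate splitting pattern of Lemma \ref{pr12} together with a standard implicit-function argument on the characteristic polynomial $\det(\varrho I - M(s,\varrho)) = 0$ yields the claimed smooth (resp.\ analytic) branches $\varrho_j(s)$ and $\psi_j(s)$.

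For $-1\le\gamma\le 1$, the coercivity weight $\langle\xi\rangle^{\gamma+2}$ controls $\xi_1$, so multiplication by $\xi_1$ is bounded from $L^2_\sigma$ into $L^2_\xi$ and $(\mathrm{P}_1 L - \varrho)^{-1}(-is\mathrm{P}_1\xi_1\mathrm{P}_1)$ has operator norm $O(s)$. The Neumann expansion
\[
\bigl[1 + (\mathrm{P}_1 L - \varrho)^{-1}(-is\mathrm{P}_1\xi_1\mathrm{P}_1)\bigr]^{-1} = \sum_{n\ge 0}\bigl[-(\mathrm{P}_1 L - \varrho)^{-1}(-is\mathrm{P}_1\xi_1\mathrm{P}_1)\bigr]^n
\]
converges uniformly in operator norm for $|s|$ small and $\varrho$ in a small disk; hence the family $T(s) = L - is\xi_1$ is a self-adjoint holomorphic family of type (A), every entry of $M(s,\varrho)$ is jointly holomorphic, and Kato's analytic perturbation theory applied to the reduced problem supplies the analytic branches.

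For $-2\le\gamma<-1$, multiplication by $\xi_1$ is \emph{no longer} $L$-bounded and the Neumann iteration does not converge as a series of bounded operators, which is the main obstacle. The remedy exploits two facts: we only need the resolvent applied to inputs of the form $\mathrm{P}_1\xi_1\chi_k$, which have arbitrary Gaussian decay since $\chi_k\in\mathrm{Ker}(L)$; and we only need its pairing against other Hermite functions $\chi_i$. In a graded scale of weighted spaces $L^2_\sigma(\langle\xi\rangle^{m})$, each application of $(\mathrm{P}_1 L)^{-1}\mathrm{P}_1\xi_1$ loses a fixed, finite amount of weight (quantified by Lemma \ref{prop1}(iii)--(v) and Lemma \ref{lem-a}), while the initial Gaussian datum supplies infinite weight reserves. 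Consequently every finite-order $s$-derivative of $M(s,\varrho)$ is well defined as an element of $L^2_\xi$; a Taylor-remainder argument combined with the weighted resolvent estimates gives $M\in C^\infty$ in a neighborhood of $(0,0)$, and the smooth implicit function theorem concludes. The $O(3)$-symmetry built into the reduction via Lemma \ref{lem:commute} ensures that this procedure produces exactly the structural form of $e_j(\eta)$ and $\varrho_j(\eta)$ recorded in Lemma \ref{lem:smooth}, with the parities in $|\eta|$ of the real and imaginary parts coming out automatically from the alternating factors of $i$ in each Neumann order.
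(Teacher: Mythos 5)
Your Lyapunov--Schmidt reduction followed by an implicit-function argument on the reduced $5\times 5$ problem is a genuinely different route from the paper's. The paper proves Lemma \ref{smooth} directly: it cites \cite{YangYu} for smoothness in the full range $-2\le\gamma\le 1$, and for analyticity when $-1\le\gamma\le 1$ it establishes the single key inequality
\[
|\xi g|^2_{L^2_\xi}\le C_1|Lg|^2_{L^2_\xi}+C_2|g|^2_{L^2_\xi},
\]
i.e.\ that $i\xi_1$ is $L$-bounded, and then invokes Kato--Rellich to conclude that $L-i s\xi_1$ is a self-adjoint holomorphic family of type (A). The macro--micro reduction and the explicit representation $\psi_j=\mathscr{L}_j^1\mathrm{P}_0\psi_j$ are reserved by the paper for the proof of Lemma \ref{lem:smooth}, where they serve to extract the structural form of $e_j(\eta)$, not to establish smoothness or analyticity of the branches.

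There is a genuine gap in your analytic-case argument. You claim that ``the coercivity weight $\langle\xi\rangle^{\gamma+2}$ controls $\xi_1$, so multiplication by $\xi_1$ is bounded from $L^2_\sigma$ into $L^2_\xi$'' for $-1\le\gamma\le 1$. The coercivity estimate \eqref{coercivity} and the norm $|\cdot|_{L^2_\sigma}$ only control $\langle\xi\rangle^{(\gamma+2)/2}g$ in $L^2_\xi$, so $|\xi_1 g|_{L^2_\xi}\lesssim |g|_{L^2_\sigma}$ forces $1\le(\gamma+2)/2$, i.e.\ $\gamma\ge 0$, which fails for $-1\le\gamma<0$. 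The ingredient you actually need is the second-order (elliptic) estimate $|\langle\xi\rangle^{\gamma+2}g|_{L^2_\xi}\lesssim|Lg|_{L^2_\xi}+|g|_{L^2_\xi}$, which is what the paper derives from $\langle\Lambda g,\Lambda g\rangle_\xi\gtrsim|\langle\xi\rangle^{\gamma+2}g|^2_{L^2_\xi}$; this is strictly stronger than coercivity and is precisely what makes the relative $L$-boundedness of $\xi_1$ (and hence type-(A) holomorphy) available down to $\gamma=-1$. Your sketch conflates the coercivity form $\langle -Lg,g\rangle$ with the quadratic form $\langle Lg,Lg\rangle$, and without the latter the Neumann iteration cannot be closed on the claimed range of $\gamma$. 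The smooth case for $-2\le\gamma<-1$ in your proposal is also only heuristic: the assertion that each application of $(\mathrm{P}_1 L)^{-1}\mathrm{P}_1\xi_1$ ``loses a fixed, finite amount of weight'' would need a quantitative weighted resolvent bound that is not supplied; the paper avoids this by simply invoking the result of \cite{YangYu}.
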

\begin{proof} The smoothness property of $\{\varrho_i(|\eta|), \psi_{i}(|\eta|)\}$ can be found in \cite{YangYu}. We only need to check that $\{\varrho_i(|\eta|), \psi_{i}(|\eta|)\}$ is analytic for $\gamma\geq -1$, i.e., the perturbation $i \xi g$ (in fact $i\xi_1 g$) is $L-$bounded:
	$$
	|\xi g|^{2}_{L^{2}_{\xi}}\leq C_{1} |Lg|^{2}_{L^{2}_{\xi}}+C_{2}|g|^{2}_{L^{2}_{\xi}}\,,
	$$
	for some $C_{1}, C_{2}>0$. Then the Kato-Rellich theorem \cite{[Kato]} guarantees this lemma. In order to prove this, let us calculate $\big<\Lambda g,\Lambda g\big>_{\xi}$ first. For simplicity of notation, let
	$$
	\psi(\xi)=\frac{1}{4}(\xi,\sigma\xi )+\varpi\chi_{R}-\frac{1}{2}\nabla_{\xi}\cdot\big[\sigma\xi\big]\,,
	$$
	then
	\begin{align*}
		\big<\Lambda g,\Lambda  g\big>_{\xi}&=\big|\nabla_{\xi}\cdot\big[\sigma\nabla_{\xi}g\big]\big|^{2}_{L^{2}_{\xi}}+|\psi(\xi)g|^{2}_{L^{2}_{\xi}}
		\\
		&\quad +2\big<\psi(\xi),\left( \nabla_{\xi}g, \sigma\nabla_{\xi}g\right)\big>_{\xi}+2\big<g ,\left(\nabla_{\xi}\psi(\xi), \sigma\nabla_{\xi}g\right)\big>_{\xi}\,.
	\end{align*}
Note that the third term is of good sign since $\sigma$ is a positive-definite matrix and $\psi(\xi)$ positive, we only need to control the last term. Integration by parts and Lemma \ref{prop1} give rise to
\[
\begin{aligned}
\quad 2\big<g ,\left(\nabla_{\xi}\psi(\xi), \sigma\nabla_{\xi}g\right)\big>_{\xi} & = \int \sigma^{ij}\partial_{i}\psi \partial_{j}(g^2)d\xi\\
&=-\int \nabla_{\xi}\cdot[\sigma\nabla_{\xi}\psi ]g^2 d\xi\\
&=-\int \nabla_{\xi}\cdot[\lambda_1(\xi)\nabla_{\xi}\psi ]g^2 d\xi\,,
\end{aligned}	
\]
this implies
\[
|2\big<g ,\left(\nabla_{\xi}\psi(\xi), \sigma\nabla_{\xi}g\right)\big>_{\xi}| \lesssim \left|\left<\xi\right>^{\ga}g\right|^{2}_{L^{2}_{\xi}}\,.
\]
It can be dominated by $|\psi(\xi)g|^{2}_{L^{2}_{\xi}}$ whenever $\varpi$ and $R$ are suitably large. This means
\begin{align*}
\big<\Lambda g,\Lambda g\big>_{\xi}\geq\big|\nabla_{\xi}\cdot\big[\sigma\nabla_{\xi}g\big]\big|^{2}_{L^{2}_{\xi}}+|\psi(\xi)g|^{2}_{L^{2}_{\xi}}\gtrsim
|\left<\xi\right>^{\ga+2}g|^{2}_{L^{2}_{\xi}}\,.
\end{align*}

Hence if $-1\leq\ga\leq1$,
	\begin{align*}
		|\xi g|^{2}_{L^{2}_{\xi}}\leq C \big<\Lambda g,\Lambda g\big>_{\xi}&=\big<L g-Kg,Lg-Kg\big>_{\xi}\\
		&\leq C_{1} |Lg|^{2}_{L^{2}_{\xi}}+C_{2}|g|^{2}_{L^{2}_{\xi}}\,.
	\end{align*}
	This completes the proof of the lemma.
\end{proof}

	Our goal is the original problem \eqref{eq:eigenprob}. Smooth (analytic) dependence on $|\eta|$ for the reduced problem \eqref{eq:eigenprob1} does not necessarily imply corresponding dependence on  $\eta$  for the original problem since magnitude function $|\cdot|$ is not smooth. This leads us to investigate more parity and conjugate properties of eigen-pairs.

Take the complex conjugate of \eqref{eq:eigenprob1} to have
\[
\left(-i\xi_{1}\left(-\left|\eta\right|\right)+L\right)\overline{\psi_{j}\left(\left|\eta\right|\right)}=\overline{\varrho_{j}\left(\left|\eta\right|\right)}\,\overline{\psi_{j}\left(\left|\eta\right|\right)}.
\]
Therefore the eigen-pair set $\left\{ \left(\overline{\varrho_{j}\left(\left|\eta\right|\right)},\overline{\psi_{j}\left(\left|\eta\right|\right)}\right)\right\} _{j=0}^{4}$
coincides with $\left\{ \left(\varrho_{j}\left(-\left|\eta\right|\right),\psi_{j}\left(-\left|\eta\right|\right)\right)\right\} _{j=0}^{4}$
. By checking their asymptotic expansions for $\left|\eta\right|\ll1$,
we conclude
\begin{align}
	\overline{\varrho_{j}\left(\left|\eta\right|\right)} & =\varrho_{j}\left(-\left|\eta\right|\right),\quad j=0,\dots,4,\label{eq:sym_eval_1}\\
	\overline{\psi_{j}\left(\left|\eta\right|\right)} & =\psi_{j}\left(-\left|\eta\right|\right),\quad j=0,\dots,4.\label{eq:sym_evec_1}
\end{align}

Define a map $\mathcal{R}:\left(\xi_{1},\xi_{2},\xi_{3}\right)\longmapsto\left(-\xi_{1},\xi_{2},\xi_{3}\right)$,
obviously $\mathcal{R}\in O\left(3\right)$ and $\mathcal{R}^{-1}=\mathcal{R}$.
Applying $\mathcal{R}$ to (\ref{eq:eigenprob1}),
\[
\left(-i\xi_{1}\left(-\left|\eta\right|\right)+L\right)\left(\mathcal{R}\cdot\psi_{j}\right)=\varrho_{j}\left(\left|\eta\right|\right)\left(\mathcal{R}\cdot\psi_{j}\right),
\]
which implies that two sets $\left\{ \left(\varrho_{j}\left(-\left|\eta\right|\right),\psi_{j}\left(-\left|\eta\right|\right)\right)\right\} _{j=0}^{4}$
and $\left\{ \left(\varrho_{j}\left(\left|\eta\right|\right),\left(\mathcal{R}\cdot\psi_{j}\right)\left(\left|\eta\right|\right)\right)\right\} _{j=0}^{4}$
are identical. Again we conclude
\begin{align}
	\varrho_{0}\left(-\left|\eta\right|\right) & =\varrho_{1}\left(\left|\eta\right|\right),\label{eq:sym_eval_2}\\
	\varrho_{j}\left(-\left|\eta\right|\right) & =\varrho_{j}\left(\left|\eta\right|\right),\mbox{ for }j=2,3,4,\label{eq:sym_eval_3}
\end{align}
and
\begin{align}
	\left(\mathcal{R}\cdot\psi_{0}\right)\left(-\left|\eta\right|\right) & =\psi_{1}\left(\left|\eta\right|\right),\label{eq:sym_evec_2}\\
	\left(\mathcal{R}\cdot\psi_{j}\right)\left(-\left|\eta\right|\right) & =\psi_{j}\left(\left|\eta\right|\right),\mbox{ for }j=2,3,4.\label{eq:sym_evec_3}
\end{align}

	Let us recall that the eigen-problem is an infinite dimensional problem. The eigenvalues $\varrho_{j}$'s are functions depending on $|\eta|$ only, while the eigenfunctions $\psi_{j}$ are functions of both $|\eta|$ and $\xi$. The resolution of this infinite dimensional eigen-problem is by reducing it to finite dimensional one, see \cite{[Degond],[LiuYu1],YangYu}. It is convenient to express the above symmetry properties in finite dimensional setting and recover the infinite dimensional one later.

Let $\mathrm{P}_0$ be the macroscopic projection. Then we have (see Lemma 7.7 \cite{[LiuYu1]} or Theorem 3.1 \cite{YangYu})
\[
\begin{aligned}
	\mathrm{P}_{0}\psi_{j}\left(\left|\eta\right|\right) & =\sum_{l=0}^{2}\beta_{jl}\left(\left|\eta\right|\right)E_{l}^{1},\:j=0,1,2,\\
	\mathrm{P}_{0}\psi_{3}\left(\left|\eta\right|\right) & =\beta_{33}\left(\left|\eta\right|\right)E_{3}^{1},\\
	\mathrm{P}_{0}\psi_{4}\left(\left|\eta\right|\right) & =\beta_{44}\left(\left|\eta\right|\right)E_{4}^{1}.
\end{aligned}
\]
Here $\left\{ E_{j}^{1}\right\} _{j=0}^{4}$ are normalized eigenvectors
for $\mathrm{P}_{0}\xi_{1}\mathrm{P}_{0}$, and they form an orthonormal
basis of $\mathrm{Ker}\,L$,
\begin{align*}
	E_{0}^{1} & =\sqrt{\frac{3}{10}}\chi_{0}+\sqrt{\frac{1}{2}}\chi_{1}+\sqrt{\frac{1}{5}}\chi_{4},\\
	E_{1}^{1} & =\sqrt{\frac{3}{10}}\chi_{0}-\sqrt{\frac{1}{2}}\chi_{1}+\sqrt{\frac{1}{5}}\chi_{4},\\
	E_{2}^{1} & =-\sqrt{\frac{2}{5}}\chi_{0}+\sqrt{\frac{3}{5}}\chi_{4},\\
	E_{3}^{1} & =\chi_{2},\\
	E_{4}^{1} & =\chi_{3}.
\end{align*}
In fact $E_{j}^{1}=g\cdot E_{j}$, where $E_{j}$ are defined in Lemma \ref{pr12}, $g\in O\left(3\right)$ and
sends $\frac{\eta}{\left|\eta\right|}$ to $\left(1,0,0\right)^{T}$.

The above symmetry properties \eqref{eq:sym_evec_1}, \eqref{eq:sym_evec_2}, and \eqref{eq:sym_evec_3} impose some restrictions on coefficients
$\beta_{ij}\left(\left|\eta\right|\right).$ We note $\mathcal{R}E_{0}^{1}=E_{1}^{1}$,
$\mathcal{R}E_{j}^{1}=E_{j}^{1}$, $j=2,3,4$. Then by (\ref{eq:sym_evec_2})
and Lemma \ref{lem:commute}
\[
\left(\mathcal{R}\cdot\mathrm{P}_{0}\psi_{0}\right)\left(-\left|\eta\right|\right)=\mathrm{P}_{0}\mathcal{R}\cdot\psi_{0}\left(-\left|\eta\right|\right)=\mathrm{P}_{0}\psi_{1}\left(\left|\eta\right|\right).
\]
The LHS is
\[
\beta_{00}\left(-\left|\eta\right|\right)E_{1}^{1}+\beta_{01}\left(-\left|\eta\right|\right)E_{0}^{1}+\beta_{02}\left(-\left|\eta\right|\right)E_{2}^{1},
\]
and the RHS is
\[
\beta_{10}\left(\left|\eta\right|\right)E_{0}^{1}+\beta_{11}\left(\left|\eta\right|\right)E_{1}^{1}+\beta_{12}\left(\left|\eta\right|\right)E_{2}^{1}.
\]
By comparing the coefficients we obtain
\[
\beta_{10}\left(\left|\eta\right|\right)=\beta_{01}\left(-\left|\eta\right|\right),\quad\beta_{11}\left(\left|\eta\right|\right)=\beta_{00}\left(-\left|\eta\right|\right),\quad\beta_{12}\left(\left|\eta\right|\right)=\beta_{02}\left(-\left|\eta\right|\right).
\]
Similarly, we find
\begin{align*}
	\beta_{21}\left(\left|\eta\right|\right) & =\beta_{20}\left(-\left|\eta\right|\right),\quad\beta_{22}\left(\left|\eta\right|\right)=\beta_{22}\left(-\left|\eta\right|\right),\\
	\beta_{33}\left(\left|\eta\right|\right) & =\beta_{33}\left(-\left|\eta\right|\right),\quad\beta_{44}\left(\left|\eta\right|\right)=\beta_{44}\left(-\left|\eta\right|\right).
\end{align*}
Moreover, (\ref{eq:sym_evec_1}) requires that
\[
\overline{\beta_{jk}\left(\left|\eta\right|\right)}=\beta_{jk}\left(-\left|\eta\right|\right).
\]
Given a function $\phi\left(s\right)$, $s\in\mathbb{R}$, there is
a natural way to decompose $\phi\left(s\right)$ into even and odd
parts,
\begin{align*}
	\phi\left(s\right) & =\frac{\phi\left(s\right)+\phi\left(-s\right)}{2}+\frac{\phi\left(s\right)-\phi\left(-s\right)}{2}\\
	& \equiv\phi^{e}\left(s\right)+\phi^{o}\left(s\right).
\end{align*}
Then it is easy to see
\[
\phi^{e}\left(s\right)=\mathcal{A}\left(s^{2}\right),\quad\frac{\phi^{o}\left(s\right)}{s}=\mathcal{B}\left(s^{2}\right)
\]
for some functions $\mathcal{A}$ and $\mathcal{B}$. Moreover $\mathcal{A}$
and $\mathcal{B}$ are smooth or analytic functions provided $\phi$
is smooth or analytic respectively. Indeed, this is an old result due
to Whitney:
\begin{proposition}[\cite{Whitney}]
	(1) An even function $f(x)$ may be written as $g(x^{2})$. If
	$f$ is analytic, of class $C^{\infty}$ or of class $C^{2s}$, $g$
	may be made analytic, of class $C^{\infty}$ or of class $C^{s}$,
	respectively.
	
	(2) An odd function $f(x)$ may be written as $xg(x^2).$If
	$f$ is analytic, of class $C^{\infty}$ or of class $C^{2s+1}$,
	$g$ may be made analytic, of class $C^{\infty}$ or of class $C^{s}$,
	respectively.
\end{proposition}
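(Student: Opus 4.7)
I would separate the analytic case from the smooth cases, and reduce part (2) (odd functions) to part (1) (even functions). The engine in the smooth case is an \emph{even Hadamard lemma}: if $f$ is $C^{2}$ and even with $f(0)=0$, then $f(x)=x^{2}f_{1}(x)$ for some even $f_{1}$, so that exactly two derivatives are lost in each such peeling. Iterating this in an induction on $s$ will deliver the finite-smoothness statement, and the $C^{\infty}$ statement follows at once.

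\textbf{Analytic case.} If $f$ is real analytic near $0$ with $f(x)=\sum_{n\ge 0}a_n x^n$ in some disk $|x|<R$, evenness forces $a_n=0$ for odd $n$, so $f(x)=\sum_{k\ge 0}a_{2k}x^{2k}$, and defining $g(y)=\sum_{k\ge 0}a_{2k}y^k$ gives an analytic function in $|y|<R^{2}$ with $f(x)=g(x^2)$. For part (2), odd $f$ satisfies $a_n=0$ for even $n$, so $f(x)=x\sum_{k\ge 0}a_{2k+1}x^{2k}$, and the same trick produces analytic $g$ with $f(x)=xg(x^2)$.

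\textbf{Smooth even case.} I would induct on $s$ with the statement: if $f\in C^{2s}$ is even, then $g(y):=f(\sqrt{y})$, defined on $[0,\infty)$, extends to a $C^{s}$ function near $0$. The base case $s=0$ is continuity of $f$ at $0$. For the inductive step, apply the classical Hadamard lemma twice to $f-f(0)$: since this function is $C^{2s}$ and vanishes at $0$, we get $f(x)-f(0)=xh(x)$ with $h\in C^{2s-1}$; because $f$ is even, $h$ is odd, so $h(0)=0$, and Hadamard again gives $h(x)=xf_{1}(x)$ with $f_{1}\in C^{2s-2}$, and the oddness of $h$ makes $f_{1}$ even. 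Hence $g(y)-f(0)=y\,f_{1}(\sqrt{y})=y\,g_{1}(y)$; the inductive hypothesis applied to $f_{1}\in C^{2s-2}$ produces $g_{1}\in C^{s-1}$, so $g\in C^{s}$. The $C^{\infty}$ statement is obtained by letting $s\to\infty$.

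\textbf{Odd case and main obstacle.} For part (2) with $f\in C^{2s+1}$ odd, apply Hadamard once to get $f(x)=xh(x)$ with $h\in C^{2s}$; the oddness of $f$ forces $h$ to be even. Part (1) applied to $h$ then yields $h(x)=g(x^2)$ with $g\in C^{s}$, so $f(x)=xg(x^2)$ as required; the analytic and $C^{\infty}$ variants follow the same pattern. The main delicate point is the finite-smoothness bookkeeping in the induction: one must verify that the double use of Hadamard costs exactly two derivatives, that the factorization $g(y)=f(0)+y\,g_{1}(y)$ correctly transfers $C^{s-1}$ regularity of $g_{1}$ into $C^{s}$ regularity of $g$ near $0^{+}$, and that the resulting one-sided $C^{s}$ function on $[0,\infty)$ extends to a genuine $C^{s}$ function on all of $\mathbb{R}$ (for which a Taylor polynomial extension matching the derivatives at $0$ suffices).
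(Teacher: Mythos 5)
The paper does not supply a proof here; it simply cites Whitney's 1943 note, so your argument must stand on its own. Your analytic case, your reduction of the odd statement to the even one, and the double Hadamard peeling $f(x)-f(0)=x^{2}f_{1}(x)$ with $f_{1}\in C^{2s-2}$ even are all correct. However, the inductive step has a genuine gap exactly at the point you flag as ``delicate'': from $g(y)=f(0)+y\,g_{1}(y)$ and $g_{1}\in C^{s-1}$ you conclude $g\in C^{s}$, but that implication is simply false as stated --- multiplication by $y$ does not raise the differentiability class. Differentiating $s$ times produces a term $y\,g_{1}^{(s)}(y)$ whose existence and vanishing at $0$ are not controlled by $g_{1}\in C^{s-1}$ alone. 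Already at $s=1$ one can build a continuous $g_{1}$ (e.g.\ $g_{1}(y)=y\sin(1/y^{2})$) for which $y\,g_{1}(y)\notin C^{1}$; in the actual induction such a $g_{1}$ cannot arise because the underlying $f$ would fail to be $C^{2s}$, but your argument never invokes that constraint and uses only $g_{1}\in C^{s-1}$, which is too weak.

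A clean way to close the induction and stay within your Hadamard framework is to compute $g'$ rather than factor $g$. For $y>0$ one has $g'(y)=f'(\sqrt{y})/(2\sqrt{y})$. Since $f\in C^{2s}$ is even, $f'\in C^{2s-1}$ is odd with $f'(0)=0$, so Hadamard gives $f'(x)=x\,h(x)$ with $h\in C^{2s-2}$ even, and hence $g'(y)=\tfrac{1}{2}h(\sqrt{y})$ for $y>0$, a formula that extends continuously to $y=0$. The inductive hypothesis applied to $h$ then yields $g'\in C^{s-1}$ near $0^{+}$, whence $g\in C^{s}$, and the one-sided extension to a two-sided $C^{s}$ function is routine. (Whitney's own proof instead uses Taylor's theorem with remainder, controlling $g^{(p)}$ through estimates on the Taylor remainder of $f$; your Hadamard-based route is a legitimate alternative once corrected as above.) I would also add one sentence justifying why $f_{1}$ in your peeling step is indeed $C^{2s-2}$: the standard integral form $h(x)=\int_{0}^{1}f'(tx)\,dt$ shows the Hadamard quotient of a $C^{k}$ function is $C^{k-1}$, which is exactly the two-derivative loss you rely on.
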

Based on the even-odd decomposition and symmetry properties, we can
express $\mathrm{P}_{0}\psi_{j}\left(\left|\eta\right|\right)$ very
explicitly. Moreover, the eigenfunctions can be reconstructed from
their macroscopic projection (see Liu-Yu \cite{[LiuYu1]}),
\begin{equation}
\psi_{j}\left(\left|\eta\right|\right)=\mathscr{L}_{j}^{1}\mathrm{P}_{0}\psi_{j}\left(\left|\eta\right|\right),\label{eq:eigenprob1-sol}
\end{equation}
where
\[
\mathscr{L}_{j}^{1}=1+\left(L-i\left|\eta\right|\mathrm{P}_{1}\xi_{1}-\varrho_{j}\left(\left|\eta\right|\right)\right)^{-1}\left(i\left|\eta\right|\mathrm{P}_{1}\xi_{1}\right).
\]

	So for, we have obtained detailed dependence on $|\eta|$ for simplified eigen-problem \eqref{eq:eigenprob1}. Now we are in a position to reconstruct the eigenfunctions $e_{j}$'s and their dependencies on $\eta$ associated
with original eigen-problem (\ref{eq:eigenprob}).
Since $g\cdot e_{j}=\psi_{j}$
and $g$ commutes with $\mathrm{P}_{1}$ and $L$, we apply $g^{-1}$
to obtain
\begin{align*}
	e_{j} & =g^{-1}\cdot\psi_{j}=g^{-1}\cdot\mathscr{L}_{j}^{1}\mathrm{P}_{0}\psi_{j}\\
	& =\left[1+\left(L-\mathrm{P}_{1}i\xi\cdot\eta-\varrho_{j}\left(\left|\eta\right|\right)\right)^{-1}\left(\mathrm{P}_{1}i\xi\cdot\eta\right)\right]\mathrm{P}_{0}g^{-1}\cdot\psi_{j}\\
	& \equiv\mathscr{L}_{j}\mathrm{P}_{0}e_{j}.
\end{align*}
Here we only calculate $e_{2}(\eta)$ as an illustration.
\begin{align*}
	\mathrm{P}_{0}\psi_{2}\left(\left|\eta\right|\right) & =\beta_{20}\left(\left|\eta\right|\right)E_{0}^{1}+\beta_{21}\left(\left|\eta\right|\right)E_{1}^{1}+\beta_{22}\left(\left|\eta\right|\right)E_{2}^{1}\\
	& =\beta_{20}\left(\left|\eta\right|\right)E_{0}^{1}+\beta_{20}\left(-\left|\eta\right|\right)E_{1}^{1}+\beta_{22}\left(\left|\eta\right|\right)E_{2}^{1}\\
	& =\left(2\sqrt{\frac{3}{10}}\beta_{20}^{e}\left(\left|\eta\right|\right)-\sqrt{\frac{2}{5}}\beta_{22}^{e}\left(\left|\eta\right|\right)\right)\chi_{0}+2\sqrt{\frac{1}{2}}\beta_{20}^{o}\left(\left|\eta\right|\right)\chi_{1}\\
	& \quad+\left(2\sqrt{\frac{1}{5}}\beta_{20}^{e}\left(\left|\eta\right|\right)+\sqrt{\frac{3}{5}}\beta_{22}^{e}\left(\left|\eta\right|\right)\right)\chi_{4}\\
	& \equiv\mathtt{a}_{2,1}^{0}(\left|\eta\right|^{2})\chi_{0}+i\left|\eta\right|\mathtt{a}_{2,2}^{1}(\left|\eta\right|^{2})\chi_{1}+\mathtt{a}_{2,1}^{4}(\left|\eta\right|^{2})\chi_{4},
\end{align*}
where $\mathtt{a}_{2,1}^{0}$, $\mathtt{a}_{2,2}^{1}$ and $\mathtt{a}_{2,1}^{4}$
are smooth or analytic function when $-2\leq\gamma<-1$ or $-1\leq\gamma\leq1$
respectively. Applying $g^{-1}$ and noting that
\[
\begin{aligned}
	g^{-1}\cdot\chi_{0} & =\chi_{0},\quad g^{-1}\cdot\chi_{4}=\chi_{4},\\
	g^{-1}\cdot\chi_{1} & =g^{-1}\cdot\mu^{1/2}\xi_{1} =\mu^{1/2}\left(g\xi\right)_{1}=\mu^{1/2}\xi\cdot\frac{\eta}{\left|\eta\right|} =\sum_{j=1}^{3}\frac{\eta_{j}}{\left|\eta\right|}\chi_{j},
\end{aligned}
\]
we obtain
\begin{align*}
	\mathrm{P}_{0}e_{2}\left(\eta\right) & =g^{-1}\cdot\mathrm{P}_{0}\psi_{2}\left(\left|\eta\right|\right)\\
	& =\mathtt{a}_{2,1}^{0}(\left|\eta\right|^{2})\chi_{0}+i\mathtt{a}_{2,2}^{1}(\left|\eta\right|^{2})\sum_{j=1}^{3}\eta_{j}\chi_{j}+\mathtt{a}_{2,1}^{4}(\left|\eta\right|^{2})\chi_{4},
\end{align*}
therefore
\[
e_{2}\left(\eta\right)=\mathscr{L}_{2}\left[\mathtt{a}_{2,1}^{0}(\left|\eta\right|^{2})\chi_{0}+i\mathtt{a}_{2,2}^{1}(\left|\eta\right|^{2})\sum_{j=1}^{3}\eta_{j}\chi_{j}+\mathtt{a}_{2,1}^{4}(\left|\eta\right|^{2})\chi_{4}\right].
\]
The other $e_{j}\left(\eta\right)$'s can be computed in a similar
manner and we omit the details for brevity. Therefore we have finished the proof.


\section{Smoothing effect (Proof of Lemma \ref{improve})}\label{Hypo}

In this section, we will prove Lemma \ref{improve}, which is the key lemma in this paper.

Let $f$ be the solution to the linearized Landau equation \eqref{in.1.c},
\begin{equation}
\label{prob}
\begin{cases}
\partial_t f +\xi\cdot\nabla_{x} f = \Lambda f + K f, \\
f(0,x,\xi)=f_0(x,\xi),
\end{cases}
\end{equation}
where operators $\Lambda$ and $K$ are defined in Lemma \ref{prop1}. Let us recall that
\[
\Lambda f= \nabla_\xi \cdot \left[\sigma \nabla_\xi f\right]-\psi(\xi)f,
\]where
\[
\begin{aligned}
\psi(\xi) &\equiv  \varpi \chi_{R}(|\xi|) +\frac{1}{4}(\xi,\sigma\xi)-\frac{1}{2} \nabla_\xi \cdot \left[\sigma\xi\right]\\
&= \varpi \chi_{R}(|\xi|)+ \left\{ \frac{3}{4}\lambda_1(\xi) \left|\xi\right|^2 -\lambda_2(\xi) -\frac{1}{2}\lambda_1(\xi)\right\}.
\end{aligned}
\]
And
\[
Kf=\widetilde{K}f+\varpi\chi_R(|\xi|)f.
\]


Let $h$ be the solution to the equation without operator $K$, which captures the initial singularity of the original solution $f$:
\begin{equation}
\label{prob_noweight}
\left\{
\begin{aligned}
&\partial_t h=\mathcal{L} h,\quad\mbox{where }\mathcal{L}h=-\xi\cdot \nabla_x h + \nabla_{\xi}\cdot[\sigma\nabla_{\xi}h]-\psi(\xi)h,\\
&h(0,x,\xi)=h_0(x,\xi).
\end{aligned}
\right.
\end{equation}
From now on, the notations $\pa_{k}$ and $\pa_{ij}$ mean the first and second derivatives in the $\xi$-variable, i.e., $\pa_{k}=\pa_{\xi_{k}}$ and $\pa_{ij}=\pa^{2}_{\xi_{i}\xi_{j}}$. For simplicity, in Lemma \ref{hypodiss} and Lemma \ref{regul}, we denote the integral
$$
\int_{\R^{6}}\cdot\cdot\cdot d\xi dx \equiv \int \cdot\cdot\cdot\,.
$$
We choose $m_{0}(\xi)=\langle\xi\rangle^k$, where $k\in\mathbb{R}$. Note that $|\partial_\xi^s m_{0}|\lesssim \langle \xi \rangle^{k-|s|}.$

\subsection{Hypodissipativity}We prove the hypodissipativity of operator $\mathcal{L}$ firstly.
\begin{lemma}[Hypodissipativity]\label{hypodiss}
	Let $h$ be the solution to \eqref{prob_noweight}. For $l\in \N$, we can choose $\varpi>0$, $R>0$ large enough, $1/D$ and $\alpha$ suitably small such that
	\begin{equation}
	\left\|e^{t\mathcal{L}}h_0\right\|_{H^{l}_{x}L^{2}_{\xi}(\varrho \,m_0 )} \lesssim e^{-\lambda t}
	\left\|h_{0} \right\|_{H^{l}_{x}L^{2}_{\xi}(\varrho\,m_0)}
	\end{equation}
	for some positive constant $\lambda$.
\end{lemma}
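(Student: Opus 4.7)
The plan is a weighted $L^{2}$ energy estimate applied to each derivative $h_{\alpha}:=\partial_{x}^{\alpha}h$ for $|\alpha|\leq l$; since $\mathcal{L}$ is translation invariant in $x$, $h_{\alpha}$ solves the same equation $\partial_{t}h_{\alpha}=\mathcal{L}h_{\alpha}$. Multiplying by $h_{\alpha}\,\varrho\,m_{0}$ and integrating by parts in $x$ and $\xi$ yields
\begin{align*}
\tfrac{1}{2}\tfrac{d}{dt}\|h_{\alpha}\|_{L^{2}(\varrho m_{0})}^{2}
&= \tfrac{1}{2}\!\int h_{\alpha}^{2}\,(\xi\cdot\nabla_{x}\varrho)\,m_{0} - \int (\sigma\nabla_{\xi}h_{\alpha},\nabla_{\xi}h_{\alpha})\varrho m_{0} \\
&\quad - \int h_{\alpha}\,(\sigma\nabla_{\xi}h_{\alpha})\cdot\nabla_{\xi}(\varrho m_{0}) - \int \psi\, h_{\alpha}^{2}\,\varrho m_{0}.
\end{align*}
The engine is the coercivity $\langle -\Lambda g, g\rangle_{\xi}\geq c_{0}|g|_{L_{\sigma}^{2}}^{2}$ of Lemma~\ref{prop1}(iv) (valid once $\varpi,R$ are fixed large): the last three terms, up to weight commutators, reproduce the dissipation of $-\Lambda$ with weight $\varrho m_{0}$. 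The strategy is to absorb every commutator generated by $\varrho$ and $m_{0}$ into this dissipation plus the absorbing penalty $\varpi\chi_{R}$ hidden in $\psi$.

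The velocity-weight commutator $-\int h_{\alpha}(\sigma\nabla_{\xi}h_{\alpha})\cdot \varrho\nabla_{\xi}m_{0}$ splits via Cauchy--Schwarz into a fraction of the $\sigma$-quadratic form plus a remainder bounded, using Lemma~\ref{prop1}(ii)--(iii), by a constant times $\int \langle\xi\rangle^{\gamma}h_{\alpha}^{2}\,\varrho m_{0}$. This is dominated by $\psi\sim\langle\xi\rangle^{\gamma+2}$ for $|\xi|\geq R$ and by the $\varpi\chi_{R}$ part of $\psi$ inside $\{|\xi|\leq R\}$, hence is absorbed. The transport commutator $\tfrac{1}{2}\int h_{\alpha}^{2}(\xi\cdot\nabla_{x}\varrho)m_{0}$ vanishes identically in case~1 ($\varrho\equiv 1$); in case~2 it is $\lesssim \tfrac{1}{D}\int|\xi|h_{\alpha}^{2}\varrho m_{0}$, and since $|\xi|\lesssim\langle\xi\rangle^{\gamma+2}$ for $\gamma\in[-1,1]$, Lemma~\ref{lem-a} absorbs it into the $\Lambda$-dissipation by choosing $1/D$ small.

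The genuine obstacle is case~3, where $\varrho=e^{\alpha\vartheta(0,x,\xi)}$ depends on $\xi$. The diffusion integration by parts now produces an additional bulk term $\alpha^{2}\int h_{\alpha}^{2}(\sigma\nabla_{\xi}\vartheta,\nabla_{\xi}\vartheta)\varrho m_{0}$, and the transport commutator $\alpha\int h_{\alpha}^{2}(\xi\cdot\nabla_{x}\vartheta)\varrho m_{0}$ can no longer be absorbed by Lemma~\ref{lem-a} alone, since $|\xi|$ outgrows the dissipation rate $\langle\xi\rangle^{\gamma+2}$ when $\gamma<-1$. To close the estimate I would mirror the phase-space analysis in Proposition~\ref{weig_2}: decompose $\mathbb{R}^{6}=H_{+}\cup H_{0}\cup H_{-}$ according to the size of $\delta\langle x\rangle$ against $\langle\xi\rangle^{1-\gamma}$ (at $t=0$), and exploit that $\nabla_{\xi}\vartheta$ and $\xi\cdot\nabla_{x}\vartheta$ vanish on $H_{-}$, are $O(\langle\xi\rangle^{\gamma+2})$ on $H_{0}$ (absorbed by the $\sigma$-coercivity once $\alpha\delta$ is small), and on $H_{+}$ produce only sub-dominant contributions of size $(\delta\langle x\rangle)^{(1+\gamma)/(1-\gamma)}$ controlled by the intrinsic growth of $\vartheta$ itself.

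Summing the resulting inequalities over $|\alpha|\leq l$, choosing $\varpi,R$ large and $1/D,\alpha,\delta$ small in that order, one obtains a differential inequality of the form $\tfrac{d}{dt}\|h\|_{H_{x}^{l}L_{\xi}^{2}(\varrho m_{0})}^{2}\leq -2\lambda\,\|h\|_{H_{x}^{l}L_{\xi}^{2}(\varrho m_{0})}^{2}$ for some $\lambda>0$, and Grönwall yields the exponential decay. The main technical difficulty is the case~3 bookkeeping, where the wrong sign of $\gamma$ destroys the direct moment control and one must rely on the carefully tuned structure of $\vartheta$ --- exactly the feature that was engineered in the construction of the weight \eqref{weight-functions}.
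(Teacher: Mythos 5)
Your proposal follows essentially the same route as the paper: a weighted $L^2$ energy identity on each $\partial_x^\alpha h$ (which commutes with $\mathcal{L}$), integration by parts in $\xi$ to expose the $\sigma$-quadratic form plus a multiplicative potential $\Psi$, coercivity of $\psi$ (built from $\varpi\chi_R$ and $\frac34\lambda_1|\xi|^2-\lambda_2-\frac12\lambda_1\gtrsim\langle\xi\rangle^{\gamma+2}$) to absorb all weight commutators, and Gr\"onwall. The only noticeable divergence is in the hardest case $\varrho=e^{\alpha\vartheta(0,x,\xi)}$: you propose to re-deploy the explicit $H_+/H_0/H_-$ phase-space decomposition from Proposition~\ref{weig_2}, whereas the paper collects all the $\Psi$-correction terms and bounds them in one stroke via the pointwise estimates $|\partial^s\vartheta/\partial\langle\xi\rangle^s|\lesssim\langle\xi\rangle^{2-s}$ together with $|\xi\cdot\nabla_x\vartheta|\lesssim\delta\langle\xi\rangle^{\gamma+2}$ (these pointwise bounds encode the same tri-region structure without naming it), landing directly at $\Psi\gtrsim\langle\xi\rangle^{\gamma+2}$. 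Both routes exploit exactly the same design feature of $\vartheta$ --- on $H_+$ the combination $(\delta\langle x\rangle)^{(1+\gamma)/(1-\gamma)}$ is controlled by $\langle\xi\rangle^{1+\gamma}$ because $(1+\gamma)/(1-\gamma)<0$ and $\delta\langle x\rangle>2\langle\xi\rangle^{1-\gamma}$ there --- so your version is not a genuinely different argument, just a more verbose bookkeeping of the same cancellations. Your appeal to Lemma~\ref{lem-a} in Case~2 is a little off target (that lemma is phrased for $L$, not $\Lambda$), but the intended absorption is available directly because $\Psi\gtrsim\langle\xi\rangle^{\gamma+2}$ already controls $|\xi|/D$; this does not affect the correctness of the plan.
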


\begin{proof}
	For proof of the case $\varrho(x,\xi)=1$, we refer the reader to \cite{[CTK]}. Let $h(t)=e^{t\mathcal{L}}h_0$. Since $x$-derivative commutes with operator $\mathcal{L}$, it suffices to complete the $L^2_xL^2_\xi(\varrho\, m_0)$ estimate.
	\[
	\frac{d}{dt} \frac{1}{2} \int h^2 \varrho\, m_0
 =\int h\mathcal{L} h \varrho\, m_0    = -\int h (\xi\cdot \nabla_x h) \varrho\, m_0 + \int h\nabla_{\xi}\cdot[\sigma\nabla_{\xi}h]\varrho\, m_0 -\int \psi(\xi) h^2 \varrho\, m_0.
	\]
	Using integration by parts, the first two terms of RHS can be rewritten as follows,
	\[
	 -\int h (\xi\cdot \nabla_x h) \varrho\, m_0=\frac{1}{2} \int \left(\xi\cdot\nabla_x(\varrho\, m_0)\right)h^2,
	\]
	and
	\[
	\begin{aligned}
	&\quad\int h\nabla_{\xi}\cdot[\sigma\nabla_{\xi}h]\varrho\, m_0\\
	& =   -\int (\sigma \nabla_{\xi}h,\nabla_\xi h)\varrho\, m_0  - \int h (\sigma\nabla_{\xi}h,\nabla_{\xi}(\varrho\, m_0))\\
	& =	  -\int (\sigma \nabla_{\xi}h,\nabla_\xi h)\varrho\, m_0 + \frac{1}{2}\int (\partial_i\sigma^{ij})\partial_j(\varrho\, m_0) h^2 + \frac{1}{2} \int \sigma^{ij} \partial_{ij}(\varrho\, m_0) h^2.
	\end{aligned}
	\]
	Combing the above equations together, we arrive at
	\[
	\begin{aligned}
	\frac{1}{2}\frac{d}{dt}\int h^2 \varrho\, m_0 & =  -\int (\sigma \nabla_{\xi}h,\nabla_\xi h)\varrho\, m_0 \\
	&\quad - \int h^2 \varrho\, m_0 \underbrace{\left\{\psi(\xi) -\frac{1}{2}\frac{\xi\cdot \nabla_x(\varrho\, m_0)}{\varrho\, m_0}-\frac{1}{2} \frac{(\partial_i\sigma^{ij})\partial_j(\varrho\, m_0)}{\varrho\, m_0} -\frac{1}{2}\frac{ \sigma^{ij} \partial_{ij}(\varrho\, m_0)}{\varrho\, m_0}	 \right\}}_{=:\Psi(x,\xi)}.
	\end{aligned}
	\]
	We discuss for different ranges of $\gamma$:
	\newline \textbf{Case 1}. $\gamma\in[-1,1]$: $\varrho=\exp(\frac{\left<x\right>}{D})$, and
	\[
	\nabla_x(\varrho\, m_0)=\frac{1}{D}\frac{x}{\left<x\right>}\varrho\, m_0,\quad (\partial_i \sigma^{ij})\partial_j(\varrho\, m_0)=-\lambda_1(\xi)\xi_j \varrho \partial_j m_0,\quad \partial_{ij}(\varrho\, m_0)=\varrho \partial_{ij}m_0,
	\]
	thus
	\[
	\Psi(x,\xi)\geq c_0\left<\xi\right>^{\gamma+2}-C\left(\frac{\left<\xi\right>}{D}+\left<\xi\right>^{\gamma}\right)\gtrsim \left<\xi\right>^{\gamma+2}
	\]
	provided $D,\varpi,R$ are chosen large. This then follows
	\[
	\frac{d}{dt}\int h^2 \varrho\, m_0 \lesssim -\left\|h \right\|^2_{L^2_\sigma(\varrho\, m_0)}\leq - \left\| h\right\|^2_{L^2(\varrho\, m_0)}.
	\]
	\newline \textbf{Case 2}. $\gamma\in[-2,-1)$: $\varrho=\exp(\alpha \vartheta(0,x,\xi))$ with $\vartheta(t,x,\xi)$ defined in \eqref{weight-functions}. Direct computations show
	\[
	\begin{aligned}
	\Psi(x,\xi) & = \psi(\xi)-\frac{1}{2}\alpha \xi\cdot \nabla_x \vartheta+\frac{1}{2}\lambda_1(\xi)\xi_j\left(\alpha\partial_j \vartheta+\frac{\partial_j m_0}{m_0}\right)\\
			& \quad - \frac{1}{2}\sigma^{ij}\left(\alpha^2 \partial_i \vartheta \partial_j \vartheta+ \alpha \partial_{ij}\vartheta+\alpha \frac{\partial_i \vartheta\partial_j m_0 + \partial_j \vartheta\partial_i m_0}{m_0}+\frac{\partial_{ij}m_0}{m_0}\right)\,.
	\end{aligned}
	\]
	Noticing that $\varrho,m_0$ are functions of $\xi$ through $\left<\xi\right>$, we get $\partial_j \varrho= \frac{\partial\varrho}{\partial\left<\xi\right>}\frac{\xi_j}{\left<\xi\right>}$, and similarly for $m_0$. In view of that $\xi$ is an eigenvector of  matrix $\sigma$ associated with eigenvalue $\lambda_1(\xi)$ and the following estimates
	\[
	\left|\frac{\partial^s\vartheta}{\partial\left<\xi\right>^s}\right|\lesssim \left<\xi\right>^{2-s}~ \mbox{for}~s\in\mathbb{N}^+,\quad \left|\frac{1}{m_0}\frac{\partial^s m_0}{\partial\left<s\right>^s} \right|\lesssim \left<\xi\right>^{-s}~\mbox{for}~s\in\mathbb{N},
	\]
	we find
	\[
	\Psi(x,\xi)\geq c_0\left<\xi\right>^{\gamma+2}-C\left(\alpha\delta\left<\xi\right>^{\gamma+2}+ \alpha \left<\xi\right>^{\gamma+2} + \left<\xi\right>^{\gamma}+\alpha^2\left<\xi\right>^{\gamma+2} \right)\gtrsim\left<\xi\right>^{\gamma+2}
	\]
	if choosing $\alpha,\delta$ small and $\varpi,R$ large. Again we have
	\begin{equation}
	\label{eq:hypo-1}
	\frac{d}{dt}\int h^2 \varrho\, m_0 \lesssim -\left\|h \right\|^2_{L^2_\sigma(\varrho\, m_0)}\leq - \left\| h\right\|^2_{L^2(\varrho\, m_0)}.
	\end{equation}
	The proof is therefore complete.
\end{proof}

\subsection{Regularization} In this subsection, we will show the regularization property of the semigroup $e^{t\mathcal{L}}$ for small time.

\begin{lemma}\label{regul}
Let $h$ be the solution to \eqref{prob_noweight} and $m_0=\left<\xi\right>^{k}$ for some $k\in \mathbb{N}$. Define
\[
m_n\equiv \begin{cases}
m_0 & \ga\in\left[0,1\right],\\
\left<\xi\right>^{n|\ga|} m_0 & \ga \in \left[-2,0\right).
\end{cases}
\]
Then for $0<t\leq 1$, we have
\[
\left\|\nabla_{x} e^{t\mathcal{L}} h_0\right\|_{L^{2}(\varrho m_{0})}\lesssim t^{-3/2}\left\|h_{0} \right\|_{L^{2}(\varrho m_{1})}\,,\quad
\left\|\nabla_{\xi} e^{t\mathcal{L}} h_0 \right\|_{L^{2}(\varrho m_{0})}\lesssim t^{-1/2}\left\|h_{0} \right\|_{L^{2}(\varrho m_{1})}\,.
\]
\end{lemma}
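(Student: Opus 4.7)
I would prove Lemma \ref{regul} via a H\'erau--Villani type hypoelliptic energy estimate. The hypoelliptic nature of the problem is encoded in the commutator identity
\[
[\partial_{\xi_j},\,\xi\cdot\nabla_x] = \partial_{x_j}.
\]
Thus, differentiation in $\xi$ couples directly into differentiation in $x$, so even though $\mathcal{L}$ dissipates only in the $\xi$-direction, the transport term generates $x$-regularity, but with a worse rate (hence the $t^{-3/2}$ versus $t^{-1/2}$ scaling).

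Concretely, I would introduce the time-weighted functional
\[
\mathcal{F}(t) = A\|h\|^{2}_{L^{2}(\varrho m_{0})} + a\,t\|\nabla_{\xi}h\|^{2}_{L^{2}(\varrho m_{0})} + 2b\,t^{2}\langle \nabla_{\xi}h,\nabla_{x}h\rangle_{L^{2}(\varrho m_{0})} + c\,t^{3}\|\nabla_{x}h\|^{2}_{L^{2}(\varrho m_{0})},
\]
with constants $A\gg a\gg b\gg c>0$ tuned so that, by Cauchy--Schwarz, the mixed term is dominated and $\mathcal{F}(t)$ is equivalent to $A\|h\|^{2}_{L^{2}(\varrho m_{0})}+a t\|\nabla_{\xi}h\|^{2}_{L^{2}(\varrho m_{0})}+c t^{3}\|\nabla_{x}h\|^{2}_{L^{2}(\varrho m_{0})}$ for $0<t\leq 1$. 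Writing out the equations for $\partial_{\xi_{k}}h$ and $\partial_{x_{k}}h$, the $L^{2}(\varrho m_{0})$-term is handled by the hypodissipativity of Lemma \ref{hypodiss}. The $\|\nabla_{\xi}h\|^{2}$-term yields the $\xi$-dissipation $-\|\nabla_{\xi}h\|^{2}_{L^{2}_{\sigma}(\varrho m_{0})}$ plus an error $-2\langle\partial_{x_{k}}h,\partial_{\xi_{k}}h\rangle$ from the transport commutator; this error is precisely what the cross-term is designed to absorb. The $\|\nabla_{x}h\|^{2}$-term is clean because $\partial_{x_{k}}$ commutes with $\mathcal{L}$. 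The decisive piece is the time derivative of the cross term: the commutator $[\partial_{\xi_{k}},\,\xi\cdot\nabla_{x}]=\partial_{x_{k}}$ contributes $-\|\nabla_{x}h\|^{2}_{L^{2}(\varrho m_{0})}$, which is the gain that powers the $x$-regularization. After collecting all contributions and using the hierarchy $A\gg a\gg b\gg c$ together with the $t,t^{2},t^{3}$ prefactors, each bad term is absorbed by a good one, yielding $\frac{d}{dt}\mathcal{F}(t)\leq 0$ up to lower-order remainders.

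The principal obstacle is the velocity-weight bookkeeping, which is exactly why $m_{1}=\langle\xi\rangle^{|\gamma|}m_{0}$ (instead of $m_{0}$) appears on the right-hand side when $\gamma<0$. The coercivity of $-\Lambda$ controls only the $L^{2}_{\sigma}$-norm, which is degenerate of order $\langle\xi\rangle^{\gamma+2}$ in the $\mathbb{P}(\xi)\nabla_{\xi}$ component. Meanwhile, the commutators $[\partial_{\xi_{k}},\psi(\xi)]=\partial_{k}\psi$ and $[\partial_{\xi_{k}},\nabla_{\xi}\cdot(\sigma\nabla_{\xi})]$ generate coefficients of size $\langle\xi\rangle^{\gamma+1}$ acting on $\nabla_{\xi}h$ and $h$ respectively; for soft potentials, these are not absorbed by the available dissipation and force a loss of one power $\langle\xi\rangle^{|\gamma|}$ of velocity weight per differentiation. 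An additional subtlety occurs when $\gamma\in[-2,-1)$, because $\varrho=e^{\alpha\vartheta(0,x,\xi)}$ now depends on $\xi$, producing boundary terms proportional to $\alpha\,\nabla_{\xi}\vartheta$ upon integration by parts; these are controllable since $|\nabla_{\xi}\vartheta|\lesssim\langle\xi\rangle$ (as already noted in the proof of Proposition \ref{weig_2}) and by taking $\alpha$ sufficiently small, but they must be tracked in every step of the computation.

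Once the differential inequality $\frac{d}{dt}\mathcal{F}(t)\leq 0$ is established (modulo error terms controlled by $\|h_{0}\|^{2}_{L^{2}(\varrho m_{1})}$), integrating over $[0,t]$ gives
\[
\mathcal{F}(t)\lesssim \|h_{0}\|^{2}_{L^{2}(\varrho m_{1})},
\]
since $\mathcal{F}(0)=A\|h_{0}\|^{2}_{L^{2}(\varrho m_{0})}\leq A\|h_{0}\|^{2}_{L^{2}(\varrho m_{1})}$. Extracting the $a\,t\|\nabla_{\xi}h\|^{2}$ and $c\,t^{3}\|\nabla_{x}h\|^{2}$ contributions from the coercive lower bound on $\mathcal{F}(t)$ then yields
\[
\|\nabla_{\xi}h(t)\|^{2}_{L^{2}(\varrho m_{0})}\lesssim t^{-1}\|h_{0}\|^{2}_{L^{2}(\varrho m_{1})},\quad \|\nabla_{x}h(t)\|^{2}_{L^{2}(\varrho m_{0})}\lesssim t^{-3}\|h_{0}\|^{2}_{L^{2}(\varrho m_{1})},
\]
which is the desired conclusion after taking square roots.
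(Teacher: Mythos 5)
Your overall strategy is the same as the paper's: a H\'erau--Villani time-weighted functional built from $\|h\|^2$, $t\|\nabla_\xi h\|^2$, $t^2\langle\nabla_\xi h,\nabla_x h\rangle$, $t^3\|\nabla_x h\|^2$, with the transport commutator $[\partial_{\xi_j},\xi\cdot\nabla_x]=\partial_{x_j}$ producing the $-\|\nabla_x h\|^2$ gain in the mixed term. However, there is a concrete gap in your functional for soft potentials that the paper's choice of weights is designed precisely to avoid.

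You place the weight $\varrho\,m_0$ on all four terms, and in particular on the zeroth-order term $A\|h\|^2_{L^2(\varrho m_0)}$. Differentiating $a\,t\|\nabla_\xi h\|^2_{L^2(\varrho m_0)}$ in $t$ produces the positive, $t$-free term $a\int|\nabla_\xi h|^2\varrho\,m_0$, which must be absorbed by a $t$-free good term, i.e. by the zeroth-order dissipation. But that dissipation is $-A c_0\|h\|^2_{L^2_\sigma(\varrho m_0)}$, whose $\mathbb{P}(\xi)\nabla_\xi$-component is only $\int\langle\xi\rangle^{\gamma}|\mathbb{P}\nabla_\xi h|^2\varrho\,m_0$. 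For $\gamma<0$ this degenerates as $|\xi|\to\infty$ and cannot dominate $a\int|\mathbb{P}\nabla_\xi h|^2\varrho\,m_0$, no matter how large $A$ is or how small $a$ is: you would need $A\,c_0\langle\xi\rangle^{\gamma}\ge a$ uniformly, which fails. (The other candidate absorber, $-a\,t\,c_0\|\nabla_\xi h\|^2_{L^2_\sigma(\varrho m_0)}$, carries an extra factor of $t$ and hence vanishes faster than the bad term as $t\to 0$.) Consequently $\frac{d}{dt}\mathcal{F}(t)\le 0$ cannot be closed with your choice of weights when $\gamma\in[-2,0)$.

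The paper's fix is exactly to put the heavier weight $m_1=\langle\xi\rangle^{|\gamma|}m_0$ on the zeroth-order term: $\mathcal{F}=\int h^2\varrho\,m_1+\alpha_1 t\int|\nabla_\xi h|^2\varrho\,m_0+\dots$. Then $\|h\|^2_{L^2_\sigma(\varrho m_1)}\gtrsim\int\langle\xi\rangle^{\gamma+|\gamma|}|\mathbb{P}\nabla_\xi h|^2\varrho\,m_0+\int\langle\xi\rangle^{\gamma+2+|\gamma|}|(I_3-\mathbb{P})\nabla_\xi h|^2\varrho\,m_0\gtrsim\int|\nabla_\xi h|^2\varrho\,m_0$, and the absorption goes through. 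This is where the weight loss $m_0\mapsto m_1$ in the statement of the lemma actually originates; it is built into the functional from the start, not recovered at the end via the trivial inequality $m_0\le m_1$, as your last paragraph suggests. You correctly anticipate that a power of $\langle\xi\rangle^{|\gamma|}$ must be paid, but your functional does not implement that payment. The remainder of your outline (hypodissipativity for the $L^2$ piece, the $\mathbb{P}$/$(I_3-\mathbb{P})$ decomposition to exploit the anisotropy of $\sigma$, tracking $\nabla_\xi\varrho=\alpha\varrho\nabla_\xi\vartheta$ with $|\nabla_\xi\vartheta|\lesssim\langle\xi\rangle$ when $\gamma\in[-2,-1)$, and the hierarchy of small parameters) matches the paper.
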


\begin{proof}
Let $h(t)=e^{t\mathcal{L}}h_0$. We follow the technique introduced by H\'erau \cite{[Herau]} (see also very recent results \cite{[CTK]} for the Landau equation and \cite{[MisMou]} for the Fokker-Planck equation) to define the functional
\begin{equation}
\begin{aligned}
\mathcal{F}(t,h)&\equiv \int h^2 \varrho\, m_1  +\alpha_1 t \int \left|\nabla_\xi h\right|^2 \varrho\, m_0 + \alpha_2 t^2 \int \nabla_x h\cdot\nabla_{\xi} h\,\varrho \,m_0 \\
&\quad + \alpha_3 t^3 \int \left|\nabla_x h\right|^2 \varrho\, m_0,
\end{aligned}
\end{equation}
where $\alpha_i>0$, $i=1,2,3$ will be chosen later.

Differentiating in $t$, we have
\begin{equation}\label{eq_l6.1}
\begin{aligned}
\frac{d}{d t} \mathcal{F}(t,h(t)) & =   \frac{d}{dt }\int h^2 \varrho\, m_1 + \alpha_1  \int \left|\nabla_\xi h\right|^2 \varrho\,m_0+ \alpha_1 t \frac{d}{dt}\int \left|\nabla_\xi h\right|^2 \varrho\,m_0\\
& \quad + 2\alpha_2 t \int \nabla_x h\,\cdot\nabla_{\xi} h\,\varrho\,m_0 + \alpha_2 t^2 \frac{d}{dt}  \int \nabla_x h\cdot\nabla_{\xi} h\,\varrho\,m_0\\
& \quad +3\alpha_3 t^2 \int \left|\nabla_x h\right|^2 \varrho \,m_0 + \alpha_3 t^3 \frac{d}{dt} \int \left|\nabla_x h\right|^2 \varrho \, m_0.
\end{aligned}
\end{equation}
The first and the last terms have been calculated in previous Lemma \ref{hypodiss}. We only need to compute the third $\xi$-derivative term and the fifth mixed term.

\noindent \underline{\it Step 1: $\xi$-derivative estimate.} The energy estimate gives
\[
\frac{1}{2}\frac{d}{dt} \int \left|\nabla_\xi h\right|^2 \varrho\,m_{0}=\int (\nabla_\xi h, \mathcal{L}\nabla_\xi h )\varrho\, m_{0} + \int (\nabla_\xi h, \left[\nabla_\xi, \mathcal{L}\right]h )\varrho\, m_{0} \,.
\]
For the first term, applying the previous $L^2$ estimate in Lemma \ref{hypodiss} results in
\[
\int (\nabla_\xi h, \mathcal{L}\nabla_\xi h)\varrho\,m_{0} =-\int \left(\sigma\nabla_\xi\partial_k h,\nabla_\xi\partial_k h\right)\varrho\, m_{0} - \int \Psi(x,\xi) \left|\nabla_\xi h\right|^2 \varrho\, m_{0} .
\]
We compute the commutator in the second term to get:
\begin{equation}\label{comm_x1}
\left[\nabla_\xi,\mathcal{L}\right]h =-\nabla_{x}h + \partial_i\left[(\partial_k\sigma^{ij}) \partial_j h \right] -(\nabla_{\xi}\psi)h.
\end{equation}
Substituting back and integration by parts give
\[
\begin{aligned}
\int (\nabla_{\xi}h, [\nabla_{\xi},\mathcal{L}]h)\varrho\,m_0 & =-\int (\nabla_{\xi}h,\nabla_x h)\varrho\,m_0 - \int (\partial_k \sigma^{ij}) (\partial_j h) (\partial_{ik}h)\varrho\, m_0 \\
&\quad - \int (\partial_k \sigma^{ij})(\partial_j h)(\partial_k h) \partial_i(\varrho\,m_0) + \frac{1}{2}\int h^2 \varrho\, m_0 \left(\Delta_{\xi}\psi + \frac{(\nabla_{\xi}\psi\,,\nabla_{\xi}(\varrho\, m_0))}{\varrho \, m_0}	\right).
\end{aligned}
\]

Gathering the above equations together, we obtain
\begin{equation}\label{eq_l5.1}
\begin{aligned}
&\quad \frac{1}{2}\frac{d}{dt} \int \left|\nabla_\xi h\right|^2 \varrho\, m_{0}\\
&=-\int \left(\sigma\nabla_\xi\partial_k h,\nabla_\xi\partial_k h\right)\varrho\,m_{0} - \int \Psi(x,\xi) \left|\nabla_\xi h\right|^2\varrho\, m_{0} \\
&\quad -\int (\nabla_{\xi}h,\nabla_x h)\varrho\,m_0 - \int (\partial_k \sigma^{ij}) (\partial_j h) (\partial_{ik}h)\varrho\, m_0- \int (\partial_k \sigma^{ij})(\partial_j h)(\partial_k h) \partial_i(\varrho\,m_0) \\
&\quad  +\frac{1}{2}\int h^2 \varrho\, m_0 \left(\Delta_{\xi}\psi + \frac{(\nabla_{\xi}\psi\,,\nabla_{\xi}(\varrho\, m_0))}{\varrho \, m_0}	\right) .
\end{aligned}
\end{equation}
The last two terms are easy to estimate:
\newline \textbf{Case 1}. $\gamma\in[-1,1]$: $\varrho=\exp(\frac{\left<x\right>}{D})$.
\begin{align*}
&\left|	 \int (\partial_k \sigma^{ij})(\partial_j h)(\partial_k h) \partial_i(\varrho\,m_0) 	\right|\lesssim \int \left<\xi\right>^{\gamma} \left|\nabla_{\xi}h\right|^2 \varrho\, m_0,\\
& \left|\int h^2 \varrho\, m_0 \left(\Delta_{\xi}\psi + \frac{(\nabla_{\xi}\psi\,,\nabla_{\xi}(\varrho\, m_0))}{\varrho \, m_0}	\right) \right|\lesssim \int \left<\xi\right>^{\gamma} h^2\varrho\,m_0.
\end{align*}
\newline \textbf{Case 2}. $\gamma\in[-2,-1)$: $\varrho=\exp(\alpha \vartheta(0,x,\xi))$ with $\vartheta(t,x,\xi)$ defined in \eqref{weight-functions}. Thanks to
\[
\left|\nabla_{\xi}\varrho\right|=\left|\alpha\varrho \nabla_{\xi}\vartheta(0,x,\xi)\right|\lesssim \alpha \varrho \left<\xi\right>,
\]
we have
\begin{align*}
&\left|	 \int (\partial_k \sigma^{ij})(\partial_j h)(\partial_k h) \partial_i(\varrho\,m_0) 	\right|\lesssim \alpha \int \left<\xi\right>^{\gamma+2} \left|\nabla_{\xi}h\right|^2 \varrho\, m_0,\\
&\left|\int h^2 \varrho\, m_0 \left(\Delta_{\xi}\psi + \frac{(\nabla_{\xi}\psi\,,\nabla_{\xi}(\varrho\, m_0))}{\varrho \, m_0}	\right) \right|\lesssim \int \left(\left<\xi\right>^{\gamma}+\alpha \left<\xi\right>^{\gamma+2}\right) h^2\varrho\,m_0\,.
\end{align*}

The estimate of the fourth term $\int (\partial_k \sigma^{ij}) (\partial_j h) (\partial_{ik}h)\varrho\, m_0 $ in \eqref{eq_l5.1} needs more efforts. We want to make use of the anisotropic property of $\sigma$ matrix and decompose the vector accordingly. To this end, we introduce the cut-off function $\chi(|\xi|)$ to decompose the integral domain and transfer the $\xi$-derivative on $\sigma$ to other terms when $\xi$ away from $0$.
\[
\begin{aligned}
&\quad\int (\partial_k \sigma^{ij}) (\partial_j h) (\partial_{ik}h)\varrho\, m_0\\
&=\int (\partial_k \sigma^{ij}) (\partial_j h) (\partial_{ik}h)\chi(|\xi|)\varrho\, m_0 +\int (\partial_k \sigma^{ij}) (\partial_j h) (\partial_{ik}h)(1-\chi(|\xi|))\varrho\, m_0\\
&=\int (\partial_k \sigma^{ij}) (\partial_j h) (\partial_{ik}h)\chi\,\varrho\, m_0 -\int \sigma^{ij} (\partial_{jk}h)(\partial_{ik}h)(1-\chi)\varrho\,m_0 \\
&\quad -\int \sigma^{ij} (\partial_{j}h)(\partial_{ik}h)\partial_{k}((1-\chi)\varrho\,m_0) - \int \sigma^{ij} (\partial_{j}h)(\partial_{ikk}h)(1-\chi)\varrho\,m_0 \\
&=: T_1 +T_2 +T_3 +T_4.
\end{aligned}
\]
For $T_1$, thanks to cut-off function $\chi(|\xi|)$, we have $\left<\xi\right>^{\gamma+1}\sim \left<\xi\right>^{\gamma}$ when $|\xi|\leq 2$, thus
\[
\begin{aligned}
\left|T_1\right|
& \lesssim \int \left<\xi\right>^{\gamma} \left|\nabla_{\xi}h\right|	\left|D^2_\xi h\right| \varrho\, m_0\\
& \lesssim \varepsilon \left\|\left<\xi \right>^{\frac{\gamma}{2}} \left|D^2_\xi h\right| \right\|_{L^2(\varrho\,m_0)}+C(\varepsilon) \left\|\left<\xi \right>^{\frac{\gamma}{2}} \left|\nabla_\xi h\right| \right\|_{L^2(\varrho\,m_0)}.
\end{aligned}
\]
For $T_{4}$, we decompose the vectors with respect to the eigenspace of the matrix $\sigma$, i.e.,  parallel to $\xi$ part and perpendicular to $\xi$ part and apply integration by parts,
\[
\begin{aligned}
T_4 & = -\int \Big\{ \lambda_1(\xi) \mathbb{P}(\xi)\nabla_{\xi}h\cdot \mathbb{P}(\xi) \nabla_{\xi}\partial_{kk}h	+	\lambda_2(\xi) (I_3-	 \mathbb{P}(\xi))\nabla_{\xi}h\cdot (I_3	-\mathbb{P}(\xi)) \nabla_{\xi}\partial_{kk}h \Big\}(1-\chi(|\xi|))\varrho\,m_0\\
	& = \int \Big\{( \partial_k \lambda_1 ) \mathbb{P}\nabla_{\xi}h\cdot \mathbb{P} \nabla_{\xi}\partial_{k}h	+	(\partial_k\lambda_2) (I_3-	 \mathbb{P})\nabla_{\xi}h\cdot (I_3	-\mathbb{P}) \nabla_{\xi}\partial_{k}h \Big\}(1-\chi)\varrho\,m_0 \\
	& \quad+ \int \left\{
	\begin{array}[c]{l}
	\lambda_1 \big[(\partial_{k}\mathbb{P})\nabla_{\xi}h\cdot \mathbb{P} \nabla_{\xi}\partial_{k}h	+ \mathbb{P}\nabla_{\xi}h\cdot (\partial_k \mathbb{P}) \nabla_{\xi}\partial_{k}h	\big]\\
		+	\lambda_2 \big[(\partial_{k}(I_3-\mathbb{P}))\nabla_{\xi}h\cdot (I_3- \mathbb{P}) \nabla_{\xi}\partial_{k}h	+ (I_3-\mathbb{P})\nabla_{\xi}h\cdot (\partial_k(I_3- \mathbb{P})) \nabla_{\xi}\partial_{k}h	\big]
	\end{array} \right\}(1-\chi)\varrho\,m_0\\
	&\quad +  \int \Big\{  \lambda_1 \mathbb{P}\nabla_{\xi}\partial_{k} h\cdot \mathbb{P} \nabla_{\xi}\partial_{k}h	+	\lambda_2 (I_3-	 \mathbb{P})\nabla_{\xi}\partial_{k}h\cdot (I_3	-\mathbb{P}) \nabla_{\xi}\partial_{k}h \Big\}(1-\chi)\varrho\,m_0\\
	&\quad +  \int \Big\{  \lambda_1 \mathbb{P}\nabla_{\xi} h\cdot \mathbb{P} \nabla_{\xi}\partial_{k}h	+	\lambda_2 (I_3-	 \mathbb{P})\nabla_{\xi}h\cdot (I_3	-\mathbb{P}) \nabla_{\xi}\partial_{k}h \Big\}\partial_{k}((1-\chi)\varrho\,m_0)\\
	&=: T_{41}+T_{42}+T_{43}+T_{44}.
\end{aligned}
\]
Noticing that
\[
T_2=-T_{43},\quad T_3=-T_{44},
\]
we get
\[
\int (\partial_k \sigma^{ij}) (\partial_j h) (\partial_{ik}h)\varrho\, m_0 =T_1+T_2+T_3+T_4=T_1+T_{41}+T_{42}.
\]
Using $\left|\nabla_{\xi}\lambda_1\right|\lesssim\left<\xi\right>^{\gamma-1}$, $\left|\nabla_{\xi}\lambda_2\right|\lesssim\left<\xi\right>^{\gamma+1}$,  $\left|\nabla_{\xi}\mathbb{P}\right|\lesssim \left<\xi\right>^{-1}$ when $|\xi|>1$ and Young's inequality, $T_{41}$ and $T_{42}$ are bounded by
\begin{align*}
\left|T_{41}\right|,\left|T_{42}\right|
&	\lesssim \int \left<\xi\right>^{\gamma-1} \left|\nabla_{\xi}h\right|\left|\mathbb{P}\nabla_{\xi}\partial_{k}h\right|\varrho\,m_0
	+  \int \left<\xi\right>^{\gamma-1} \left|\mathbb{P}\nabla_{\xi}h\right|\left|D^2_{\xi}h\right|\varrho\,m_0 \\
&\quad +  \int \left<\xi\right>^{\gamma+1} \left|\nabla_{\xi}h\right|\left|(I_3-\mathbb{P})\nabla_{\xi}\partial_{k}h\right|\varrho\,m_0
	+ \int \left<\xi\right>^{\gamma+1} \left|(I_3-\mathbb{P})\nabla_{\xi}h\right|\left|D^2_{\xi}h\right|\varrho\,m_0 \\
&\lesssim \varepsilon \left\|\left<\xi\right>^{\frac{\gamma}{2}} D^2_{\xi}h \right\|^2_{L^2(\varrho\,m_0)}		+		\varepsilon \left\|\left<\xi\right>^{\frac{\gamma+2}{2}} (I-\mathbb{P})\nabla_{\xi}\partial_k h \right\|^2_{L^2(\varrho\,m_0)}\\
&\quad + C(\varepsilon) \left\|\left<\xi\right>^{\frac{\gamma}{2}} \nabla_{\xi}h \right\|^2_{L^2(\varrho\,m_0)}		+		C(\varepsilon )\left\|\left<\xi\right>^{\frac{\gamma+2}{2}} (I-\mathbb{P})\nabla_{\xi} h \right\|^2_{L^2(\varrho\,m_0)}.\\
\end{align*}
Collecting above inequalities and choosing $\alpha$ and $\varepsilon$ suitably small, we have
\begin{equation}\label{eq_l6.6}
\begin{aligned}
&\quad\frac{d}{dt} \int \left|\nabla_\xi h\right|^2 \varrho\, m_{0}\\
& \leq -c_{0}\left\| \nabla_\xi h \right\|^2_{L^2_\sigma(\varrho\, m_{0})}+ C \left\|h \right\|^2_{L^2_\sigma(\varrho\,m_0)}+ C\int \left|\nabla_x h\right|\left|\nabla_{\xi} h\right|\varrho\, m_{0} \\
& \leq  -c_{0}\left\| \nabla_\xi h \right\|^2_{L^2_\sigma(\varrho\, m_{0})}+ C \left\|h \right\|^2_{L^2_\sigma(\varrho\,m_0)} + C\varepsilon_1 t \left\|\nabla_{x}h \right\|^2_{L^2(\varrho \,m_0)}+ C (\varepsilon_1 t)^{-1} \left\|\nabla_{\xi}h \right\|^2_{L^2(\varrho \,m_0)}.
\end{aligned}
\end{equation}
\noindent \underline{\it Step 2: mixed term estimate.}
\begin{equation}
\label{eq:reg-0}
\begin{aligned}
 \frac{d}{dt} \int (\nabla_{x} h\, ,\nabla_{\xi} h)\varrho \,m_{0} &= \int (\nabla_{x}\mathcal{L}h\,,\nabla_{\xi} h )\varrho\,m_{0}  + \int (\nabla_{x} h\,,\nabla_{\xi} \mathcal{L}h)\varrho\, m_{0} \\
&= \int \left\{ \left(\mathcal{L}\nabla_{x} h, \nabla_{\xi} h\right)+  \left(\nabla_{x} h, \mathcal{L}\nabla_{\xi} h\right)  \right\}\varrho\,m_{0} \\
&\quad 	+\int \left(\left[\nabla_{x},\mathcal{L}\right]h, \nabla_{\xi} h\right)\varrho\,m_{0}	 + \int  \left(\nabla_{x} h ,\left[\nabla_{\xi}, \mathcal{L}\right]h\right)\varrho\,m_{0}  \\
&\equiv I_1+I_2+I_3.
\end{aligned}
\end{equation}

Let us calculate $I_1$ firstly. Given two functions $f$  and $g$, we have
\[
\mathcal{L}(fg)=f \mathcal{L}g+ g\mathcal{L}f + \psi\,f\,g+2\left(\nabla_\xi f,\sigma \nabla_\xi g\right).
\]
In addition, use integration by parts to yield
\[
\begin{aligned}
\int \mathcal{L} f \varrho\, m_{0}  & = -\int (\xi\cdot \nabla_{x} f)\varrho \, m_{0}  + \int \nabla_\xi\cdot\left[\sigma\nabla_\xi f\right]\varrho\, m_{0}-\int \psi f \varrho\,m_{0}   \\
&= \int\left( \frac{\xi\cdot \nabla_x (\varrho m_0) + (\partial_i \sigma^{ij})\partial_j(\varrho m_0)+ \sigma^{ij}\partial_{ij}(\varrho m_0) }{\varrho m_0} 	-\psi(\xi)	\right) f\varrho\,m_0.
\end{aligned}
\]
Hence
\begin{equation}
\label{eq:reg-1}
\begin{aligned}
I_1 & = \int \left\{\mathcal{L} \left(\nabla_{x} h\, ,\nabla_{\xi} h\right)-\psi\left(\nabla_{x} h\, ,\nabla_{\xi} h\right)-2 \left(\nabla_\xi \pa_{k} h, ,\sigma\nabla_\xi \pa_{x_{k}} h\right) \right\}\varrho\,m_{0} \\
&= \int\left\{\frac{\xi\cdot \nabla_x (\varrho m_0) + (\partial_i \sigma^{ij})\partial_j(\varrho m_0)+ \sigma^{ij}\partial_{ij}(\varrho m_0) }{\varrho m_0} 	-2\psi(\xi)	\right\} \left(\nabla_{x} h\, ,\nabla_{\xi} h\right) \varrho \,m_{0} \\
&\quad - 2\int \left(\sigma\nabla_\xi \pa_{k} h,\nabla_\xi \pa_{x_{k}} h\right) \varrho\, m_{0} \\
&\lesssim \int \left<\xi\right>^{\gamma+2} \left|\nabla_x h \right| \left|\nabla_{\xi} h\right|\varrho\,m_0 + \int \left<\xi\right>^{\gamma} \left|\mathbb{P}\nabla_{\xi}\partial_{k}h\right|\left|\mathbb{P}\nabla_{\xi}\partial_{x_k}h\right|\varrho\, m_0\\
&\quad + \int \left<\xi\right>^{\gamma+2}\left|(I_3-\mathbb{P})\nabla_{\xi}\partial_{k}h\right|\left|(I_3-\mathbb{P})\nabla_{\xi}\partial_{x_k}h\right|\varrho\, m_0\,.
\end{aligned}
\end{equation}

For $I_2$, thanks to that $\nabla_x$ commutes with $\mathcal{L}$ operator, we immediately have
\begin{equation}
\label{eq:reg-2}
I_2=0.
\end{equation}

Now we deal with $I_3$. From \eqref{comm_x1}, we get
\begin{equation}
\label{eq:reg-3}
\begin{aligned}
I_3
& = \int  \left(\nabla_{x}h, [\nabla_{\xi},\mathcal{L}]h\right)\varrho\,m_0\\
& = -\int \left|\nabla_{x} h\right|^2 \varrho m_0 +\underbrace{\int (\partial_{x_{k}}h) \partial_{i}[(\partial_{k}\sigma^{ij})\partial_{j}h ]\varrho\,m_0}_{=:T_1} \underbrace{-\int h (\nabla_{\xi}\psi,\nabla_{x} h)\varrho\,m_0}_{=:T_2}.
\end{aligned}
\end{equation}
By integration by parts,
\[
T_1 = -\int (\partial_{k}\sigma^{ij})(\partial_{i}\partial_{x_{k}} h) (\partial_j h)\varrho\,m_0 - \int (\partial_{k}\sigma^{ij})(\partial_{x_{k}}h)(\partial_{j} h) \partial_{i}(\varrho\,m_0)=: T_{11}+T_{12}.
\]
Direct computations show for $\gamma\in [-1,1]$,
\[
\left|T_{12}\right|\lesssim \int \left< \xi \right>^{\gamma} \left|\nabla_{x}h\right|\left|\nabla_{\xi} h\right|\varrho m_0,
\]
and for $\gamma\in [-2,-1)$,
\[
\left|T_{12}\right|\lesssim \int \left[\alpha \left<\xi\right>^{\gamma+2}+\left< \xi \right>^{\gamma} \right]\left|\nabla_{x}h\right|\left|\nabla_{\xi} h\right|\varrho m_0,
\]
so they are both bounded by
\[
\left|T_{12}\right|\lesssim \int  \left<\xi\right>^{\gamma+2}\left|\nabla_{x}h\right|\left|\nabla_{\xi} h\right|\varrho m_0.
\]

We apply similar calculation as those in step 1 to $T_{11}$,  i.e., introduce cut-off function $\chi(|\xi|)$ and make use of the anisotropic property of $\sigma$ matrix:
\[
\begin{aligned}
T_{11}
& = -\int (\partial_{k}\sigma^{ij})(\partial_{i}\partial_{x_{k}} h) (\partial_j h)\chi(|\xi|)\varrho\,m_0	-\int (\partial_{k}\sigma^{ij})(\partial_{i}\partial_{x_{k}} h) (\partial_j h)(1-\chi(|\xi|))\varrho\,m_0\\
& =   -\int (\partial_{k}\sigma^{ij})(\partial_{i}\partial_{x_{k}} h) (\partial_j h)\chi(|\xi|)\varrho\,m_0 + \int \sigma^{ij} (\partial_{i}\partial_{x_{k}} \partial_{k} h) (\partial_j h)(1-\chi(|\xi|))\varrho\,m_0\\
&\quad +\int \sigma^{ij} (\partial_{i}\partial_{x_{k}} h) (\partial_j\partial_{k}  h)(1-\chi(|\xi|))\varrho\,m_0 + \int \sigma^{ij} (\partial_{i}\partial_{x_{k}}  h) (\partial_j h)\partial_{k}\left((1-\chi(|\xi|))\varrho\,m_0\right)\\
& =: T_{111}+T_{112}+T_{113}+T_{114}.
\end{aligned}
\]
For $T_{111}$, since $\left<\xi\right>^{\gamma+1}\sim\left<\xi\right>^{\gamma}$ for $|\xi|\leq 2$, we have
\[
T_{111} \lesssim \int \left<\xi\right>^\ga \left| \nabla_\xi h\right| \left|\nabla_{x}\nabla_\xi h\right|\varrho\,m_{0}.
\]
For $T_{112}$, we decompose $\nabla_{\xi}\partial_{k}\partial_{x_{k}}h$ and $\nabla_{\xi}h$ into parallel to $\xi$ and perpendicular to $\xi$ components to obtain
\[
\begin{aligned}
T_{112} & = \int
\left[
\begin{array}[c]{l}
\lambda_1(\xi)\big(\mathbb{P}(\xi)\nabla_{\xi} \partial_{x_k} \partial_{k} h,\mathbb{P}(\xi) \nabla_{\xi} h\big)\\
	+		 \lambda_2(\xi)\big((I_3-\mathbb{P}(\xi))\nabla_{\xi} \partial_{x_k} \partial_{k} h,(I_3-\mathbb{P}(\xi)) \nabla_{\xi} h\big)
\end{array}
\right](1-\chi(|\xi|))\varrho\,m_0\\
& = -\int \left[
\begin{array}[c]{l}
(\partial_{k}\lambda_1)	\big(\mathbb{P}\nabla_{\xi} \partial_{x_k}  h,\mathbb{P} \nabla_{\xi} h\big)\\
+ (\partial_{k}\lambda_2) \big((I_3-\mathbb{P})\nabla_{\xi} \partial_{x_k} h,(I_3-\mathbb{P}) \nabla_{\xi} h\big)
\end{array}
\right](1-\chi(|\xi|))\varrho\,m_0\\
& \quad - \int \left[
\begin{array}[c]{l}
\lambda_1 \left[	\big((\partial_{k}\mathbb{P})\nabla_{\xi} \partial_{x_k}  h,\mathbb{P} \nabla_{\xi} h\big)+	\big(\mathbb{P}\nabla_{\xi} \partial_{x_k}  h,(\partial_{k}\mathbb{P}) \nabla_{\xi} h\big)  \right]\\
+ \lambda_2 \left[  \big(\partial_{k}(I_3-\mathbb{P})\nabla_{\xi} \partial_{x_k} h,(I_3-\mathbb{P}) \nabla_{\xi} h\big)+ \big((I_3-\mathbb{P})\nabla_{\xi} \partial_{x_k} h,\partial_{k}(I_3-\mathbb{P}) \nabla_{\xi} h\big)		\right]
\end{array}
\right](1-\chi(|\xi|))\varrho\,m_0\\
& \quad - \int \left[ \lambda_1\big(\mathbb{P}\nabla_{\xi} \partial_{x_k}  h,\mathbb{P} \nabla_{\xi}\partial_{k} h\big) +\lambda_2 \big((I_3-\mathbb{P})\nabla_{\xi} \partial_{x_k} h,(I_3-\mathbb{P}) \nabla_{\xi} \partial_{k}h\big)\right] (1-\chi(|\xi|))\varrho\,m_0\\
& \quad - \int \left[ \lambda_1\big(\mathbb{P}\nabla_{\xi} \partial_{x_k}  h,\mathbb{P} \nabla_{\xi} h\big) +\lambda_2 \big((I_3-\mathbb{P})\nabla_{\xi} \partial_{x_k} h,(I_3-\mathbb{P}) \nabla_{\xi} h\big)\right]\partial_{k}  \big((1-\chi(|\xi|))\varrho\,m_0\big)\\
& =: T_{1121}+ T_{1122}+ T_{1123}+ T_{1124}.
\end{aligned}
\]
By Lemma \ref{prop1}, $\left|\nabla_{\xi}\lambda_1\right|\lesssim \left<\xi\right>^{\gamma-1}$ and $\left|\nabla_{\xi}\lambda_2\right|\lesssim \left<\xi\right>^{\gamma+1}$, thus
\[
\left|T_{1121}\right|\lesssim \int \big(\left<\xi\right>^{\gamma-1} \left|\mathbb{P}\nabla_{\xi}\partial_{x_{k}}h\right| \left|\mathbb{P}\nabla_{\xi}h\right|+\left<\xi\right>^{\gamma+1} \left|(I_3-\mathbb{P})\nabla_{\xi}\partial_{x_{k}}h \right|\left|(I_3-\mathbb{P})\nabla_{\xi}h \right|	\big)\varrho\,m_0.
\]
Since $\left|\nabla_{\xi}\mathbb{P}\right|\lesssim \left<\xi\right>^{-1}$ when $|\xi|>1$,
\[
\begin{aligned}
\left|T_{1122}\right| & \lesssim \int \left< \xi \right>^{\gamma-1}\big[ \left|\nabla_{\xi}\nabla_{x}h\right|	\left|\mathbb{P}\nabla_{\xi}h\right| + \left|\mathbb{P}\nabla_{\xi} \partial_{x_{k}} h\right| \left|\nabla_{\xi} h\right| 	\big]\varrho\, m_0\\
&	\quad + \int  \left< \xi \right>^{\gamma+1}\big[ \left|\nabla_{\xi}\nabla_{x}h\right|	\left|(I_3-\mathbb{P})\nabla_{\xi}h\right| + \left|(I_3-\mathbb{P})\nabla_{\xi} \partial_{x_{k}} h\right| \left|\nabla_{\xi} h\right| 	\big]\varrho\, m_0.
\end{aligned}
\]
Furthermore, observing that
\[
T_{1123}=-T_{113},\quad T_{1124}= -T_{114},
\]
it then follows from combination of the above inequalities that
\[
\begin{aligned}
T_1
& = T_{11} + T_{12} = T_{111}+ T_{112} + T_{113} + T_{114} + T_{12}\\
& =  T_{111}+ T_{1121}+ T_{1122} +T_{12},
\end{aligned}
\]
and
\begin{equation}
\label{eq:reg-4}
\begin{aligned}
\left|T_1\right| & \lesssim \int  \left<\xi\right>^{\gamma+2}\left|\nabla_{x}h\right|\left|\nabla_{\xi} h\right|\varrho m_0 + \int \left<\xi\right>^\ga \left| \nabla_\xi h\right| \left|\nabla_{x}\nabla_\xi h\right|\varrho\,m_{0}\\
& \quad +\int  \left< \xi \right>^{\gamma+1}\big[ \left|\nabla_{\xi}\nabla_{x}h\right|	\left|(I_3-\mathbb{P})\nabla_{\xi}h\right| + \left|(I_3-\mathbb{P})\nabla_{\xi} \partial_{x_{k}} h\right| \left|\nabla_{\xi} h\right| 	\big]\varrho\, m_0.
\end{aligned}
\end{equation}

We still need to estimate $T_2$, use integration by parts to show
\[
\begin{aligned}
T_2
& =-\int h \left(\nabla_{\xi}\psi,\nabla_{x} h\right)\varrho\,m_0 = -\frac{1}{2} \int \left(\nabla_{\xi}\psi,\nabla_{x}h^2\right)\varrho\,m_0\\
&= \frac{1}{2} \int h^2 \left[\frac{(\nabla_{\xi}\psi,\nabla_{x}\varrho)}{\varrho}\right]\varrho\,m_0,
\end{aligned}
\]
thus
\begin{equation}\label{eq:reg-5}
\begin{aligned}
\left|T_2\right| & \lesssim
\begin{cases}
\frac{1}{D} \int h^2 \left<\xi\right>^{\gamma+1} \varrho\, m_0, & \gamma\in[-1,1],\\
\alpha\delta \int h^2 \left<\xi\right>^{2(1+\gamma)}\varrho\, m_0, & \gamma \in [-2,-1)
\end{cases}\\
& \lesssim \int h^2 \left<\xi\right>^{\gamma+1} \varrho\, m_0.
\end{aligned}
\end{equation}

Combining the equations \eqref{eq:reg-0}--\eqref{eq:reg-5}, we conclude
\[
\begin{aligned}
&\quad\frac{d}{dt} \int (\nabla_{x} h\, ,\nabla_{\xi} h)\varrho \,m_{0}\\
& \leq   -\int \left|\nabla_{x} h\right|^2 \varrho \, m_0 + C \int h^2 \left<\xi\right>^{\gamma+1} \varrho\, m_0 + C \int  \left<\xi\right>^{\gamma+2}\left|\nabla_{x}h\right|\left|\nabla_{\xi} h\right|\varrho m_0 \\
&\quad  + C \int \left<\xi\right>^{\gamma} \left|\mathbb{P}\nabla_{\xi}\partial_{k}h\right|\left|\mathbb{P}\nabla_{\xi}\partial_{x_k}h\right|\varrho\, m_0 + C\int \left<\xi\right>^{\gamma+2}\left|(I_3-\mathbb{P})\nabla_{\xi}\partial_{k}h\right|\left|(I_3-\mathbb{P})\nabla_{\xi}\partial_{x_k}h\right|\varrho\, m_0\\
&\quad + C \int \left<\xi\right>^\ga \left| \nabla_\xi h\right| \left|\nabla_{x}\nabla_\xi h\right|\varrho\,m_{0} \\
&\quad + C \int  \left< \xi \right>^{\gamma+1}\big[ \left|\nabla_{\xi}\nabla_{x}h\right|	\left|(I_3-\mathbb{P})\nabla_{\xi}h\right| + \left|(I_3-\mathbb{P})\nabla_{\xi} \partial_{x_{k}} h\right| \left|\nabla_{\xi} h\right| 	\big]\varrho\, m_0.
\end{aligned}
\]
Thanks to Young's inequality, we arrive at
\begin{equation}\label{eq:reg-6}
\begin{aligned}
\frac{d}{dt} \int (\nabla_{x} h\, ,\nabla_{\xi} h)\varrho \,m_{0}
&\leq -\int \left|\nabla_{x} h\right|^2 \varrho \, m_0 + C \left\|\left<\xi\right>^{\frac{\gamma+1}{2}} h\right\|^2_{L^2(\varrho\, m_0)}\\
& \quad + C\varepsilon_2 t\left\|\nabla_{x} h \right\|^2_{L^2_{\sigma}(\varrho\,m_0)}+  C(\varepsilon_2 t)^{-1}\left\|\nabla_{\xi} h \right\|^2_{L^2_{\sigma}(\varrho\,m_0)}.
\end{aligned}
\end{equation}

\noindent \underline{\it Step 3: Conclusion.}
Let us recall the estimates \eqref{eq:hypo-1}, we have $L^{2}$ estimate
\[
\frac{d}{dt} \int h^2 \varrho\, m_1 \leq -c_0 \left\|h \right\|_{L^2_\sigma(\varrho\,m_1)}^2,
\]
and $x$-derivative estimate,
\[
\frac{d}{dt} \int |\nabla_{x} h|^2 \varrho\,m_0 \leq -c_0 \left\| \nabla_x h \right\|^2_{L^2_\sigma(\varrho\,m_0)}.
\]
The fourth term in \eqref{eq_l6.1} is bounded by
\[
2\alpha_2 t \int (\nabla_{x} h\, ,\nabla_{\xi}  h)\varrho\,m_0\leq \alpha_2 \left(\varepsilon_3 t^2 \left\|\nabla_{x} h \right\|^2_{L^2(\varrho\,m_0)} + \varepsilon_3^{-1}  \left\|\nabla_\xi h \right\|^2_{L^2(\varrho\,m_0)}	\right).
\]
Together with $\xi$-derivative \eqref{eq_l6.6} and mixed term \eqref{eq:reg-6}, we have
\[
\begin{aligned}
\frac{d}{dt} \mathcal{F}(t,h(t)) & \lesssim \left\|h \right\|^2_{L^2_\sigma(\varrho\,m_1)}\Big(	-c_0 +\alpha_1 +\alpha_1 t +\alpha_1 \varepsilon_1^{-1} + \alpha_2 \varepsilon_3^{-1} +\alpha_2 t^2 \Big)\\
&\quad +  \left\|\nabla_{\xi}h \right\|^2_{L^2_\sigma(\varrho\,m_0)} t\,\Big(- c_0 \alpha_1 +\alpha_2 \varepsilon_2^{-1}		\Big)\\
&\quad +  \left\|\nabla_x h \right\|^2_{L^2(\varrho\,m_0)}t^2\,\Big(	-\alpha_2  +\alpha_1 \varepsilon_1+3\alpha_3 +\alpha_2\varepsilon_3		 \Big)\\
&\quad +  \left\|\nabla_x h \right\|^2_{L^2_\sigma(\varrho\,m_0)}t^3 \Big( - c_0 \alpha_3 +\alpha_2 \varepsilon_2 			\Big).
\end{aligned}
\]
Set $\varepsilon_2=\varepsilon_3=\varepsilon^2$, $\varepsilon_1=\varepsilon^3$, $\alpha_1=\varepsilon^{7/2}$, $\alpha_2=\varepsilon^6$, $\alpha_3=\varepsilon^{15/2}$. When $t\leq 1$, for sufficiently small $\varepsilon$, we immediately see all the coefficients above are non-positive. This implies the functional $\mathcal{F}$ is decreasing. Moreover, noting that $\alpha_1\alpha_3>2\alpha_2^2$, the mixed term in functional $\mathcal{F}$ can be dominated by other terms. Thus we conclude
\[
\left\| h(t) \right\|^2_{L^2(\varrho\,m_1)}+  \left( t \left\| \nabla_\xi h(t) \right\|^2_{L^2(\varrho\,m_0)}+t^3\left\|\nabla_{x} h(t) \right\|^2_{L^2(\varrho\,m_0)}\right) \lesssim\mathcal{F}(t,h(t))\leq\mathcal{F}(0,h(0))=\left\| h(0) \right\|^2_{L^2(\varrho\,m_1)}.
\]
This completes the proof of the lemma.
\end{proof}

\subsection{Proof of Lemma \ref{improve}}

Before going to the proof of Lemma \ref{improve}, we need to show that the operator $K$ is bounded in the space $H^{l}_{x}L^{2}_{\xi}(m_{0})$.
\begin{lemma}\label{bdK}
For $l\in \N$ and $g_{0}\in H^{l}_{x}L^{2}_{\xi}(\varrho \,m_{0})$, there exists a positive constant $C$ such that
\[
\left\|K g_{0}\right\|_{H^{l}_{x}L^{2}_{\xi}(\varrho\,m_{0})}\leq C\left\| g_{0}\right\|_{H^{l}_{x}L^{2}_{\xi}(\varrho\,m_{0})}.
\]
\end{lemma}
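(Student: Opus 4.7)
The plan is to reduce everything to a pointwise-in-$x$ $L^{2}_{\xi}$-estimate and then apply Schur's test, using the double Gaussian decay built into the Landau kernel to absorb the weights.

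First, since $K$ acts only in the velocity variable, the spatial derivatives commute with it: $\pa_{x}^{\alpha}(K g_{0})=K(\pa_{x}^{\alpha}g_{0})$. Hence it suffices to show the uniform-in-$x$ bound
\[
\int_{\mathbb{R}^{3}}|Kh(\xi)|^{2}\varrho(x,\xi)m_{0}(\xi)\,d\xi\lesssim \int_{\mathbb{R}^{3}}|h(\xi)|^{2}\varrho(x,\xi)m_{0}(\xi)\,d\xi,
\]
after which we apply it to $h=\pa_{x}^{\alpha}g_{0}$ for each $|\alpha|\leq l$ and integrate in $x$. Next, decompose $K=\widetilde{K}+\varpi\chi_{R}$. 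The truncation term is trivial since $|\varpi\chi_{R}|\leq \varpi$ pointwise in $\xi$, so only the integral operator $\widetilde{K}$ requires work.

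Second, I would conjugate out the weight. Writing $w(x,\xi)=\sqrt{\varrho(x,\xi)m_{0}(\xi)}$ and $\tilde{h}=wh$, the desired inequality is equivalent to $L^{2}_{\xi}$-boundedness (uniformly in $x$) of the integral operator with kernel
\[
K_{w}(x;\xi,\xi_{*})=\frac{w(x,\xi)}{w(x,\xi_{*})}\,\tilde{k}(\xi,\xi_{*}).
\]
By Schur's test, it is enough to show that $\sup_{x,\xi}\int|K_{w}(x;\xi,\xi_{*})|\,d\xi_{*}$ and the symmetric quantity are both finite. Here I would invoke the standard pointwise estimate for the Landau kernel (Lemma~3 of \cite{[Guo]}, or the analogue in \cite{[CTK]}): $\tilde{k}(\xi,\xi_{*})$ carries a joint Gaussian factor of the form $\mu^{c}(\xi)\mu^{c}(\xi_{*})$ inherited from $Z(\xi,\xi_{*})=\mu(\xi)\mu(\xi_{*})\Phi(\xi-\xi_{*})$ after the $\mu^{-1/2}$ conjugation, multiplied by polynomial factors and a local singularity no worse than $|\xi-\xi_{*}|^{\gamma}$, which is integrable on $\mathbb{R}^{3}$ since $\gamma\geq -2>-3$.

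Third, I would check the weight ratio case by case from \eqref{weight-functions}. When $\varrho\equiv 1$, or when $\varrho(x,\xi)=\exp(\langle x\rangle/D)$ is independent of $\xi$, the ratio $w(x,\xi)/w(x,\xi_{*})=\sqrt{m_{0}(\xi)/m_{0}(\xi_{*})}$ grows only polynomially in $\langle\xi\rangle,\langle\xi_{*}\rangle$ and is trivially absorbed by a sliver of the Gaussian in $\tilde{k}$. The only delicate case is $\varrho(x,\xi)=\exp(\alpha\vartheta(0,x,\xi))$ for $\gamma\in[-2,-1)$: since $\vartheta(0,x,\xi)\lesssim\langle\xi\rangle^{2}$, the ratio is bounded by $e^{\alpha(\langle\xi\rangle^{2}+\langle\xi_{*}\rangle^{2})/2}$, and one must choose $\alpha$ small enough that this is dominated by the Gaussian in $\tilde{k}$. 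This is the main (and only) obstacle; fortunately, the smallness of $\alpha$ required is precisely the one already fixed in Proposition~\ref{weig_2}, so no additional restriction is introduced. Once the weight ratio is absorbed, both Schur integrals reduce to the uniform integrability of $\langle\xi\rangle^{N}\langle\xi_{*}\rangle^{N}e^{-c(|\xi|^{2}+|\xi_{*}|^{2})}|\xi-\xi_{*}|^{\gamma}$, which is immediate.
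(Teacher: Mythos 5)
Your reduction (commuting $K$ with spatial derivatives, conjugating out the weight, and bounding the conjugated kernel by Schur's test using the joint Gaussian factor $\mu^{1/2}(\xi)\mu^{1/2}(\xi_*)$ in $\tilde{k}$) is correct, and your observations about integrability of $|\xi-\xi_*|^{\gamma}$ for $\gamma\ge -2$ and absorption of the weight ratio by the Gaussian for small $\alpha$ are sound. The route differs from the paper's in the key technical step for $\gamma\in[-2,-1)$: the paper writes $K_\varrho=m_0^{1/2}Km_0^{-1/2}-(m_0^{1/2}Km_0^{-1/2}-K_\varrho)$ and bounds the \emph{difference} kernel, whose weight factor is $\bigl(1-e^{\alpha(\vartheta(0,x,\xi)-\vartheta(0,x,\xi_*))}\bigr)$, using the Taylor expansion and the region decomposition $A_{++},A_{**},A_{+*},A_{*+}$ already developed in Proposition \ref{weig_2}, thereby obtaining an $O(\alpha)$-small perturbation which is then added to the $\xi$-independent-weight case. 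You instead crush the full weight ratio $e^{\alpha(\vartheta(0,x,\xi)-\vartheta(0,x,\xi_*))}\le e^{C\alpha\langle\xi\rangle^2}$ (using $\vartheta(0,\cdot,\cdot)\ge 0$) directly against the Gaussian in a single Schur estimate. Your argument is more elementary and gives a one-shot proof of boundedness, while the paper's perturbative version yields the extra quantitative information that the weighted operator differs from the polynomially-weighted one by $O(\alpha)$; for the present lemma only boundedness is needed, so either route is sufficient. One small remark: $\vartheta(0,x,\xi)-\vartheta(0,x,\xi_*)\le\vartheta(0,x,\xi)\lesssim\langle\xi\rangle^2$ (since $\vartheta\ge 0$ at $t=0$), so you actually only need to absorb $e^{C\alpha\langle\xi\rangle^2}$ into the Gaussian in $\xi$, not $e^{\alpha(\langle\xi\rangle^2+\langle\xi_*\rangle^2)/2}$; this slightly simplifies the bookkeeping but your stated bound also works.
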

\begin{proof}
Since $K$ operator commutes with $x$-derivatives, it suffices to prove
\[
\left\|K g_{0}\right\|_{L^{2}(\varrho\,m_{0})}\leq C\left\|g_{0}\right\|_{L^{2}(\varrho\,m_{0})}.
\]
Let $u=g_0 \left(\varrho\,m_0\right)^{1/2}$, it is equivalent to show
\begin{equation}\label{eq:bdK-1}
\left\| K_\varrho u \right\|_{L^2}\leq C \left\| u\right\|_{L^2},
\end{equation}
where
\[
K_\varrho= \left(\varrho\,m_0\right)^{1/2} K \left(\varrho\,m_0\right)^{-1/2}.
\]
We discuss it for different $\ga$'s:

{\noindent \bf Case 1: $\ga\in\left[ -1,1 \right]$.} $\varrho$ is independent of $\xi$, so it commutes with $K$, one has
\[K_\varrho=m_0^{1/2}\widetilde{K}m_0^{-1/2}+\varpi\chi_R(\xi),\]
where
$$
\widetilde{K}g=\int_{\R^{3}}\mu^{-1/2}(\xi)\mu^{-1/2}(\xi_{*})(\nabla_{\xi},\nabla_{\xi_{*}}\cdot Z(\xi,\xi_{*}))g(\xi_{*})d\xi_{*}\,,
$$
and
$$
Z(\xi,\xi_{*})=\mu(\xi)\mu(\xi_{*})\Phi(\xi-\xi_{*})\,.
$$
Since $\chi_R(\xi)$ is a smooth cut-off function,  $\varpi\chi_R$ is a bounded operator. On the other hand, due to the function $\mu(\xi)$ exponentially decays in both $\xi$, applying Young's inequality  to the kernel function of integral operator, \eqref{eq:bdK-1} follows.

{\noindent \bf Case 2: $\ga\in \left[-2,-1\right)$.} $\varrho=e^{\alpha \vartheta(0,x,\xi)}$.
\[
\begin{aligned}
K_\varrho g & = e^{\alpha \vartheta(0,x,\xi)/2}\int_{\R^{3}} m_0^{1/2}(\xi) \tilde{k}(\xi,\xi_*)m_0^{-1/2}(\xi_*) e^{-\alpha \vartheta(0,x,\xi_*)/2} g(t,x,\xi_*) d\xi_*+ \varpi\chi_R(\xi )g(\xi)\,.
\end{aligned}
\]
We find the kernel function of $m_0^{1/2}K m_0^{-1/2}-K_\varrho$ is
\[
\tilde{k}(\xi,\xi_{*})m_0(\xi)^{1/2}m_0(\xi_{*})^{-1/2}\left(1-e^{\alpha\big(\vartheta(0,x,\xi)-\vartheta(0,x,\xi_{*} ) \big) }\right)\,.
\]
By similar argument as in Proposition \ref{weig_2} (see \eqref{K-Kw} and the paragraph above), we have
$$
\left\|(m_0^{1/2}K m_0^{-1/2}-K_\varrho)g\right\|_{L^{2}}\lesssim \alpha \left\|g\right\|_{L^{2}}\,,
$$
this together with  {\bf Case 1} imply the $L^{2}$ estimate.
%
\end{proof}

Now, it is ready to prove the regularization effect of the original linearized Landau equation in weighted space. Our strategy is to design a Picard-type iteration, which treats $Kf$ as a source term. Let $f$ be the solution to equation \eqref{prob}. The zeroth order approximation of the linearized Landau equation is
\begin{equation}\label{bot.3.b}
\left\{\begin{array}{l} \pa_{t}h^{(0)}=\mathcal{L}h^{(0)}\,
\\[4mm]
h^{(0)}(0,x, \xi)=f_{0}(x, \xi)\,.
\end{array}
\right.\end{equation}
Thus the difference $f-h^{(0)}$  satisfies
\begin{equation*}
\left\{\begin{array}{l} \pa_{t}(f-h^{(0)})=\mathcal{L}(f-h^{(0)})+K(f-h^{(0)})+Kh^{(0)}\,,
\\[4mm]
(f-h^{(0)})(0,x, \xi)=0\,.
\end{array}
\right.\end{equation*}
This motivates us to define the first order approximation $h^{(1)}$ by
\begin{equation}\label{bot.3.c}
\left\{\begin{array}{l} \pa_{t}h^{(1)}=\mathcal{L}h^{(1)}+Kh^{(0)}\,,
\\[4mm]
h^{(1)}(0,x, \xi)=0\,.
\end{array}
\right.\end{equation}
In general, we can define the $j^{\rm th}$ order approximation $h^{(j)}$, $j\geq 1$, as
\begin{equation}\label{bot.3.d}
\left\{\begin{array}{l} \pa_{t}h^{(j)}=\mathcal{L}h^{(j)}+Kh^{(j-1)}\,,
\\[4mm]
h^{(j)}(0,x, \xi)=0\,.
\end{array}
\right.\end{equation}

The singular wave part and the remainder part can be defined as follows:
$$
W^{(3)}=\sum_{j=0}^{3}h^{(j)}\,,\quad \mathcal{R}^{(3)}=u-W^{(3)}\,.
$$
Note that $\mathcal{R}^{(3)}$ solves the equation
\begin{equation}
\left\{\begin{array}{l} \pa_{t}\mathcal{R}^{(3)}+\xi\cdot\nabla_{x} \mathcal{R}^{(3)}=L\mathcal{R}^{(3)}+Kh^{(3)}\,,
\\[4mm]
\mathcal{R}^{(3)}(0,x, \xi)=0\,.
\end{array}
\right.\end{equation}

We divide our proof into some steps:
\newline\underline{\itshape Step 1: First derivative of $h^{(j)}$, $0\leq j\leq 3$ in small-time.} We want to show that for $0<t\leq1$,
\[
\left\|\nabla_{x}h^{(j)}\right\|_{L^{2}(\varrho\,m_{0})}\leq t^{(-3+2j)/2}\left\| f_{0}\right\|_{L^{2}(\varrho\,m_{1})}\,.
\]
The estimate of $h^{(0)}$ follows immediately from Lemma \ref{regul}. Note that
$$
h^{(1)}=\int_{0}^{t}e^{(t-s) \mathcal{L}} K  e^{s \mathcal{L}} f_{0}ds\,,
$$
hence
$$
\nabla_{x}h^{(1)}=\int_{0}^{t}\frac{(t-s)+s}{t}\nabla_{x}e^{(t-s) \mathcal{L}} K e^{s\mathcal{L}} u_{0}ds\,,
$$
we then have
\begin{align*}
\left\|\nabla_{x}h^{(1)}\right\|_{L^{2}(\varrho\,m_{0})}&\leq\int_{0}^{t}t^{-1}\left[(t-s)^{-1/2}+s^{-1/2}\right]ds\,
\|f_{0}\|_{L^{2}(\varrho\,m_{1})}\\
&\leq t^{-1/2}\|f_{0}\|_{L^{2}(\varrho\,m_{1})}\,.
\end{align*}

Similarly, note that
$$
h^{(2)}=\int_{0}^{t}\int_{0}^{s_{1}}e^{(t-s_{1}) \mathcal{L}} K e^{(s_{1}-s_{2}) \mathcal{L}} K e^{s_{2} \mathcal{L}} f_{0}ds_{2}ds_{1}\,,
$$
hence
$$
\nabla_{x}h^{(2)}=\int_{0}^{t}\int_{0}^{s_{1}}\frac{(s_{1}-s_{2})+s_{2}}{s_{1}}\nabla_{x}e^{(t-s_{1}) \mathcal{L}} K e^{(s_{1}-s_{2}) \mathcal{L}} K e^{s_{2} \mathcal{L}} f_{0}ds_{2}ds_{1}\,,
$$
we then have
\begin{align*}
\left\|\nabla_{x}h^{(2)}\right\|_{L^{2}(\varrho\,m_{0})}&
\leq\int_{0}^{t}\int_{0}^{s_{1}}s_{1}^{-1}\left[(s_{1}-s_{2})^{-1/2}+s_{2}^{-1/2}\right]ds_{2}ds_{1}\,
\|f_{0}\|_{L^{2}(\varrho\,m_{1})}\\
&\leq t^{1/2}\|f_{0}\|_{L^{2}(\varrho\,m_{1})}\,.
\end{align*}
The estimate of $h^{(3)}$ is similar and hence we omit the details.
\newline \underline{\itshape Step 2: Second $x$-derivative estimate of $h^{(j)}$,
$0\leq j\leq3$ in small time.} We want to show that for any $0<t\leq 1$,
\[
\left\|D_x^2 h^{(j)}\right\|_{L^2(\varrho\,m_0)}\leq C_j t^{-3+j}\left\|f_0\right\|_{L^2(\varrho\,m_2)}.
\]
Here we only show the detailed proof for $h^{(0)}$ and $h^{(1)}$ for brevity. The cases for $h^{(2)}$ and $h^{(3)}$ are similar. For any $0<t_{0}\leq1$ and $t_{0}/2<t\leq t_{0}$, we
have
\[
\nabla_{x}h^{(0)}(t)=e^{(t-t_{0}/2)\mathcal{L}}\nabla_{x}h^{(0)}(t_{0}/2)\,,
\]
hence (by Lemma \ref{regul})
\begin{equation}\label{h0-xx}
\left\Vert D_{x}^{2}h^{(0)}(t)\right\Vert _{L^{2}(\varrho\,m_0)}\lesssim
(t-t_{0}/2)^{-3/2}(t_{0}/2)^{-3/2}\Vert f_{0}\Vert_{L^{2}(\varrho\,m_2)}\,.
\end{equation}
If we take $t=t_{0}$, we have
\begin{equation}
\left\Vert D_{x}^{2}h^{(0)}(t_{0})\right\Vert _{L^{2}(\varrho\,m_0)}\lesssim
t_{0}^{-3}\Vert f_{0}\Vert_{L^{2}(\varrho\,m_2)}\,. \label{f0-xx}%
\end{equation}
Since $t_{0}\in(0,1)$ is arbitrary, this completes the estimate of $h^{(0)}$.
For $h^{(1)}$, let $0<t_{1}\leq1$ and $t_{1}/2<t\leq t_{1}$, then
\[
\nabla_{x}h^{(1)}(t)=e^{(t-t_{1}/2)\mathcal{L}}\nabla_{x}h^{(1)}(t_{1}%
/2)+\int_{t_{1}/2}^{t}e^{(t-s)\mathcal{L}}K\nabla_{x}h^{(0)}(s)ds\,,
\]
hence (by Lemma \ref{regul} and \eqref{h0-xx})
\[
\left\Vert D_{x}^{2}h^{(1)}(t)\right\Vert _{L^{2}(\varrho\,m_0)}\lesssim
(t-t_{1}/2)^{-3/2}(t_{1}/2)^{-1/2}\Vert f_{0}\Vert_{L^{2}(\varrho\,m_2)}+\int_{t_{1}%
	/2}^{t}s^{-3}\Vert f_{0}\Vert_{L^{2}(\varrho\,m_2)}ds\,.
\]
Now, take $t=t_{1}$, we get
\[
\left\Vert D_{x}^{2}h^{(1)}(t_{1})\right\Vert _{L^{2}(\varrho\,m_0)}\lesssim
t_{1}^{-2}\Vert f_{0}\Vert_{L^{2}(\varrho\,m_2)}\,.
\]
Since $t_{1}\in\left(  0,1\right)  $ is arbitrary, this completes the estimate
of $h^{(1)}$.

Combining above calculations, if we choose $m_{0}=1$, then we have that for $0<t<1$
$$
\left\|W^{(3)}(t)\right\|_{H^{2}_{x}L^{2}_{\xi}(\varrho) }\leq Ct^{-3}\left\|u_{0}\right\|_{L^{2}(\varrho\,\mathcal{M})}\,.
$$
Now, we only need to estimate the remainder part $\mathcal{R}^{(3)}$.
\newline \underline{\itshape Step 3: Second $x$-derivatives of $\mathcal{R}^{(3)}$ in small time.} Note that the solution of $\mathcal{R}^{(3)}$ is
$$
\mathcal{R}^{(3)}=\int_{0}^{t}\mathbb{G}^{t-s}Kh^{(3)}(s)ds\,,
$$
similar to the weighted energy estimate in Proposition \ref{weig_1} and Proposition \ref{weig_2} (in fact, it is easier), one can show that $\|\mathbb{G}^{t}\|_{H^{2}_{x}L^{2}_{\xi}(\varrho)}$ is uniform bounded for $0<t<1$, hence
$$
\left\|D_{x}^{2}\mathcal{R}^{(3)}\right\|_{L^{2}(\varrho)}\leq\int_{0}^{t}\left\|D_{x}^{2}h^{(3)}\right\|_{L^{2}(\varrho)}(s)ds\leq Ct\|f_{0}\|_{L^{2}(\varrho\,\mathcal{M})}\,.
$$
In conclusion, we have that for $0<t\leq1$
$$
\left\|f(t)\right\|_{H^{2}_{x}L^{2}_{\xi}(\varrho)}\leq Ct^{-3}\left\|f_{0}\right\|_{L^{2}(\varrho\,\mathcal{M})}\,.
$$


\end{document}